\theoremstyle{plain}
\newtheorem{lemma}{Lemma}
\theoremstyle{plain}
\newtheorem{theorem}{Theorem}
\theoremstyle{plain}
\theoremstyle{plain}
\theoremstyle{plain}
\theoremstyle{plain}
\newtheorem{definition}{Definition}[section]
\newtheorem{corollary}{Corollary}
\newtheorem{remark}{Remark}
\theoremstyle{remark}
\theoremstyle{remark}
\newcommand{\floor}[1]{\lfloor #1 \rfloor}
\newcommand{\abs}[1]{|#1|}
\newcommand{\card}[1]{\left \lvert#1 \right \rvert}
\newcommand{\Ou}{\mathcal{O}}
\newcommand{\Prix}{\mathcal{P}}
\newcommand{\bvec}{\mathbf{b}}
\newcommand{\hvec}{\mathbf{h}}
\newcommand{\betavec}{\boldsymbol{\beta}}
\newcommand{\Observed}{A}
\newcommand{\Expectation}[1][]{ 
    \ifthenelse{ \equal{#1}{} }
    {\mathbb{E}}
    {\mathbb{E} \left [ #1 \right ] }
}
\newcommand{\Proba}[1][]{ 
    \ifthenelse{ \equal{#1}{} }
    {\mathbb{P} }
    {\mathbb{P} \left ( #1 \right )}
}
\newcommand{\Probat}[1][]{ 
    \ifthenelse{ \equal{#1}{} }
    {\mathbb{P}^t }
    {\mathbb{P}^t \left ( #1 \right )}
}
\newcommand{\HalfProba}[1][]{ 
    \ifthenelse{ \equal{#1}{} }
    {Q }
    {Q^{#1}}
}
\newcommand{\HalfProbat}[1][]{ 
    \ifthenelse{ \equal{#1}{} }
    {Q^t }
    {Q^t \left ( #1 \right )}
}
\newcommand{\indicator}[1][]{ 
    \ifthenelse{ \equal{#1}{} }
    {\mathds{1} }
    {\mathds{1} \left \{ #1 \right \}}
}
\newcommand{\Bigindicator}[1][]{ 
    \ifthenelse{ \equal{#1}{} }
    {\mathds{1} }
    {\mathds{1} \Big{\{} #1 \Big{\}} }
}
\DeclareRobustCommand{\[}{\left [}
\newcommand\numberthis{\addtocounter{equation}{1}\tag{\theequation}}
\newcommand{\nin}{N^{\mathrm{in}}}
\title{Improved learning rates in multi-unit uniform price auctions}
\author{
  Marius Potfer\thanks{marius.potfer@ensae.fr}$^{\ \ 1,2}$  \\
   \And
  Dorian Baudry$^3$ \\
   \And
  Hugo Richard$^1$ \\
   \And
  Vianney Perchet$^1$ \\
  \And 
  Cheng Wan$^{2}$ \\
}
\begin{document}

\maketitle
\vspace{-1cm}
\begin{center}
\begin{tabular}{l}
   $^1$ Joint team Fairplay, ENSAE, and Criteo AI LAB\\
   $^2$ EDF R\&D \\
   $^3$ Department of Statistics, University of Oxford \\
\end{tabular}
   \vspace{2mm}
\end{center}

\begin{abstract}
Motivated by the strategic participation of electricity producers in electricity day-ahead market, we study the problem of online learning in repeated multi-unit uniform price auctions focusing on the adversarial opposing bid setting. The main contribution of this paper is the introduction of a new modeling of the bid space. Indeed, we prove that a learning algorithm leveraging the structure of this problem achieves a regret of $\tilde{O}(K^{4/3}T^{2/3})$ under bandit feedback, improving over the bound of $\tilde{O}(K^{7/4}T^{3/4})$ previously obtained in the literature. This improved regret rate is tight up to logarithmic terms. %by deducing a lower bound of $\Omega (T^{2/3})$ from the dynamic pricing literature, proving the optimality in $T$ of our algorithm up to log factors. 
Inspired by electricity reserve markets, we further introduce a different feedback model under which all winning bids are revealed. This feedback interpolates between the full-information and bandit scenarios depending on the auctions' results. We prove that, under this feedback, the algorithm that we propose achieves regret $\tilde{O}(K^{5/2}\sqrt{T})$. 
\end{abstract}

\section{Introduction} \label{introduction}

The short-term electricity market, based on a wholesale market, is organized as an auction that determines the quantity each electricity producer needs to produce and the price at which electricity is sold. They participate, in this market by submitting prices for each kilowatt-hour they can produce. They have the opportunity to participate strategically, submitting prices that can deviate from their actual production cost. While several regulatory and practical constraints apply, this market is essentially a multi-unit auction of identical items \citep{willems2022bidding}. These auctions are extensively studied and utilized for resource allocation. Several pricing rules can be applied, the most common being discriminatory pricing, uniform pricing \citep{demand_red_ausubel}, and Vickrey–Clarke–Groves (VCG) auctions \citep{sessa2017exploring}. Although the VCG auction is known for its truthful bidding property, it is seldom implemented due to its complexity. Instead, uniform or discriminatory pricing are often preferred, particularly in treasury auctions \citep{khezr2022review} and their procurement variations in electricity reserve markets \citep{viehmann2021multi}. 

The wholesale electricity market is held every day and, structurally, the electricity producers who participate in the mechanism remain the same for multiple years. This represents an opportunity to study how producers can be strategic in the way they adapt to other's bidding strategies. We therefore focus on the problem of online bidding in a repeated multi-unit auction. This setting allows us to model how an agent participating multiple times to an auction with the same other participants can leverage the information he has obtained during past auctions. This family of settings, of which a review is available in \citep{nedelec2022learning}, was first investigated for learning from the point of view of the auctioneer and was then applied to bidders learning how to bid optimally. 
Online learning in multi-unit auctions with uniform pricing (every object is sold at the same price independently of the winner) is studied in \citep{branzei_learning_2024} while the case of discriminatory pricing is studied in~\cite{galgana2023learning}.
When the bids of all bidders are revealed after each auction (full-information), known regret rates for uniform and discriminatory pricing are of the same order $\mathcal{O}(\sqrt{T})$ \citep {branzei_learning_2024, galgana2023learning} where $T$ is the time horizon. When bidders only observe the number of items they win and the price (bandit feedback) the regret upper bounds given in \cite{galgana2023learning} and \cite{branzei_learning_2024} are of order  $\mathcal{\tilde{O}}(T^{2/3})$ (for discriminatory pricing) and $\mathcal{\tilde{O}}(T^{3/4})$ (for uniform pricing) suggesting that bidding multi-unit auctions with uniform pricing is strictly harder than with discriminatory pricing. Our study shows that this is not the case as we present an algorithm achieving regret $\mathcal{\tilde{O}}(T^{2/3})$ with uniform pricing therefore closing the gap between the two settings. 
%These two pricing mechanisms are commonly compared, and it is interesting to understand if this gap can be narrowed. We design an efficient learning algorithm for the multi-unit uniform price auction and provide several regret bounds depending on the amount of information available at the end of each auction. Two variations of the uniform pricing rule can be considered: the Last Accepted Bid (LAB) and the First Rejected Bid (FRB) \citep{khezr2022review}, the difference between the two is exhibited hereafter. 

\paragraph{Auction rules} A decision-maker (i.e., \emph{the bidder}) repeatedly bids in a uniform pricing $K$-unit auction. The single-shot version of the auction, from the perspective of any participant $i$ whose value of obtaining a $\text{k}^\text{th}$-item is denoted by $v_{i,k }\in [0,1]$, proceeds as follows.
\begin{enumerate}
\item Each participant submits a bid profile $(b_{i,k})_{k\in[K]} \in B$, where $$B=\{(b_k)_{k \in [K]}, \text{ such that } 1 \geq b_{1}\geq b_{2} \geq \hdots \geq b_{K} \geq0 \}.$$ We call $B$ the action space and we denote by $\bvec_{-i}$ the bids from other participants.
\item The price per item $p \left ( \bvec_i,\bvec_{-i} \right )$ is set either as\begin{itemize} [parsep=0cm,itemsep=0.05cm,topsep=0cm]
    \item the $K^{\text{th}}$ highest bid (Last Accepted Bid (LAB) pricing rule),
    \item the $(K+1)^{\text{th}}$ highest bid (First Rejected Bid (FRB) pricing rule).
\end{itemize}
\item The participant receives the items they won and pays $p \left ( \bvec_i,\bvec_{-i} \right )$ for each item. Since items are identical, we call allocation $x_i\in[K]:=\{1,2,...,K\}$ the number of items participant $i$ receives, formally defined as follows:
        \begin{align*}
            x_i \left (   \bvec_i,\bvec_{-i}  \right ) & := \left \{ 
                \begin{matrix} 
                    \card{ \left \{k \in [K] \text{ s.t. } b_{i,k} \geq p\left (  \bvec_i,\bvec_{-i}  \right ) \right \} } \text{for the LAB rule} \\[1em]
                    \card{ \left \{k \in [K] \text{ s.t. } b_{i,k} > p\left (  \bvec_i,\bvec_{-i}  \right ) \right \} } \text{for the FRB rule} 
                \end{matrix} \right. .\numberthis \label{def : allocation}
            \end{align*} 
\end{enumerate}
%    , with varying amounts of information on other participants' bid depending on the setting is transmitted to the bidders.
The focus is to design efficient learning algorithms for the decision-maker, i.e., one specific participant $i$; we can therefore aggregate bids from other participants as the bid of a single \textit{adversary} $\betavec := \bvec_{-i}$ and omit the index $i$ of the learner denoting $\bvec :=\bvec_{i}$.
This setup gives rise to the quasi-linear utility $u(\bvec,\betavec) = \sum_{l=1}^{x(\bvec,\betavec)} \left [ v_l - p(\bvec,\betavec)\right]$. 

In the remainder of this paper, we adopt the LAB pricing rule. The techniques and theoretical proofs can be adapted from one setup to the other with little change. In the absence of specific mention of a pricing rule, our results can be applied to both auction types.

\paragraph{Repeated setting} As mentioned above, this auction is not played just once, but repeated many times (say, each day). We shall then denote a time horizon $T$, and assume a different auction is run at each time step $t \in [T]$ and the objective of the bidder is to maximize their cumulative utility. Quite naturally, the bidder should adjust their bids to the adversary's behavior, learned from the outcomes of the previous iterations. On the other hand, we assume that the bidder does not need to learn their own values, i.e.,  the valuations $(v_k)_{k\in[K]}$ are known to the bidder and do not change over time.

We denote by $\smash{(\bvec^t)_{t \in [T]}}$ and $\smash{(\betavec^t)_{t \in [T]}}$ respectively the sequences of bids of the player and of the adversary, and by $p^t:=p \left ( \bvec^t,\betavec^t \right )$ and $x^t:=x \left ( \bvec^t,\betavec^t \right )$ the price and allocation at time $t$. The utility of the bidder  after the auction $t \in [T]$ is then defined as $u(\bvec^t,\betavec^t) = \sum_{l=1}^{x^t} \left ( v_l - p^t\right)$. As standard in online learning, we evaluate the performance of a learning (bidding) strategy through its \emph{regret}, defined as follows
%In the repeated setting, where $T$ is the time horizon, and the auction takes place repeatedly at times $t \in [T]$ \db{redundant},  we intend to bound the regret under different feedback ( information on the adversary bid revealed after auction $t$).  Let $\smash{(\bvec^t)_{t \in [T]}}$ and $\smash{(\betavec^t)_{t \in [T]}}$ respectively be the sequences of bids of the player and of the adversary. The regret $R_T$ is defined as follows,
\begin{equation}
    R_T= \underset{\bvec \in B}{\sup} \sum_{t=1}^T u(\bvec,\betavec^t)- \Expectation[\sum_{t=1}^T u(\bvec^t,\betavec^t)] \; ,\end{equation}
where the expectation is taken over the randomness of the algorithm generating the bids $\bvec^t$. Maximizing the utility of the bidder is equivalent to minimizing the regret.

\paragraph{Feedback}
The bidders can improve their strategy using the information they receive after each iteration of the auction. The type of \emph{feedback} they receive represents their knowledge about the bids of the adversary. In the literature, two common types of feedback are considered \citep{cesa2023role}, 
\begin{enumerate}
    \item the \emph{bandit} feedback where the bidder's allocation $x_t$ is revealed and the price $p_t$ is only revealed if $x_t >0$, and
\item the \emph{full information} feedback where all the bids emitted by all participants are revealed.
\end{enumerate}

Inspired by the terms of commodity electricity markets in several European countries \textit{including Germany and France}, similarly to \cite{karaca2020no}, we shall introduce and study a third partial feedback specific to multi-unit auctions, 
\begin{itemize}
    \item[3.] the \emph{all-winner feedback}: the allocation, the price, and all the winning bids are revealed to the bidder.  
\end{itemize}

\begin{remark} With a uniform discretization of the bidding space $B$, learning to bid in multi-unit uniform auctions can be recast as a special instance of a combinatorial bandit problem. In the latter, the decision maker sequentially selects multiple arms out of $N$ available, i.e., picking at each stage an action in some admissible subset of  $\{0,1\}^N$. Using off-the-shelf combinatorial bandit algorithms, that do not leverage the relevant structure of repeated auctions, would end up in a highly inefficient and sub-optimal procedure (see Section \ref{SE:action}). Our approach is different; in essence, we reduce the complex combinatorial problem by expressing the utility objective as a polynomial (in $K$ and $T$) sum of simpler functions.
\end{remark}

\paragraph{Related Work}

Multiple-unit auctions of indivisible identical items have been extensively studied in their static settings. In particular, how the pricing rules (discriminatory, uniform, VCG) influence revenue \citep{demand_red_ausubel}, social welfare \citep{de2013inefficiency, birmpas2019tight}, or price stability \citep{anderson2018price}. Their use in the context of electricity markets is common and similar questions are being studied with this specific application in mind \citep{cramton2006uniform, fabra2006designing, son2004short, akbari2020electricity}%(\cite{cramton2006uniform}, \cite{fabra2006designing}, \cite{son2004short}, \cite{akbari2020electricity}).

The repeated setting of auctions, and specifically the use of online learning procedure inspired by Multi Armed Bandits has received lots of attention in the last decade. First studied from the point of view of the auctioneer: \cite{blum2004online} studied maximizing auction revenue, \cite{cesa2014regret} and \cite{kanoria2014dynamic} specifically focused on learning reserve prices. Learning to bid, the bidder's problem, was considered later on, initially in single-item 
 auctions. Second price auctions facing either adversarial or stochastic highest opposing bids were studied in \cite{weed2016online}, and in a contextual, budget-constrained setting by \cite{Flajolet2017RealTimeBW}. \cite{balseiro2019contextual} considered the first price auction with adversarial opposing bids leading to optimal regret rates of $\mathcal{\tilde{O}}(T^{2/3})$ in the known valuation and contextual setting.

First mentioned in \cite{Feng_Learning_to_bid} for unit demand, multiple unit auctions as online learning problems only recently started to be considered as their own topic of interest. Discriminatory pricing and uniform pricing respectively studied in \cite{galgana2023learning} and \cite{branzei_learning_2024} can be learned with $\mathcal{\tilde{O}}(\sqrt{T})$ in the full-information setting. Under bandit feedback, the former achieves $\mathcal{\tilde{O}}(KT^{2/3})$ regret rates in discriminatory pricing and the latter $\mathcal{\tilde{O}}(K^{7/4}T^{3/4})$ regret rates with uniform pricing. Compared to single unit auctions, the combinatorial nature of the action space in $K$-unit auctions makes it a harder learning problem. \cite{branzei_learning_2024} makes use of a cautiously designed equivalent action space represented as a Directed Acyclic Graph (DAG) to address the combinatorial limitation and to design an algorithm guaranteeing the aforementioned regret bounds.

The effects of specific feedback on the ability to achieve lower regret rates have also raised some interest. \cite{Feng_Learning_to_bid} studied the effects of "Win Only" feedback in a more general auction setting. More recently, \cite{cesa2023role} focused on feedback transparency. They characterize gaps in the regret rates that can be achieved depending on the amount of feedback received, getting three separate rates $O(\sqrt{T})$, $O(T^{2/3})$ and $\Omega(T)$ depending on the feedback considered. The work of \cite{karaca2020no}, similarly to the all-winner feedback, studied partial feedback, which lie in between bandit and full-information, motivated by electricity market auctions.

\paragraph{Contribution}
We introduce a novel representation of the action space that overcomes the combinatorial complexity introduced by the multiplicity of the bids in $K$-unit auction. Inspired by the properties of the equivalent action space used by \cite{branzei_learning_2024}, we introduce bid-gaps, to further decompose the utility into a sum of independent functions. This decomposition leads to improved regret rates of $\mathcal{\tilde{O}}(K^{4/3}T^{2/3})$ under bandit feedback, compared to the known upper bound of $\mathcal{\tilde{O}}(K^{7/4}T^{3/4})$. These improved bounds match, in terms of $T$, the rates $\mathcal{\tilde{O}}(KT^{2/3})$ achievable in discriminatory pricing. We notice a reduction to simpler auctions which bear an $\Omega(T^{2/3})$ lower bound on the regret, answering the open question of the optimal rates dependency in $T$ in the bandit setting. Motivated by the terms of bid revelation in electricity reserve markets in several European countries \textit{including Germany and France}, a novel feedback structure is considered, which lies in between bandit feedback and full information. This feedback, which we call all-winner, reveals all the winning bids of the action. We propose an algorithm that achieves a $\mathcal{\tilde{O}}(K^{5/2} \sqrt{T})$ regret, almost matching the regret rates under full information up to a factor $K$, while the lower bound of $\Omega(K\sqrt{T})$ proved by \cite{branzei_learning_2024} for the full-information feedback, naturally extend to this setting. We summarize our results in Table \ref{table : summary} below.
\begin{table}[h!]
  \centering
  \begin{tabular}{llll}
    \toprule
    Feedback      & Literature     & This work & Lower bound\\
    \midrule
    Full information & $\mathcal{\tilde{O}}(K^{3/2}\sqrt{T})$  & $\mathcal{\tilde{O}}(K^{3/2}\sqrt{T})$ &   $\Omega(K\sqrt{T})$\\
    All winner    &    &  $\mathcal{\tilde{O}}(K^{5/2}\sqrt{T})$ & $\Omega(K\sqrt{T})$ \\
    Bandit    & $\mathcal{\tilde{O}}(K^{7/4}T^{3/4})$   & $\mathcal{\tilde{O}}(K^{4/3}T^{2/3})$ & $\Omega(T^{2/3})$$^\star$\\
    \bottomrule
  \end{tabular}\vspace{0.3cm}
  \caption{Regret Rates in multi-unit uniform price auction. $^\star$ holds in the LAB pricing rule setting}
    \label{table : summary}
\end{table}

\section{Action space} \label{SE:action}

We first motivate the new cautiously designed action space, and provide intuitions on how it is constructed and its main properties. We then formalize its definition.

\paragraph{Motivation for an alternative representation}
Usual techniques such as uniform discretization of the action space as in (\cite{Feng_Learning_to_bid}) might lead to consider the subset of non-increasing sequences on this discretization, denoted by $B_\epsilon \subset \smash{\left \{ 0, \epsilon, 2\epsilon, ...,( \floor{\frac{1}{\epsilon}} - 1 ) \epsilon,\floor{\frac{1}{\epsilon}} \epsilon\right \}^K}$. Without loss of generality, we shall assume in the following that $1/\epsilon$ is an integer. The main downside of this representation is that the size of $B_\epsilon$ is exponential and thus, without any further properties of the problem leveraged, this would lead to arbitrarily bad regret rates $\tilde{\mathcal{O}}(T^\frac{K+1}{K+2})$ in bandit setting. Even though we can restrict the available action to \emph{reasonable ones} (ie undominated strategies \ref{lemma : dominated strategies}) this isn't enough to achieve improved rates in general.

\cite{branzei_learning_2024} proposed a Directed Acyclic Graph (DAG) equivalent of the action space $B_\epsilon$  to overcome this combinatorial limitation. They use the decomposition of the utility into a sum of independent functions, depending only on pairs of consecutive bids (the edges in their graphs), to reduce the combinatorial complexity to only 2 (instead of $K$), and thus achieved an $\tilde{O}(K^{7/4}T^{3/4})$ regret bound under bandit feedback. 
Motivated by the breakthrough enabled by such a representation of the action space, we consider a new equivalent action space $H_\epsilon$, introduced in \autoref{eq: def_h_epsilon}. This new action space allows to leverage more precisely the regularity of the utility with respect to the bidder's choice of bids, which in turn leads to improved regret bounds under bandit feedback, presented in Theorem~\ref{theorem : Regret exp3}.

\subsection{Action space tailored to the outcomes }

We now provide intuitions on the utility regularity that will be leveraged. We start by observing that, for a given auction, the price is either set by one of the bidder's bid or by a bid from the adversary. 

Assume that the bidder bids $(b_1,...,b_K) \in B_\epsilon$, and that the price is $b_k$ for some $k\in[K]$. Then, we claim that many bid profiles would have led to the same outcome.
%Given the bidder's bid $(b_1,...,b_K) \in B_\epsilon$, two situations can arise: either the price is set by one of the own bids $b_k$ of the bidder, or it is set by a bid of the adversary. We now claim that many bid profiles would have actually led to the same outcome.
Indeed, any bid profile (from the bidder) with the same $k^\text{th}$ bid $b_k$, 
%that is equal to some $j_k \varepsilon$ for some $j_k \in [\varepsilon^{-1}]$ because of the discretization, 
leads to the same outcome. In the alternative case, where the adversary sets the price, if the bidder wins $k$ items (i.e., the $K-k$ adversary's bid $\beta_{K-k}^t$ sets the price), then any bid profile satisfying \textit{$b_k \geq \beta_{K-k}^t \geq b_{k+1}$}  leads to the same outcome.

Notice that in the two aforementioned cases of regularities of the utility, all bids $\bvec \in B_\epsilon$ which would lead to the same outcome share one of the following properties: for a specific $k$ and $j$,\begin{itemize}[parsep=0cm,itemsep=0.0cm,topsep=0cm]
\item in the first case :  $ b_k = j \epsilon $,
\item in the second case : $ b_k \geq (j+1) \epsilon \geq \beta_{K-k}^t \geq j \epsilon \geq b_{k+1} $. \end{itemize}

For simplicity, we shall assume that the bids of the decision-maker belong to the $\epsilon$-discretization, i.e.,  $(b_1,...,b_K) \in B_\epsilon$ while the bids of the adversary do not belong to it, in order to avoid ties\footnote{This assumption comes without loss of generality, since e.g. adding uniform noise with extremely small variance to the bidder's bid prevent ties a.s., see \autoref{lemma : avoiding ties regret} in \autoref{appendix : avoiding ties}}. We denote this set $B_{\setminus \epsilon}$, the set of non-increasing sequences of $[0,1]^K$ without values in $\left [\frac{1}{\epsilon}\right ]$.

We, introduce an alternative description of the bidding space $B_\epsilon$ whose structure closely matches the aforementioned regularities' in order to improve the bidder's strategy. It leverages new binary variables indicating which of the aforementioned properties a bid $\bvec \in B_\epsilon$ has, they are defined as follows: for any $k \in [K]$ and $j \in \left [  \frac{1}{\epsilon}  \right  ]$,  \begin{align}
    h_{k,j}(\bvec) &= \indicator[ b_k=j \epsilon ] \label{eq: binary_variables _ interger}\\
    h_{k+\frac{1}{2},j} (\bvec) & = \indicator[ b_k \geq (j+1) \epsilon >  j \epsilon \geq b_{k+1} ].
\end{align}

Let $\mathcal{K}=\left \{1,\frac{3}{2},2,...,K-1,\frac{2K-1}{2K},K \right \}$ and $\mathcal{J}_\epsilon=\left [  \frac{1}{\epsilon}  \right  ]$. For any $\bvec \in B_\epsilon$, we define the pseudo-bid $\hvec_\bvec$ to be the list of these binary variable $h_{k,j}$ with $k,j \in \mathcal{K} \times \mathcal{J}$, such that $h_{k,j}(\bvec) = 1$, ordered in lexicographic order, increasingly in $k$ and decreasingly in $j$.  We naturally define $H_\epsilon$ the pseudo-bid space generated by the bid space $B_\epsilon$  :\begin{equation} 
       H_\epsilon =\left \{\hvec_\bvec \middle | \bvec \in B_\epsilon \right \} \label{eq: def_h_epsilon} \end{equation}

%makes it easier to incorporate the above observation (namely: for any possible outcome, there are actions that would have led to the same one) into the key estimators. We now proceed to detail this alternative (but equivalent) description of the action space. 

\begin{lemma} \label{lemma : bijective map}
    For each pseudo-bid $\hvec \in H_\epsilon$, there exists a unique $\bvec \in B_\epsilon$ such that $\hvec=\hvec_\bvec$. This therefore defines a bijective mapping between $H_\epsilon$ and $B_\epsilon$. 
\end{lemma}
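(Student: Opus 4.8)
The plan is to observe that the map $\phi:\mathbf{b}\mapsto\hvec_\bvec$ is by construction surjective onto $H_\epsilon$ (this is literally the definition \eqref{eq: def_h_epsilon}), so the entire content of the lemma is the \emph{injectivity} of $\phi$; the inverse $\phi^{-1}$ is then the asserted bijection. The key structural fact I would establish is that $\mathbf{b}\in B_\epsilon$ is already fully recorded in the ``integer-indexed'' coordinates $h_{k,\cdot}(\mathbf{b})$ of its pseudo-bid, so that two bids with the same pseudo-bid must coincide coordinate by coordinate.

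First I would unwind what equality of pseudo-bids means: $\hvec_\bvec$ is the lexicographically ordered list of labels $(k,j)\in\mathcal{K}\times\mathcal{J}_\epsilon$ for which the indicator $h_{k,j}(\mathbf{b})$ equals $1$, so, the ordering being fixed, $\hvec_\bvec=\hvec_{\mathbf{b}'}$ is equivalent to $h_{k,j}(\mathbf{b})=h_{k,j}(\mathbf{b}')$ for every such label. Then, for a fixed $k\in[K]$, I would use that $b_k$ is by definition a multiple $j\epsilon$ of $\epsilon$ with $j\in\{0,1,\dots,1/\epsilon\}$, and split on whether $j\ge 1$ or $j=0$. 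If $j\ge 1$, then $j\in\mathcal{J}_\epsilon$ and $h_{k,j}(\mathbf{b})=1$ by \eqref{eq: binary_variables _ interger}, which forces $h_{k,j}(\mathbf{b}')=1$ and hence $b_k'=j\epsilon=b_k$. If $j=0$, i.e. $b_k=0$, then every $h_{k,j'}(\mathbf{b})$ with $j'\in\mathcal{J}_\epsilon$ vanishes (each $j'\epsilon$ is a \emph{positive} multiple of $\epsilon$, hence $\ne b_k$), so the same holds for $\mathbf{b}'$; since $b_k'$ is itself a multiple of $\epsilon$ lying in $[0,1]$, it cannot equal any positive multiple, so $b_k'=0=b_k$. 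Letting $k$ range over $[K]$ gives $\mathbf{b}=\mathbf{b}'$, proving injectivity; and this argument simultaneously exhibits the explicit inverse: $b_k=j\epsilon$ whenever some integer-indexed label $(k,j)$ occurs in $\hvec$, and $b_k=0$ otherwise.

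I do not expect a genuine obstacle: the statement is essentially a bookkeeping lemma. The one point that needs care is the indexing convention $\mathcal{J}_\epsilon=\left[\frac1\epsilon\right]=\{1,\dots,1/\epsilon\}$, which omits the index $0$; this is exactly why the value $b_k=0$ must be treated as a separate case (it corresponds to none of the integer-indexed variables at level $k$ being active), and overlooking it would leave the reconstruction of $\mathbf{b}$ from $\hvec$ ambiguous. I would also note in passing that the half-integer-indexed variables $h_{k+\frac12,j}$ play no role in the bijection — they are determined by the consecutive pair $(b_k,b_{k+1})$ and are included only to expose the regularity of the utility exploited later — so the reconstruction argument really only touches the integer-indexed part of the pseudo-bid.
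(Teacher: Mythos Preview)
Your proposal is correct and follows essentially the same approach as the paper: surjectivity is immediate from the definition of $H_\epsilon$, and injectivity is obtained by reading off each $b_k$ from the integer-indexed indicator $h_{k,j}$ via \eqref{eq: binary_variables _ interger}. You are in fact slightly more careful than the paper in treating the boundary case $b_k=0$ separately (since $0\notin\mathcal{J}_\epsilon$), a detail the paper's proof glosses over.
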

\begin{proof}[Proof]
From the expression of $H_\epsilon$ in \eqref{eq: def_h_epsilon}, it is clear that the mapping $\bvec \mapsto \hvec_\bvec$ , is surjective. \newline
Let $\hvec \in H_\epsilon$, there exists $\bvec=\{b_1,...,b_K\} \in B_\epsilon$ such that $\hvec=\hvec_\bvec$. Let $j_k \in \mathcal{J}_\epsilon$ such that $b_k=j_k \epsilon$, for all $k \in [K]$, we have $h_{k,j_k} \in \hvec$.
    If there exists another bid $\tilde{\bvec}=\{\tilde{b}_1,...,\tilde{b}_K\}  \in B_\epsilon$ such that $\hvec=\hvec_{\tilde{\bvec}}$, for all $k \in [K], h_{k,j_k} (\tilde{\bvec})=1$.Therefore \eqref{eq: binary_variables _ interger} yields $\tilde{b}_k=j_k \epsilon =b_k$ for all $k \in [K]$. This proves unicity and therefore that the mapping is bijective.%the existence of a bijective map. 
\end{proof}

% {\color{magenta} je comprends pas les notations $B(\Prix_\epsilon)$ et $\Prix_\epsilon(k,j)$} we define $\mathbf{h} \in B(\Prix_\epsilon)$ as the sequences pairs $(k,j) \in \left \{ 1, \frac{3}{2}, ..., \frac{2K-1}{2}, K \right \} \times \left  [\left \lfloor \frac{1}{\epsilon} \right \rfloor \right  ]$ such that $h_{k,j}(\bvec)=1$, ordered in increasing order of $k$ and decreasing order of $j$.

% Finally, we define 

% For further intuition, Figure~\ref{figure :  graph of paths nat} and Figure~\ref{figure : graph of pathsnot nat} show two corresponding actions respectively in the natural and newly introduced, possible outcome-based, action space. \ref{figure : graph of pathsnot nat} highlights the structure of the action space $B(\Prix_\epsilon)$, differentiating between bids and bid-gaps by shape. 
The following characterization of the pseudo-bid space directly follows from \autoref{lemma : bijective map}.

\begin{corollary} \label{corollary : action space}
Given a bid $\mathbf{b}=(b_i)_{i \in [K]} \in B_\epsilon$ and the pseudo-bid  $\mathbf{h}_\bvec \in H_\epsilon$, we have the following: for all $k,j \in [K] \times \mathcal{J}_\epsilon $, 
 \begin{align}
     &b_k=j\epsilon \iff h_{k,j} \in \mathbf{h}_\bvec \label{eq : corollary bids}\\
         & b_k \geq (j+1)\epsilon > j \epsilon \geq b_{k+1} \iff h_{k+\frac{1}{2},j} \in \mathbf{h}_\bvec  \label{eq : corollary pseudo-bids}
 \end{align}
     
\end{corollary}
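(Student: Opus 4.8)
The plan is to derive Corollary~\ref{corollary : action space} as a direct consequence of the bijection established in Lemma~\ref{lemma : bijective map}, together with the definitions \eqref{eq: binary_variables _ interger} of the binary variables. The key observation is that Lemma~\ref{lemma : bijective map} tells us the map $\bvec \mapsto \hvec_\bvec$ is a bijection, so the pseudo-bid $\hvec_\bvec$ faithfully encodes $\bvec$, and membership of a symbol $h_{k,j}$ (resp.\ $h_{k+\frac12,j}$) in the list $\hvec_\bvec$ is by construction equivalent to $h_{k,j}(\bvec)=1$ (resp.\ $h_{k+\frac12,j}(\bvec)=1$).

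First I would recall that, by the definition of $\hvec_\bvec$ as the list of binary variables $h_{k,j}$ with $(k,j)\in\mathcal{K}\times\mathcal{J}_\epsilon$ such that $h_{k,j}(\bvec)=1$, we have for any $k\in[K]$ and $j\in\mathcal{J}_\epsilon$ the tautology $h_{k,j}\in\hvec_\bvec \iff h_{k,j}(\bvec)=1$, and likewise $h_{k+\frac12,j}\in\hvec_\bvec \iff h_{k+\frac12,j}(\bvec)=1$. Then \eqref{eq : corollary bids} follows immediately by substituting the definition \eqref{eq: binary_variables _ interger}, namely $h_{k,j}(\bvec)=\indicator[b_k=j\epsilon]$, so that $h_{k,j}(\bvec)=1 \iff b_k=j\epsilon$. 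Similarly, \eqref{eq : corollary pseudo-bids} follows from the definition $h_{k+\frac12,j}(\bvec)=\indicator[b_k\geq(j+1)\epsilon>j\epsilon\geq b_{k+1}]$, so that $h_{k+\frac12,j}(\bvec)=1 \iff b_k\geq(j+1)\epsilon>j\epsilon\geq b_{k+1}$. Chaining the two equivalences gives both displayed statements.

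Strictly speaking, one should note that this argument only uses the definitions and not the full strength of Lemma~\ref{lemma : bijective map}; the reason the corollary is phrased as a consequence of the lemma is that it is the bijectivity which guarantees these binary variables form a \emph{consistent and complete} description of $\bvec$ — in particular that for each $k$ exactly one $j$ has $h_{k,j}\in\hvec_\bvec$, and the relationship between the integer-indexed and half-integer-indexed variables (consecutive bids) is coherent. I do not anticipate any real obstacle here: the statement is essentially a restatement of the defining indicator functions \eqref{eq: binary_variables _ interger}, and the only thing to be careful about is bookkeeping of the index ranges ($k$ versus $k+\frac12$, and $j\in\mathcal{J}_\epsilon=[1/\epsilon]$, with the edge case $k=K$ where no half-integer variable $h_{K+\frac12,j}$ is defined, consistent with the definition of $\mathcal{K}$). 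The proof is therefore a one-line unfolding of definitions.
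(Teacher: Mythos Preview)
Your proposal is correct and matches the paper's approach: the paper does not give an explicit proof of this corollary, merely stating that it ``directly follows from Lemma~\ref{lemma : bijective map},'' and your unfolding of the defining indicators \eqref{eq: binary_variables _ interger} is precisely the intended one-line argument. Your remark that the equivalences themselves only use the definitions (with bijectivity ensuring consistency of the encoding) is accurate and a fair observation.
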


To provide further intuition,  Figure~\ref{figure :  graph of paths nat} and Figure~\ref{figure : graph of pathsnot nat} show how two corresponding bids might be represented in $B_\epsilon$ and $H_\epsilon$.  The bids \eqref{eq : corollary bids} are represented by circles, while the bid-gaps \eqref{eq : corollary pseudo-bids} are ellipses.

\begin{figure}[h!] 
 \begin{subfigure}[b]{0,4\textwidth} 
        \tikzset{every picture/.style={line width=0.75pt}} %set default line width to 0.75pt        

\begin{tikzpicture}[x=0.75pt,y=0.85pt,yscale=-0.55,xscale=0.55]
%uncomment if require: \path (0,1276); %set diagram left start at 0, and has height of 1276

%Straight Lines [id:da9615025423733452] 
\draw    (199,470.4) -- (199,721.4) (205,520.4) -- (193,520.4)(205,570.4) -- (193,570.4)(205,620.4) -- (193,620.4)(205,670.4) -- (193,670.4)(205,720.4) -- (193,720.4) ;
\draw [shift={(199,721.4)}, rotate = 270] [color={rgb, 255:red, 0; green, 0; blue, 0 }  ][line width=0.75]    (0,5.59) -- (0,-5.59)   ;
\draw [shift={(199,470.4)}, rotate = 270] [color={rgb, 255:red, 0; green, 0; blue, 0 }  ][line width=0.75]    (0,5.59) -- (0,-5.59)   ;
%Straight Lines [id:da16806435951381138] 
\draw    (257,471) -- (257,722) (263,521) -- (251,521)(263,571) -- (251,571)(263,621) -- (251,621)(263,671) -- (251,671)(263,721) -- (251,721) ;
\draw [shift={(257,722)}, rotate = 270] [color={rgb, 255:red, 0; green, 0; blue, 0 }  ][line width=0.75]    (0,5.59) -- (0,-5.59)   ;
\draw [shift={(257,471)}, rotate = 270] [color={rgb, 255:red, 0; green, 0; blue, 0 }  ][line width=0.75]    (0,5.59) -- (0,-5.59)   ;
%Straight Lines [id:da10924513832469795] 
\draw    (318,471) -- (318,722) (324,521) -- (312,521)(324,571) -- (312,571)(324,621) -- (312,621)(324,671) -- (312,671)(324,721) -- (312,721) ;
\draw [shift={(318,722)}, rotate = 270] [color={rgb, 255:red, 0; green, 0; blue, 0 }  ][line width=0.75]    (0,5.59) -- (0,-5.59)   ;
\draw [shift={(318,471)}, rotate = 270] [color={rgb, 255:red, 0; green, 0; blue, 0 }  ][line width=0.75]    (0,5.59) -- (0,-5.59)   ;
%Straight Lines [id:da9281465378742142] 
\draw    (378,471) -- (378,722) (384,521) -- (372,521)(384,571) -- (372,571)(384,621) -- (372,621)(384,671) -- (372,671)(384,721) -- (372,721) ;
\draw [shift={(378,722)}, rotate = 270] [color={rgb, 255:red, 0; green, 0; blue, 0 }  ][line width=0.75]    (0,5.59) -- (0,-5.59)   ;
\draw [shift={(378,471)}, rotate = 270] [color={rgb, 255:red, 0; green, 0; blue, 0 }  ][line width=0.75]    (0,5.59) -- (0,-5.59)   ;
%Shape: Ellipse [id:dp10889662807231337] 
\draw  [color={rgb, 255:red, 255; green, 255; blue, 255 }  ,draw opacity=0 ][fill={rgb, 255:red, 74; green, 144; blue, 226 }  ,fill opacity=0.53 ][line width=0.75]  (187.52,520.92) .. controls (187.52,514.23) and (192.86,508.8) .. (199.44,508.8) .. controls (206.02,508.8) and (211.35,514.23) .. (211.35,520.92) .. controls (211.35,527.61) and (206.02,533.04) .. (199.44,533.04) .. controls (192.86,533.04) and (187.52,527.61) .. (187.52,520.92) -- cycle ;
%Shape: Ellipse [id:dp5290132964260612] 
\draw  [color={rgb, 255:red, 255; green, 255; blue, 255 }  ,draw opacity=0 ][fill={rgb, 255:red, 74; green, 144; blue, 226 }  ,fill opacity=0.53 ][line width=0.75]  (245.52,620.92) .. controls (245.52,614.23) and (250.86,608.8) .. (257.44,608.8) .. controls (264.02,608.8) and (269.35,614.23) .. (269.35,620.92) .. controls (269.35,627.61) and (264.02,633.04) .. (257.44,633.04) .. controls (250.86,633.04) and (245.52,627.61) .. (245.52,620.92) -- cycle ;
%Shape: Ellipse [id:dp6194375506450831] 
\draw  [color={rgb, 255:red, 255; green, 255; blue, 255 }  ,draw opacity=0 ][fill={rgb, 255:red, 74; green, 144; blue, 226 }  ,fill opacity=0.53 ][line width=0.75]  (306.52,620.92) .. controls (306.52,614.23) and (311.86,608.8) .. (318.44,608.8) .. controls (325.02,608.8) and (330.35,614.23) .. (330.35,620.92) .. controls (330.35,627.61) and (325.02,633.04) .. (318.44,633.04) .. controls (311.86,633.04) and (306.52,627.61) .. (306.52,620.92) -- cycle ;
%Shape: Ellipse [id:dp14021850621265797] 
\draw  [color={rgb, 255:red, 255; green, 255; blue, 255 }  ,draw opacity=0 ][fill={rgb, 255:red, 74; green, 144; blue, 226 }  ,fill opacity=0.53 ][line width=0.75]  (366.52,670.92) .. controls (366.52,664.23) and (371.86,658.8) .. (378.44,658.8) .. controls (385.02,658.8) and (390.35,664.23) .. (390.35,670.92) .. controls (390.35,677.61) and (385.02,683.04) .. (378.44,683.04) .. controls (371.86,683.04) and (366.52,677.61) .. (366.52,670.92) -- cycle ;
%Straight Lines [id:da4948103875238792] 
\draw [color={rgb, 255:red, 74; green, 144; blue, 226 }  ,draw opacity=0.5 ][fill={rgb, 255:red, 255; green, 255; blue, 255 }  ,fill opacity=1 ][line width=3.75]    (205.4,529.9) -- (248.01,606.78) ;
\draw [shift={(251.4,612.9)}, rotate = 241] [fill={rgb, 255:red, 74; green, 144; blue, 226 }  ,fill opacity=0.5 ][line width=0.08]  [draw opacity=0] (14.38,-6.91) -- (0,0) -- (14.38,6.91) -- cycle    ;
%Straight Lines [id:da3500955376574324] 
\draw [color={rgb, 255:red, 74; green, 144; blue, 226 }  ,draw opacity=0.5 ][fill={rgb, 255:red, 255; green, 255; blue, 255 }  ,fill opacity=1 ][line width=3.75]    (269.35,620.92) -- (299.52,620.92) ;
\draw [shift={(306.52,620.92)}, rotate = 180] [fill={rgb, 255:red, 74; green, 144; blue, 226 }  ,fill opacity=0.5 ][line width=0.08]  [draw opacity=0] (14.38,-6.91) -- (0,0) -- (14.38,6.91) -- cycle    ;
%Straight Lines [id:da8132217301322575] 
\draw [color={rgb, 255:red, 74; green, 144; blue, 226 }  ,draw opacity=0.5 ][fill={rgb, 255:red, 255; green, 255; blue, 255 }  ,fill opacity=1 ][line width=3.75]    (327.4,627.9) -- (364.15,660.27) ;
\draw [shift={(369.4,664.9)}, rotate = 221.38] [fill={rgb, 255:red, 74; green, 144; blue, 226 }  ,fill opacity=0.5 ][line width=0.08]  [draw opacity=0] (14.38,-6.91) -- (0,0) -- (14.38,6.91) -- cycle    ;

% Text Node
\draw (393.82,451.17) node [anchor=north west][inner sep=0.75pt]  [font=\small,rotate=-359.2]  {$\left\lfloor \frac{1}{\epsilon }\right\rfloor \epsilon $};
% Text Node
\draw (402.82,712.17) node [anchor=north west][inner sep=0.75pt]  [font=\small,rotate=-359.2]  {$\epsilon $};
% Text Node
\draw (395.82,663.17) node [anchor=north west][inner sep=0.75pt]  [font=\small,rotate=-359.2]  {$2\epsilon $};
% Text Node
\draw (191,739.8) node [anchor=north west][inner sep=0.75pt]    {$b_{1}$};
% Text Node
\draw (249,740.8) node [anchor=north west][inner sep=0.75pt]    {$b_{2}$};
% Text Node
\draw (371,739.8) node [anchor=north west][inner sep=0.75pt]    {$b_{K}$};

\end{tikzpicture}
        \caption{Bids in $B_\epsilon$}
        \label{figure : graph of paths nat}
    \end{subfigure}
 \begin{subfigure}[b]{0,5\textwidth} 
        \input{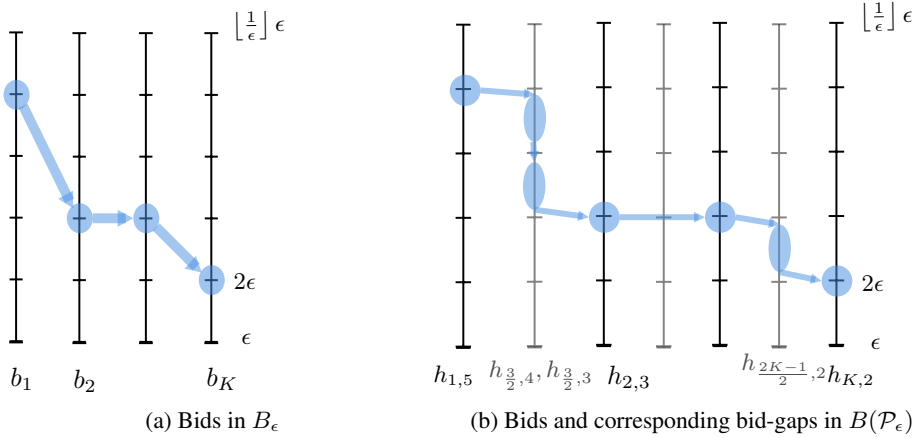}
        \caption{Bids and corresponding bid-gaps in $B(\Prix_\epsilon)$}
        \label{figure : graph of pathsnot nat}
    \end{subfigure}
\caption{Graph representation of action spaces $B_\epsilon$ (\cite{branzei_learning_2024}) and $B(\Prix_\epsilon)$ (this paper)}
\label{figure:   both actions}
\end{figure}

\subsection{Utility decomposition}

Leveraging the new action space, we define the utility, price, and allocation function on $H_\epsilon$ resulting from the bijective map with $B_\epsilon$.
Let $\hvec \in H_\epsilon$ and $\bvec \in B_\epsilon$ the unique element of $B_\epsilon$ such that $\hvec = \hvec_\bvec$. For all $\betavec \in B_{\setminus \epsilon}$, we define the utility as $u_H(\hvec_\bvec,\betavec) :=  u(\bvec,\betavec)$, the price $x_H(\hvec_\bvec,\betavec) :=  x(\bvec,\betavec)$ and the allocation $p_H(\hvec_\bvec,\betavec) :=  p(\bvec,\betavec)$

The following additional set notation, which matches a binary variable $h_{k,j}$ to its corresponding price range, allows for unified descriptions : 
\begin{equation}\Prix_\epsilon(h_{k,j}) : = \begin{matrix*}
    \{ j \epsilon \} & \text{if $k$ is an integer} \\
     \left  (j \epsilon, (j+1) \epsilon \right ) & \text{if $k$ is half integer\ ,}
\end{matrix*}\end{equation}

We now explicitly show how the pseudo-bid space is \emph{well suited} to capture the regularity of the outcomes of the auction (and therefore of the utility) mentioned above. To be more precise, the following \autoref{lemma : regularity over outcomes} states that within a pseudo-bid profile $\hvec$, a pseudo-bid $h_{k,j}$ can be \emph{credited} for the outcome: any other pseudo-bid profile containing this pseudo bid would have lead to the same outcome.

\begin{lemma} \label{lemma : regularity over outcomes}
    Let $\betavec \in B_{\setminus \epsilon}$ and $(k,j) \in \mathcal{K}\times \mathcal{J}_\epsilon$. There exists $C \in \{0,1\}$, such that for all $\hvec \in H_\epsilon$ with $h_{k,j} \in \hvec$, \begin{equation}
        \indicator \{ p_H(\hvec,\betavec) \in \Prix_\epsilon(h_{k,j}) \} \cap \{ x_H(\hvec,\betavec) = \floor{k} \} = C 
    \end{equation}
    and if $C=1$, $p_H(\hvec,\betavec)$ is also constant on $\left \{ \hvec \in H_\epsilon : h_{k,j} \in \hvec \right \}$
\end{lemma}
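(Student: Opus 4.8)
The plan is to analyze separately the two cases that make up the set $\mathcal{K}$: $k$ an integer (the pseudo-bid $h_{k,j}$ encodes $b_k = j\epsilon$) and $k$ a half-integer (the pseudo-bid $h_{k+\frac12,j}$, writing $k$ for the integer part, encodes the bid-gap $b_k \geq (j+1)\epsilon > j\epsilon \geq b_{k+1}$). In each case I will show that, once $\betavec \in B_{\setminus\epsilon}$ is fixed, the pair of events $\{p_H(\hvec,\betavec)\in\Prix_\epsilon(h_{k,j})\}$ and $\{x_H(\hvec,\betavec)=\floor{k}\}$ depends on $\hvec$ only through the coordinate $h_{k,j}$ being present — i.e.\ it is the same $C\in\{0,1\}$ for every $\hvec\in H_\epsilon$ containing $h_{k,j}$ — and that when $C=1$ the price itself is determined. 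I would pass through $B_\epsilon$ via \autoref{lemma : bijective map} and \autoref{corollary : action space}, so that statements about $\hvec$ become statements about the unique $\bvec\in B_\epsilon$ with $\hvec=\hvec_\bvec$, and work with the LAB pricing rule as fixed in the paper.

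First, the integer case. Take any $\bvec\in B_\epsilon$ with $b_k=j\epsilon$ (equivalently $h_{k,j}\in\hvec_\bvec$). The price $p(\bvec,\betavec)$ is the $K$-th largest among the multiset $\{b_1,\dots,b_K\}\cup\{\beta_1,\dots,\beta_K\}$. I claim that the joint event ``$p(\bvec,\betavec)=j\epsilon$ and $x(\bvec,\betavec)=k$'' holds iff the $K$-th largest of $\{\beta_1,\dots,\beta_K\}$ lies strictly below $j\epsilon$ while the $(K-k)$-th largest $\beta$ (with the convention that this is $+\infty$ when $k=K$) lies strictly above $j\epsilon$; this is a condition on $\betavec$ alone, hence a fixed $C$. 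Indeed, since $\betavec$ has no coordinate equal to a multiple of $\epsilon$, the count of $\beta$'s at least $j\epsilon$ is unambiguous, and when exactly $K-k$ of them exceed $j\epsilon$, appending the bidder's bids — of which exactly $k$ equal $j\epsilon$ and the rest are either $>j\epsilon$ (the top $k-1$... more precisely $b_1\ge\cdots\ge b_{k-1}\ge j\epsilon=b_k$) or $<j\epsilon$ — makes $j\epsilon$ the $K$-th largest overall, giving price $j\epsilon$ and allocation exactly $k$. Conversely if the $\beta$-counts are not as above, either the price is not $j\epsilon$ or the allocation is not $k$. Crucially this argument never used any coordinate of $\bvec$ other than $b_k=j\epsilon$ together with the defining inequalities $b_{k-1}\ge j\epsilon\ge b_{k+1}$, which are automatic in $B_\epsilon$; and when $C=1$ the price is $j\epsilon$, a constant.

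Next, the half-integer case $h_{k+\frac12,j}$, i.e.\ $b_k\ge (j+1)\epsilon > j\epsilon \ge b_{k+1}$. Here $\floor{k+\frac12}=k$. Now if the price lies in the open interval $\Prix_\epsilon=(j\epsilon,(j+1)\epsilon)$, it cannot be set by one of the bidder's bids (those are multiples of $\epsilon$), so it must equal some $\beta$-coordinate; and the allocation being $k$ forces that coordinate to be the $(K-k)$-th largest $\beta$. So again the joint event reduces to: the $(K-k)$-th largest of $\{\beta_1,\dots,\beta_K\}$ lies in $(j\epsilon,(j+1)\epsilon)$, and — to pin the allocation to exactly $k$ and nothing larger — the next bidder bid $b_{k+1}\le j\epsilon$ is below this $\beta$ while $b_k\ge(j+1)\epsilon$ is above it; both inequalities are guaranteed by $h_{k+\frac12,j}\in\hvec$. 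This is a condition on $\betavec$ only, so $C$ is constant over all qualifying $\hvec$, and when $C=1$ the price equals that $(K-k)$-th largest $\beta$, which does not depend on $\hvec$ at all — giving the final ``price is constant'' claim. A small bookkeeping point I would handle is the constraint $K\in\mathcal{K}$ carrying the label of an integer (so $h_{K,j}$ is the integer type with $\floor{K}=K$) and the top half-integer $\frac{2K-1}{2}$ corresponding to the gap between $b_{K-1}$ and $b_K$.

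The main obstacle is not any single computation but getting the case analysis airtight with respect to ties and boundary indices: one must be careful that the $\epsilon$-discretization of $\bvec$ versus the tie-free $\betavec$ makes ``$K$-th largest'' and ``$(K-k)$-th largest'' well defined and that strict vs.\ weak inequalities in the LAB allocation rule \eqref{def : allocation} land correctly (in particular that a $\beta$ exactly straddling the relevant threshold is impossible precisely because $\betavec\in B_{\setminus\epsilon}$, and that the bidder's own bids at $j\epsilon$ are counted with the weak inequality of the LAB rule). Once the reduction ``joint event $\iff$ a statement about order statistics of $\betavec$ relative to the fixed thresholds $j\epsilon$ or $(j\epsilon,(j+1)\epsilon)$, made valid by the defining inequalities of $h_{k,j}$'' is written cleanly, constancy of $C$ and of the price are immediate, since that statement mentions no coordinate of $\hvec$.
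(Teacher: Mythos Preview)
Your approach is essentially the paper's own: split on integer versus half-integer $k$, translate via \autoref{corollary : action space} to a constraint on $\bvec$, and rewrite the joint event as a condition on the order statistics of $\betavec$ relative to the thresholds $j\epsilon$ (integer case) or the interval $(j\epsilon,(j+1)\epsilon)$ (half-integer case). The paper does exactly this, arriving at $\indicator\{\beta_{K-k} > j\epsilon > \beta_{K-k+1}\}$ in the integer case and $\indicator\{j\epsilon < \beta_{K-\lfloor k\rfloor} < (j+1)\epsilon\}$ in the half-integer case, and then reads off the constant price.

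One slip to fix in your integer case: the iff condition you state --- ``the $K$-th largest of $\{\beta_1,\dots,\beta_K\}$ lies strictly below $j\epsilon$ while the $(K-k)$-th largest $\beta$ lies strictly above $j\epsilon$'' --- is not what you want. Requiring only $\beta_K<j\epsilon$ allows, e.g., $\beta_{K-k+1}>j\epsilon$ as well, in which case the price would be set above $j\epsilon$ and the allocation would be at most $k-1$. The correct condition is $\beta_{K-k}>j\epsilon>\beta_{K-k+1}$, i.e.\ exactly $K-k$ of the $\beta$'s exceed $j\epsilon$ --- which is precisely what you write two sentences later when you say ``when exactly $K-k$ of them exceed $j\epsilon$''. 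So the argument is fine; just make the displayed iff match it.
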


\begin{proof}
     Let $\bvec \in B_\epsilon$ such that $h_{k,j} \in \hvec_\bvec$. If $k$ is integer, then \autoref{corollary : action space} yields $b_k=j \epsilon$, hence we get $\indicator \{ p_H(\hvec,\betavec) = j\epsilon \} \cap \{ x_H(\hvec,\betavec) = k \} = \indicator \{ \beta_{K-k} >  j\epsilon = b_k > \beta_{K-k+1} \} $ which only depends on $k,j$ and $\betavec$. If $k$ is half-integer, then \autoref{corollary : action space} yields $b_{k+1} < j \epsilon < (j+1) \epsilon < b_{k}$, hence we get $\indicator \{ p_H(\hvec,\betavec) = \Prix_\epsilon(h_{k,j}) \} \cap \{ x_H(\hvec,\betavec) = k \} = \indicator \{ j\epsilon < \beta_{K-k} <  (j+1) \epsilon  \} $ which also only depends on $k,j$ and $\betavec$.
     It is straightforward to see that when the indicator function takes value one, the price is constant with value $j \epsilon$ for the integer case and $\beta_{K-k}$ in the half integer case.
\end{proof}
 Lemma~\ref{lemma : regularity over outcomes} allows us to exhibit a key property of the utility on the pseudo-bid space: it can be decomposed into a sum of sub-utilities (defined in \autoref{def : subutilities}), each of which only depends on one of the components of $\hvec$.

\begin{lemma}\label{lemma : decomposition}
    Let $\mathbf{h} \in H_\epsilon$ and $\betavec \in B_{\setminus \epsilon}$. The utility of the bidder rewrites as a sum of sub-utilities:
 \begin{equation}
u_H(\hvec,\betavec)=\sum_{h_{k,j} \in \mathbf{h}} w(h_{k,j},\betavec), \; \text{with} \label{eq : decompositon of utility a.s.}\end{equation}
\begin{equation} w(h_{k,j},\betavec)  := \indicator\left \{ \left \{ p_H(\mathbf{h},\betavec)  \in \Prix_\epsilon(h_{k,j}) \right \} \cap \left \{ x_H(\mathbf{h},\betavec)=\floor{k} \right  \} \right \} \sum\limits_{l=1}^{\floor{k}} (v_l-p_H(\hvec,\betavec)) . \label{def : subutilities}  \end{equation}
\end{lemma}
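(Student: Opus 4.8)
The plan is to show that the sum on the right-hand side of \eqref{eq : decompositon of utility a.s.} collapses to a single term, namely the one indexed by the pseudo-bid $h_{k^*,j^*}$ that ``records'' the price realized in the auction, and that this term equals $u_H(\hvec,\betavec)$. First I would fix $\hvec \in H_\epsilon$, let $\bvec \in B_\epsilon$ be its unique preimage under the bijection of \autoref{lemma : bijective map}, and recall that by definition $u_H(\hvec,\betavec) = u(\bvec,\betavec) = \sum_{l=1}^{x(\bvec,\betavec)}(v_l - p(\bvec,\betavec))$, with $p_H(\hvec,\betavec) = p(\bvec,\betavec)$ and $x_H(\hvec,\betavec) = x(\bvec,\betavec)$. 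So it suffices to exhibit exactly one pair $(k,j)$ with $h_{k,j} \in \hvec$ for which the indicator in \eqref{def : subutilities} equals $1$, and to check that for that pair $\floor{k} = x_H(\hvec,\betavec)$ and $p_H(\hvec,\betavec) \in \Prix_\epsilon(h_{k,j})$, so that the summand reduces to $\sum_{l=1}^{x_H(\hvec,\betavec)}(v_l - p_H(\hvec,\betavec)) = u_H(\hvec,\betavec)$; all other summands then vanish because their indicators are $0$.

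Next I would carry out the case analysis on what sets the price, mirroring the discussion preceding \autoref{lemma : regularity over outcomes}. Write $m := x_H(\hvec,\betavec) \in \{0,1,\dots,K\}$ for the realized allocation and $p := p_H(\hvec,\betavec)$ for the realized price under the LAB rule. There are two mutually exclusive possibilities. Either the price is set by one of the bidder's own bids, i.e. $p = b_m$ (when $m \geq 1$ and $b_m \geq \beta_{K-m} \geq \dots$, in the precise sense that $b_m$ is the $K$-th highest bid overall); since $\bvec \in B_\epsilon$ we have $b_m = j\epsilon$ for some $j \in \mathcal{J}_\epsilon$, so by \autoref{corollary : action space} the integer-indexed pseudo-bid $h_{m,j}$ lies in $\hvec$, $\floor{m} = m$, and $p = j\epsilon \in \Prix_\epsilon(h_{m,j}) = \{j\epsilon\}$. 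Or the price is set by an adversary bid, namely $p = \beta_{K-m}$ with $b_{m+1} < \beta_{K-m} < b_m$ (using the no-tie assumption $\betavec \in B_{\setminus\epsilon}$ together with $\bvec \in B_\epsilon$); then $\beta_{K-m}$ lies strictly between two consecutive multiples of $\epsilon$, say $j\epsilon < \beta_{K-m} < (j+1)\epsilon$, and since $b_m \geq (j+1)\epsilon > j\epsilon \geq b_{m+1}$ holds, \autoref{corollary : action space} gives $h_{m+\frac12,j} \in \hvec$; here $\floor{m + \frac12} = m$ and $p = \beta_{K-m} \in (j\epsilon,(j+1)\epsilon) = \Prix_\epsilon(h_{m+\frac12,j})$. (The degenerate sub-case $m = 0$, where $\hvec$ contains $h_{\frac12, j}$ for the appropriate $j$, or where no items are won, is handled the same way, with both sides equal to $0$.) In either case I have produced a pseudo-bid $h_{k^*,j^*} \in \hvec$ whose summand equals $u_H(\hvec,\betavec)$.

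It then remains to argue uniqueness: no other $h_{k,j} \in \hvec$ contributes. For a summand indexed by $h_{k,j}$ to be nonzero we need simultaneously $x_H(\hvec,\betavec) = \floor{k}$, which forces $\floor{k} = m$, so $k \in \{m, m+\frac12\}$; and we need $p_H(\hvec,\betavec) = p \in \Prix_\epsilon(h_{k,j})$. If $k = m$ (integer) then $\hvec$ contains $h_{m,j}$ only for the single $j$ with $b_m = j\epsilon$, and $p \in \{j\epsilon\}$ forces $p = b_m$; if $k = m + \frac12$ (half-integer) then $\hvec$ contains $h_{m+\frac12,j}$ only for the single $j$ with $b_{m+1} < j\epsilon+\epsilon$ and $j\epsilon \leq b_m$ wrapping $\beta_{K-m}$... more simply, the bid-gap pseudo-bid at index $m+\frac12$ present in $\hvec$ is unique, and $p \in (j\epsilon,(j+1)\epsilon)$ forces $p = \beta_{K-m} \notin \epsilon\mathbb{Z}$. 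Since exactly one of $p = b_m$ (a multiple of $\epsilon$) and $p = \beta_{K-m}$ (not a multiple of $\epsilon$, by the no-tie assumption) can hold, at most one of the two candidate indices $k \in \{m, m+\frac12\}$ can have a nonzero summand, and by the previous paragraph exactly one does. This establishes \eqref{eq : decompositon of utility a.s.}.

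The main obstacle I anticipate is the bookkeeping around edge cases and the precise meaning of ``$b_m$ sets the price'' versus ``$\beta_{K-m}$ sets the price'': one must be careful about strict versus non-strict inequalities in the LAB allocation rule \eqref{def : allocation}, about the boundary allocations $m = 0$ and $m = K$ (where one of $\beta_{K-m}$, $b_{m+1}$ or $b_m$ may not exist), and about confirming that the no-tie hypothesis $\betavec \in B_{\setminus\epsilon}$ genuinely rules out $p = b_m = \beta_{K-m}$ occurring at once. Once the dichotomy ``price is a multiple of $\epsilon$ (own bid) xor price is not (adversary bid)'' is pinned down, the rest is a direct application of \autoref{corollary : action space} and \autoref{lemma : regularity over outcomes}.
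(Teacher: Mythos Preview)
Your proposal is correct and follows the same approach as the paper's (very terse) proof: both argue that the indicator events in \eqref{def : subutilities} are pairwise disjoint, so at most one summand is nonzero, and that summand equals $u_H(\hvec,\betavec)$; you simply flesh out the case analysis and the edge cases that the paper leaves implicit. One small slip to fix: the claim that ``the bid-gap pseudo-bid at index $m+\tfrac12$ present in $\hvec$ is unique'' is false in general (there is one $h_{m+\frac12,j}\in\hvec$ for each integer $j$ with $b_{m+1}\le j\epsilon<(j+1)\epsilon\le b_m$), but this does not break your argument since the price ranges $\Prix_\epsilon(h_{m+\frac12,j})=(j\epsilon,(j+1)\epsilon)$ are pairwise disjoint, which is the real reason only one such term can have an active indicator.
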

\begin{proof}[Proof of Lemma~\ref{lemma : decomposition}]
$x(\hvec, \betavec) = 0$ implies that $u$ as defined in Equation~(\ref{eq : decompositon of utility a.s.}) is zero, as expected. For the case where $x(\hvec, \betavec) > 0$, we use that the indicator functions in Equation~\ref{def : subutilities} correspond to disjoint events. Thus, there exists a unique pair $(k,j)\in \mathcal{K}\times \mathcal{J}_\epsilon$ such that $w(h_{k,j}, \betavec)>0$, furthermore, this sub-utility $w(h_{k,j}, \betavec)$ matches the utility $u(\hvec, \betavec)$. This concludes the proof. 
\end{proof}

While the expression of these indicators in Equation~\eqref{def : subutilities} involves the full action $\mathbf{h}$,
Lemma \ref{lemma : regularity over outcomes} shows that they only depend either on the associated bid $h_{k,j} $.
Furthermore, notice that within a given $\hvec \in H_\epsilon$, the events corresponding to each $h_{k,j} \in \hvec$ are disjoints and therefore only one can be realized. As a result there is at most one $h \in \hvec$ with positive sub-utility, we denote it $h_\star(\hvec,\betavec)$.

%To be fully rigorous, the aforementioned decomposition Lemma \ref{lemma : decomposition} is only valid if the adversary's bid and the learner's bid aren't equal. This can be assumed to be true without loss of generality as one can use a randomized perturbation of the discretization grid to ensure this almost surely, covered in the appendix by Lemma \ref{lemma : decomposition randomized}.

\section{Learning algorithms and guarantees}

We first present two algorithms and estimators of the utility corresponding to the different feedback settings. We then state the regret rate achieved by these algorithms when used with the introduced estimator, depending on the setting considered, as well as a lower bound on the regret in the bandit setting. To simplify notations, we shall further introduce $u_H^t(.) :=u_H(.,\betavec^t)$ and $w^t(.) :=w(.,\betavec^t)$. 

\subsection{Algorithm for online learning in K-unit uniform auction}

The regret minimizing algorithm combines two separate procedures,
\begin{itemize}
    \item An exponential weight update, detailed in Algorithm~\ref{algorithm : exp3 K bis}, which creates and updates the weights for each bid and bid-gap at each time step, akin to an EXP3 type algorithm (but non-normalized) (\cite{cesa2006prediction},\cite{lattimore2020bandit}). 
    \item A sampling procedure, tailored to our action space, described in Algorithm~\ref{alg:weight_pushing},  that re-normalizes the weights into a probability distribution by using a weight-pushing method, and uses an efficient procedure to sample an action. (\cite{goos_path_2002}). 
\end{itemize} 

Because of the combinatorial structure of the problem, using directly the  exponential weight algorithm would be highly inefficient, with a complexity of order $\mathcal{O}(\frac{1}{\epsilon^K})$ to store weights and compute probabilities directly on the action space. The second auxiliary sampling algorithm gets rid off this complexity burden. 

Algorithm~\ref{algorithm : exp3 K bis}'s pseudo-code details the exponential weight algorithm used as a building block of the no-regret procedure. It computes weights for each pseudo-bid based on the corresponding sub-utilities, or their estimated values. These weights are used to sample a pseudo-bid sequence $\hvec \in H_\epsilon$.

\begin{algorithm}
    \caption{Component based exponential weighting}
    \label{algorithm : exp3 K bis}
    \textbf{Input:} time horizon $T$,  parameters $\varepsilon>0$ and $\eta>0$
    
    \textbf{Output:} actions for each time step $\left ( \hvec^1,\hvec^2,\hdots, \hvec^{T-1},\hvec^T \right ) \in (H_\epsilon)^T $.

    \textbf{Initialize:} for $(k,j) \in \mathcal{K} \times \mathcal{J}_\epsilon$, set $ \HalfProba^0 (h_{k,j}) =1 $

    \For{$t = 1,2,\hdots, T$}{
     Sample $\hvec^t$ using the sampling procedure of Algorithm~\ref{alg:weight_pushing}, with input parameters $\HalfProba^{t-1}$\;
    Receive the utility $u^t = u(\hvec^t, \betavec^t)$ and the feedback. Based on this feedback, define
        $v^t=\left\{\begin{array}{ll} w^t = w(., \betavec^t) & \text{in the full-information feedback, see  }  \eqref{def : subutilities}\\  
         \hat{w}^t & \text{with bandit feedback, see } \eqref{def : subutility bandit estimator}\\ \bar{w}^t& \text{with  all-winner  feedback, see } \eqref{def : subutilities all winner estimator }\end{array}\right.$\;
     Update the weights based on the weights $v^t$ :
     
     \For{ $(k,j) \in \mathcal{K} \times \mathcal{J}_\epsilon$}{
            \begin{align} \label{eq : proba update rule}
                    & \HalfProbat (h_{k,j})= \HalfProba^{t-1} (h_{k,j}) \exp\left ( \eta v^t(h_{k,j}) \right ) 
            \end{align}}
}
\end{algorithm}

Algorithm~\ref{alg:weight_pushing} is a sampling procedure and uses a weight-pushing technique (\cite{goos_path_2002}) to efficiently sample pseudo-bids and compute weights and probabilities on $H_\epsilon$. It is exploiting the lexicographical ordering of the pseudo bid (increasingly in $k$ and decreasingly in $j$) which creates a graph-like structure, as illustrated in Figure~\ref{figure : graph of pathsnot nat}. It is indeed possible to sample an element of $H_\epsilon$ by repeatedly sampling the next binary variable conditionally on the previous one. 
To explicit this graph-like structure, we define $s(.)$ the successor function, which given a pseudo-bid $h_{k,j}$ provides the set of possible next element of an action.
For $k \in [K-1]$, $j \in \mathcal{J}_\epsilon \setminus \{ 0 \}$, 
\begin{equation}
    s(h_{k+1/2,j}) := \left \{ 
        h_{k+1/2,j-1}, h_{k+1,j} \right \}
\end{equation}
\begin{equation}
    s(h_{k,j}) := \left \{ 
        h_{k+1/2,j-1}, h_{k+1,j} \right \}
\end{equation}
\begin{equation}
    s(h_{k,0}) := \left \{ h_{k+1,0} \right \}
\end{equation}
As well as the following, which serves as the stopping condition of the sampling:
\begin{equation}
    \forall j \in \mathcal{J}_\epsilon,\  s(h_{K,j}) := \emptyset
\end{equation}

\begin{algorithm}[h!] 
		\caption{Selection of the bids by a weight-pushing algorithm}
            \label{alg:weight_pushing}
		\textbf{Input:} Weights $\HalfProba$ for every bids or bid-gap $(h_{k,j})_{(k,j)\in \mathcal{K} \times \mathcal{J}_\epsilon }$\;
		\textbf{Output:} bid vector $\hvec \in H_\epsilon $\;
            \textbf{Initialize :}$k\gets K-\frac{1}{2}$. For all $j \in \mathcal{J}_\epsilon $, $\Gamma(h_{K,j})\gets 1$\;
            \While{$k \geq 1$}{  
			\For{$j \in \mathcal{J}_\epsilon$}{					  
$ \Gamma(h_{k,j}) \gets \sum_{h \in s(h_{k,j})} \Gamma(h) \HalfProba(h)
                        $}
   $k \gets k-\frac{1}{2}$}
   % \iffalse
			% \begin{itemize}
   %              \item for each bid-gap with index $k+1/2$, compute the re-weighting. 
			% 	\begin{small}
			% 		\begin{align*}
			% 			\Gamma^{t}(\tilde{b}_{k+1/2,j}) = \Gamma^t(\tilde{b}_{k+1/2,j-1}) \HalfProba^{t-1}(\tilde{b}_{k+1/2,j-1})+ \Gamma^t(b_{k+1,j}) \HalfProba^{t-1}(b_{k+1,j})
			% 		\end{align*}
			% 	\end{small}

			% 	\item for each bid with index $k$, compute the re-weighting. 
			% 	\begin{small}
			% 		\begin{align*}
			% 			\Gamma^{t}(h_{k,j}) = \Gamma^t(\tilde{b}_{k+1/2,j-1}) \HalfProba^{t-1}(h_{k+1/2,j-1})+ \Gamma^t(h_{k+1,j}) \HalfProba^{t-1}(b_{k+1,j})
			% 		\end{align*}
			% 	\end{small}
   %              \end{itemize}
   %  \fi 		
      $\Gamma_0\gets \sum_{j \in \left [ \frac{1}{\epsilon} \right ]} \HalfProba(h_{1,j}) \Gamma(h_{1,j})
                    \label{def: gamma 0}$\;
        Sample $h$ according to the probabilities:\\ $\forall j \in \mathcal{J}_\epsilon, \Proba(h=h_{1,j}) = \HalfProba(h_{1,j})\frac{\Gamma(h_{1,j})}{\Gamma_0}$\;
      $\hvec \gets \{h\}$\;
         \While{$s(h) \neq \emptyset$}{sample $h_+ \in s(h)$, the next element of the sequence $\hvec$,  with probability: 
                    $
                        \Proba(h_+ =h'| h) =\HalfProba(h') \frac{\Gamma(h')}{\Gamma(h)}$ for all $h' \in s(h)$\;
             $\hvec \gets \hvec \cup \{h_+\}$\;           
            $h \gets h_+$\;
            }
        % \iffalse This is a way for the algo which explicitly state what to do for bid and bid-gap
        % \begin{itemize}
        %     \item if the previously sampled element is a bid $b_{k,j}$ with $k \in [K]$ 
        %         \begin{small}
        %             \begin{align*}
        %                 &\Proba_{t}(b_{k+1,j}| b_{k,j}) =\HalfProba^{t-1}(b_{k+1,j}) \frac{\Gamma^t(b_{k+1,j})}{\Gamma^t(b_{k,j})} \\
        %                 &\Proba_{t}(b_{k+1/2,j-1}| b_{k,j}) =\HalfProba^{t-1}(b_{k+1/2,j-1}) \frac{\Gamma^t(b_{k+1/2,j-1})}{\Gamma^t(b_{k,j})} 
        %             \end{align*}
        %         \end{small}
        %     \item if the previously sampled element is a bid-gap $b_{k-1/2,j}$ with $k \in [K]$ 
        %         \begin{small}
        %             \begin{align*}
        %                 &\Proba_{t}(b_{k,j}| b_{k-1/2,j}) =\HalfProba^{t-1}(b_{k,j}) \frac{\Gamma^t(b_{k,j})}{\Gamma^t(b_{k-1/2,j})} \\
        %                 &\Proba_{t}(b_{k-1/2,j-1}| b_{k-1/2,j}) =\HalfProba^{t-1}(b_{k-1/2,j-1}) \frac{\Gamma^t(b_{k-1/2,j-1})}{\Gamma^t(b_{k-1/2,j})} \\
        %             \end{align*}
        %         \end{small}
        %     \end{itemize}
        %     \fi
        % \end{itemize}
        \end{algorithm}

The combination of Algorithm~\ref{algorithm : exp3 K bis} and Algorithm~\ref{alg:weight_pushing} leads to the following probabilities, typical of an exponential weight algorithm, on $H_\epsilon$. For $\hvec \in H_\epsilon$,
\begin{align} 
    \Proba^t(\hvec) & =  \frac{ \exp\left ( \sum_{n=0}^{t} {\eta u_H^n(\hvec)} \right ) }  {\sum_{\mathbf{a} \in H_\epsilon} \exp \left (\sum_{n=0}^{t}  \eta u_H^n(\mathbf{a})\right )} \label{eq : probability of exponential weight update generic recur} 
\end{align}
as shown in \autoref{app :learning}, in \autoref{eq : probability of exponential weight update}.

\paragraph{Estimators}
With partial feedback (either bandits or all-winner), the bidder does not gather enough information to compute all of the sub-utilities, and it can only do it for a subset of pseudo-bids.%the binary variables $h$. 
They must therefore resort, as it is standard in multi-armed bandit literature, to estimators that should leverage all the information available. Under bandit feedback, only sub-utilities of binary variables which belong to the action played at time $t$ can be computed.- On the other hand, under all-winner feedback, the richer feedback allows to compute sub-utilities for a bigger set of binary variables $h$, we denote it $\Observed$ and define it in \autoref{lemma : observed all-winner}.

\begin{restatable}{lemma}{observedAllWinner}
 \label{lemma : observed all-winner}
    With the all-winner feedback, the bidder can compute from its feedback the sub-utilities of any pseudo bid 
 in $\Observed(\hvec^t,\betavec^t)$,  defined as:
\begin{equation}
        \Observed(\hvec^t,\betavec^t) :=
            \left \{ h_{k,j}, (k,j) \in \mathcal{K} \times \mathcal{J}_\epsilon  \middle |  \text{ s.t. } \ \{ k > x^t \} \  \text{or} \ \{  k = x^t \text{and} \ j \epsilon \geq p^t \} \right   \} 
    \end{equation}
    Where $x^t:=x_H(\hvec^t,\betavec^t)$ and $p^t=p_H(\hvec^t,\betavec^t)$. 
    % \begin{equation}
    %     \Observed^t(\ . \ ) := \left \{ \begin{matrix*}
    %         \hvec \in H_\epsilon \ \text{ s.t. } \  \exists j' \in \left [ \frac{1}{\epsilon}\right ], j'>j, h_{k,j'}\in \hvec  \text{ if } p(\hvec,\betavec^t) \in \Prix_\epsilon(h^t_{k,j}) \text{ for some } k \\[1em]
    %         \hvec \in H_\epsilon \ \text{ s.t. } \ \exists j' \in \left [ \frac{1}{\epsilon}\right ], j'>j, h_{k+\frac{1}{2},j'}\in \hvec  \text{ if }  p(\hvec,\betavec^t) \in \Prix_\epsilon(h^t_{k+\frac{1}{2},j}) \text{otherwise}
    %     \end{matrix*} \right.
    % \end{equation}
\end{restatable}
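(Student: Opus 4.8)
The plan is to characterize exactly which sub-utilities $w^t(h_{k,j})$ the bidder can reconstruct from the all-winner feedback, and to show this set is precisely $\Observed(\hvec^t,\betavec^t)$. Recall from Lemma~\ref{lemma : regularity over outcomes} that $w(h_{k,j},\betavec)$ depends on $\betavec$ only through two quantities: whether the event $\{p_H(\hvec,\betavec)\in\Prix_\epsilon(h_{k,j})\}\cap\{x_H(\hvec,\betavec)=\floor{k}\}$ occurs, and, on that event, the value of the price (which by the proof of that lemma equals $j\epsilon$ if $k$ is an integer, and $\beta_{K-\floor{k}}$ if $k$ is a half-integer). So the real question is: for which $(k,j)$ can the bidder decide whether this event holds and, if so, evaluate the price, using only the data revealed by all-winner feedback, namely $x^t$, $p^t$, and the multiset of all winning bids (the bids of every participant that won an item)?

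First I would unpack what "all winning bids" tells us about $\betavec^t$. Since the price $p^t$ is the $K$-th highest bid overall and winners are exactly those with bids $\geq p^t$ (LAB rule), revealing all winning bids reveals every bid of the adversary that is $\geq p^t$, i.e. the top coordinates $\beta^t_1,\dots,\beta^t_{K-x^t}$ (the adversary wins $K-x^t$ items), together with $p^t$ itself. The bidder does not learn $\beta^t_{K-x^t+1},\dots,\beta^t_K$, the adversary's losing bids. Now I would go through the cases of Lemma~\ref{lemma : regularity over outcomes}. For an integer $k$: the relevant event is $\{\beta^t_{K-k} > j\epsilon > \beta^t_{K-k+1}\}$ in the notation of that proof. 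If $k > x^t$, then $K-k < K-x^t$ and $K-k+1 \le K-x^t$, so both $\beta^t_{K-k}$ and $\beta^t_{K-k+1}$ are among the revealed winning bids of the adversary, hence the indicator is computable, and on it the price is the known constant $j\epsilon$. If $k = x^t$, then $\beta^t_{K-k}=\beta^t_{K-x^t}$ is revealed but $\beta^t_{K-k+1}=\beta^t_{K-x^t+1}$ is an unrevealed losing bid; however we additionally know $\beta^t_{K-x^t+1} \le p^t$, so when $j\epsilon \ge p^t$ the inequality $j\epsilon > \beta^t_{K-x^t+1}$ is automatic and the event reduces to the computable condition $\beta^t_{K-x^t} > j\epsilon$ — this is exactly the extra clause $\{k=x^t \text{ and } j\epsilon \ge p^t\}$ in the statement. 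If $k < x^t$, then $\beta^t_{K-k}$ is itself a losing adversary bid (unrevealed), and one checks the indicator genuinely cannot be determined. The half-integer case is analogous: the event is $\{j\epsilon < \beta^t_{K-\floor{k}} < (j+1)\epsilon\}$, computable iff $\beta^t_{K-\floor{k}}$ is revealed, i.e. iff $\floor{k} \le K-x^t$... wait, more carefully $K-\floor{k} \le K - x^t$, i.e. $\floor{k}\ge x^t$; combined with the structural fact that for a half-integer $k$ the associated allocation is $\floor{k}$ and the price lies strictly between grid points so $p^t$ cannot equal $\beta^t_{K-\floor{k}}$ unless the adversary sets it, the clean condition that survives is again $k > x^t$ (as a half-integer, $\floor k \ge x^t$ with the boundary handled by the price location). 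I would present this so that both parity cases collapse into the two clauses displayed in the lemma.

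Then I would argue the reverse inclusion: for $(k,j)$ with $k < x^t$, or $k=x^t$ with $j\epsilon < p^t$, exhibit two adversary profiles $\betavec,\betavec'$ consistent with the same observed feedback $(x^t,p^t,\text{winning bids})$ but giving different values of $w(h_{k,j},\cdot)$ — e.g. perturbing an unrevealed losing coordinate across the threshold $j\epsilon$ while keeping it $\le p^t$ — so the sub-utility is genuinely not a function of the feedback. This makes $\Observed$ tight, not just sufficient.

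The main obstacle I anticipate is bookkeeping the index shifts and the boundary cases precisely: matching "the adversary wins $K - x^t$ items" to "which $\beta^t_\ell$ are revealed", keeping the LAB tie conventions straight (recall $\betavec\in B_{\setminus\epsilon}$, so ties with $j\epsilon$ are excluded, which is what makes the strict inequalities clean), and verifying that when $k=x^t$ the price $p^t$ equals exactly the threshold that separates the decidable from the undecidable regime. Everything else — that on the computable events the price value is itself recovered (it is either the grid value $j\epsilon$, known a priori, or a revealed winning bid $\beta^t_{K-\floor k}$) — follows directly from the proof of Lemma~\ref{lemma : regularity over outcomes} and requires no further work.
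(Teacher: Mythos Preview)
Your approach is essentially the same as the paper's: identify that all-winner feedback reveals $\beta^t_1,\dots,\beta^t_{K-x^t}$ together with the bound $\beta^t_i \le p^t$ for $i>K-x^t$, then run the integer/half-integer case split from Lemma~\ref{lemma : regularity over outcomes} to check which indicators $\indicator\{\beta_{K-k}>j\epsilon>\beta_{K-k+1}\}$ and $\indicator\{j\epsilon<\beta_{K-\lfloor k\rfloor}<(j+1)\epsilon\}$ become computable. The reverse-inclusion (tightness) argument you sketch is extra: the lemma as stated only asserts that sub-utilities on $\Observed$ \emph{can} be computed, not that $\Observed$ is maximal, so the paper omits it and you need not include it either.
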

The formal proof of Lemma~\ref{lemma : observed all-winner} is in Appendix~\ref{proof}. 

    As noted above, within a given pseudo-bid $\hvec$, only one sub-utility can be non-zero. We therefore also define the set of binary variables with non-zero sub-utilities $\Observed_\star(\hvec^t,\betavec^t) := \{ h_{k,j} \in  \Observed(\hvec^t,\betavec^t)| w(h_{k,j},\betavec^t) >0 \}$.

We can now formally introduce the estimators used by the no-regret procedure.

\begin{definition}[Estimators]
Let $\hvec^t \in H_\epsilon$ be the action played by the learner, and $\betavec^t \in B_{\setminus \epsilon}$ the bids of the adversary at time $t$,. For any bid or bid-gap $h$, we define the sub-utility estimators: 
\begin{align*}
 \text{Bandit feedback }& \hat{w}^t(h) = \indicator(h =  h^t_\star) \frac{w^t(h) - K}{\Probat(h)}, \numberthis \label{def : subutility bandit estimator} \\
 \text{All-winner feedback } & \bar{w}^t (h) = \indicator \left ( h \in \Observed^t_\star( \hvec^t) \right ) \frac{w^t(h)-K}{\underset{\mathbf{f}^t \sim \mathcal{B}^t}{\Proba}(h \in \Observed_\star^t (\mathbf{f}^t) )} \numberthis \label{def : subutilities all winner estimator }
\end{align*}
where $h^t_\star := h_\star(\hvec^t,\betavec^t)$ is the sub/pseudo-bid played at time $t$ that has non-zero sub-utility, $\mathcal{B}^t$ is the probability distribution on $H_\epsilon$ as in \eqref{eq : probability of exponential weight update generic recur} and $\Probat(h) := \sum_{\hvec \in H_\epsilon : h \in \hvec} \Probat(\hvec) $
\end{definition}
Naturally, estimation of the utility of any action 
$\hvec \in H_\epsilon$ is done with a simple summation,  over $h \in \hvec$,  of these estimates. 
\subsection{Regret Analysis}

Combining Algorithm \ref{algorithm : exp3 K bis} and the sampling Algorithm \ref{alg:weight_pushing}, we recover the regret guarantees obtained by (\cite{branzei_learning_2024}, Theorem 2) in the same setting for the full-information feedback. We provide a formal statement and proof of this result in the \autoref{app :learning}, in \autoref{theorem : full info regret}. We now analyze the performance of the learning procedure for the two other types of feedback.

%\paragraph{Regret upper bounds}

 \begin{theorem} \label{theorem : Regret exp3}
   In the repeated $K$-unit auction with uniform pricing guarantees and under bandit feedback, Algorithm \ref{algorithm : exp3 K bis}  incurs a regret of at most $\mathcal{O} \left ( K^{4/3} T^{2/3} \log \left (T \right ) \right ) $ For any time horizon $T$, with the choices of $\epsilon = \left ( \frac{K}{T} \right )^{1/3}$ and  $\eta = K^{-1/3}T^{-2/3}\sqrt{\log \left ( \frac{T}{K} \right )/3}$.
 \end{theorem}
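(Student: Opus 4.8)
The plan is to follow the standard EXP3-style regret decomposition, adapted to the component-based (pseudo-bid) structure developed in Sections~\ref{SE:action}. Concretely, write the regret as
\[
R_T = \sup_{\hvec \in H_\epsilon} \sum_{t=1}^T u_H^t(\hvec) - \Expectation\Big[\sum_{t=1}^T u_H^t(\hvec^t)\Big] + \big(\text{discretization error}\big),
\]
where the discretization error is the gap between the best bid in $B$ and the best bid in $B_\epsilon$. Using Lemma~\ref{lemma : decomposition}, both $u_H^t(\hvec)$ and $u_H^t(\hvec^t)$ split into a sum of at most one nonzero sub-utility $w^t(h_{k,j})$ per action, with each $w^t$ bounded in $[-K, K]$ (since $v_l \in [0,1]$ and $\floor k \le K$). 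First I would bound the discretization error: moving each bid down to the nearest multiple of $\epsilon$ changes the price and valuations by at most $\epsilon$ per unit, so over $K$ units and $T$ rounds this costs $O(K \epsilon T)$. With $\epsilon = (K/T)^{1/3}$ this is $O(K^{4/3} T^{2/3})$, matching the target.

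Next I would handle the learning term. The probabilities \eqref{eq : probability of exponential weight update generic recur} are exactly those of an exponential-weights algorithm run on $H_\epsilon$ with per-round ``losses'' $-u_H^t$, except that with bandit feedback the update uses the estimator $\hat w^t$ from \eqref{def : subutility bandit estimator} instead of the true $w^t$. The key properties to verify are: (i) $\hat w^t$ is (conditionally) unbiased for $w^t$ shifted by the constant $K$ — the $-K$ shift makes the estimated sub-utilities non-positive, which is what lets the standard analysis go through with a one-sided (rather than two-sided) second-moment bound; (ii) the estimator is nonzero only for $h = h^t_\star$, so at most one component is updated per round. Then apply the standard potential-function argument: track $\Phi^t = \sum_{h} \HalfProbat(h)$ or, more precisely, the log-partition function over $H_\epsilon$, and use $\log \Proba^t(\hvec^\ast)^{-1} \le \log |H_\epsilon| \le \frac{K}{\epsilon}\log\frac{1}{\epsilon}$ for the initialization cost. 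The per-round term contributes $\frac{\eta}{2} \Expectation[\sum_{h \in \hvec^t}(\hat w^t(h))^2]$-type quantities; because only $h^t_\star$ is active and $\Expectation[(\hat w^t(h))^2] \lesssim K^2 / \Probat(h)$ summed against $\Probat(h)$ over the at-most-$2K$ components of a path, this is $O(\eta K^3)$ per round (the extra $K$ versus a naive bound comes from there being $O(K)$ components per action, each contributing $O(K^2)$). Balancing $\frac{K\log(1/\epsilon)}{\epsilon \eta} + \eta K^3 T$ with the stated $\eta = K^{-1/3}T^{-2/3}\sqrt{\log(T/K)/3}$ and $\epsilon = (K/T)^{1/3}$ yields the claimed $\mathcal{O}(K^{4/3}T^{2/3}\log T)$.

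The main obstacle, and the step requiring the most care, is the variance/second-moment control of the bandit estimator in the component-decomposed space. Unlike vanilla EXP3, the loss of an action is a \emph{sum} of component losses, and the estimator is defined per component via the marginal probabilities $\Probat(h) = \sum_{\hvec \ni h}\Probat(\hvec)$; one must argue that for the single active component $h^t_\star$, $\Probat(h^t_\star)$ is exactly the probability that the played action's nonzero sub-utility is credited to $h^t_\star$, so that $\Expectation[\hat w^t(h^t_\star) \mid \mathcal F^{t-1}] = w^t(h^t_\star) - K$ and the importance-weighting telescopes correctly. This is where Lemma~\ref{lemma : regularity over outcomes} is essential: it guarantees that whether $h_{k,j}$ is ``credited'' depends only on $(k,j,\betavec^t)$ and not on the rest of $\hvec$, so the event $\{h^t_\star = h_{k,j}\}$ coincides with $\{h_{k,j} \in \hvec^t\} \cap \{$that fixed outcome holds$\}$, whose probability is a clean function of $\Probat(h_{k,j})$. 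Once this identity is pinned down, the remaining estimates — unbiasedness, the bound $w^t \in [-K,K]$, and $\sum_{h \in \hvec}\Probat(h) \le 2K$ — are routine, and the potential argument closes as above. I would also need the mild technical reduction of Lemma~\ref{lemma : avoiding ties regret} to justify assuming $\betavec^t \in B_{\setminus\epsilon}$ throughout, but that only affects the regret by a negligible additive term.
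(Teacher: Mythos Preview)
Your overall skeleton matches the paper's, but the second-moment control—the step you flag as the crux—is where the proposal fails, and the arithmetic you give does not produce the stated rate.

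You claim the per-round variance term is $O(K^3)$, arguing that $\Expectation[(\hat w^t(h))^2] \lesssim K^2/\Probat(h)$ can be ``summed against $\Probat(h)$ over the at-most-$2K$ components of a path''. Neither piece is right. First, a pseudo-bid $\hvec \in H_\epsilon$ can contain up to $K + 1/\epsilon$ components (there can be many bid-gaps $h_{k+1/2,j}$ with the same first index), not $2K$. More importantly, the quantity the EXP3 potential argument produces is $\Expectation_{\hvec^t}\bigl[\sum_{\hvec \in H_\epsilon}\Probat(\hvec)(\hat u^t(\hvec))^2\bigr]$; since only $h^t_\star$ carries a nonzero $\hat w^t$, this collapses to $\sum_{h \in \mathcal{O}^t}(w^t(h)-K)^2$ where $\mathcal{O}^t = \{h^t_\star(\hvec) : \hvec \in H_\epsilon\}$ is the set of pseudo-bids that can be active for \emph{some} action given $\betavec^t$. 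This set is not $O(K)$: whenever the learner sets the price there are $\Theta(1/\epsilon)$ possible price levels, so $|\mathcal{O}^t| = \Theta(1/\epsilon + K^2)$. The paper's Lemma~\ref{lemma : variance bandits estimator} establishes exactly this via a counting argument on achievable (allocation, price) pairs, yielding a per-round variance of $O(K^2/\epsilon)$—and it is this $1/\epsilon$ dependence that trades off against the $K\epsilon T$ discretization error and forces $\epsilon = (K/T)^{1/3}$.

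The other bookkeeping is also off. You write $\log|H_\epsilon| \le \frac{K}{\epsilon}\log\frac{1}{\epsilon}$, but $|H_\epsilon| = |B_\epsilon| \le (1/\epsilon)^K$, so $\log|H_\epsilon| \le K\log(1/\epsilon)$; you have a spurious factor $1/\epsilon$. With your two stated terms $\frac{K\log(1/\epsilon)}{\epsilon\eta} + \eta K^3 T$, plugging in the prescribed $\eta,\epsilon$ makes the first term of order $KT\sqrt{\log(T/K)}$, not $K^{4/3}T^{2/3}$, and even re-optimizing $\eta$ over your terms gives only $K^{11/6}T^{2/3}$. The correct balance is between $K\log(1/\epsilon)/\eta$ and $\eta K^2 T/\epsilon$. (Minor: for the discretization error you should round bids \emph{up} to the grid, not down—rounding down can lose items and cost $\Theta(1)$ rather than $\Theta(\epsilon)$ per unit; see Lemma~\ref{lemma : discretized regret}.)
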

\begin{proof}[Proof sketch]    
The aforementioned regret bounds are proved by using a similar analysis as the one in \cite{lattimore2020bandit} to obtain regret bounds of EXP3 algorithm in the adversarial bandits case. We apply this analysis to the discretized action space $H_\epsilon$ and bound the additional cost of using a discretization separately. Then we choose a discretization size $\epsilon$ to minimize the total regret. 

The improvement over known regret bounds in \cite{branzei_learning_2024} results from the decomposition of the utility into sub-utilities. Since these sub-utilities only depend on one bid or bid-gap, the \textit{variance} of the estimators (cf Lemma \ref{lemma : variance bandits estimator}) only depend on the possible number of bids or bid-gaps ( of order $\frac{1}{\epsilon}$) not the number of bid profiles ( of order $\frac{1}{\epsilon^K}$). This is akin to why combinatorial bandits under semi-bandit feedback (\cite{audibert2014regret}) achieve better regret than under bandit-feedback. 
\end{proof}
  \begin{theorem} \label{theorem : Regret all winner}
   For any time horizon $T$, using Algorithm \ref{algorithm : exp3 K bis} in the repeated $K$-unit auction with uniform pricing guarantees, under all-winner feedback, a regret of at most $\smash{\mathcal{O} \left ( K^{5/2} \sqrt{T} \log (T) \right ) }$ with $\eta =K^{-1}T^{-1/2}$ and  $\epsilon = K^{3/2}T^{1/2}$.
 \end{theorem}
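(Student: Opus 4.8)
The plan is to mirror the structure of the bandit case (Theorem~\ref{theorem : Regret exp3}), but to track the sharper variance bound available under all-winner feedback. Recall that the algorithm is an exponential-weight scheme on the discretized action space $H_\epsilon$, run with the estimator $\bar w^t$ from \eqref{def : subutilities all winner estimator }. The total regret decomposes as
\begin{equation}
R_T \;\le\; \underbrace{\sup_{\bvec \in B_\epsilon}\sum_{t=1}^T u(\bvec,\betavec^t)-\sup_{\bvec\in B}\sum_{t=1}^T u(\bvec,\betavec^t)}_{\text{discretization error}} \;+\; \underbrace{\sup_{\hvec\in H_\epsilon}\sum_{t=1}^T u_H^t(\hvec)-\Expectation\Bigl[\sum_{t=1}^T u_H^t(\hvec^t)\Bigr]}_{\text{learning regret on }H_\epsilon}.
\end{equation}
First I would bound the discretization error: rounding each of the $K$ coordinates down to the $\epsilon$-grid changes the price by at most $\epsilon$ and the allocation by a controlled amount, so this term is $O(K^2\epsilon)$ (the same bound used for Theorem~\ref{theorem : Regret exp3} and in \cite{branzei_learning_2024}); I would cite the appendix lemma that establishes it rather than reprove it.

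For the learning regret, I would run the standard EXP3-type analysis (as in \cite{lattimore2020bandit}) applied to the decomposition $u_H^t(\hvec)=\sum_{h\in\hvec} w^t(h)$ from Lemma~\ref{lemma : decomposition}. The potential/weight telescoping argument gives, for the unnormalized exponential weights over $H_\epsilon$, a bound of the form
\begin{equation}
\text{learning regret on }H_\epsilon \;\le\; \frac{\log |H_\epsilon|}{\eta} \;+\; \eta \sum_{t=1}^T \Expectation\Bigl[\sum_{h\in\hvec^t} \bar w^t(h)^2\Bigr] \;+\; (\text{lower-order bias terms}),
\end{equation}
using that $\bar w^t$ is (conditionally) unbiased for $w^t$ on the relevant event and that the ``$-K$'' shift in the estimator keeps the exponent controlled (the shift is a standard trick to make the estimator bounded above, at the cost of an additive bias that cancels because it is applied uniformly over the action space). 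The size of the action space satisfies $\log|H_\epsilon| = O(K\log(1/\epsilon))$, since a pseudo-bid is an increasing-in-$k$, decreasing-in-$j$ lattice path with $O(K)$ steps on a grid of height $1/\epsilon$.

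The crux is the variance term $\sum_t \Expectation[\sum_{h\in\hvec^t}\bar w^t(h)^2]$. Here I would invoke the all-winner observation set from Lemma~\ref{lemma : observed all-winner}: each pseudo-bid $h$ with nonzero sub-utility is ``observed'' not only when it lies on the played path but whenever the played path $\mathbf f^t$ has $h\in\Observed_\star^t(\mathbf f^t)$, and the estimator divides by that (larger) probability $\Proba_{\mathbf f^t\sim\mathcal B^t}(h\in\Observed_\star^t(\mathbf f^t))$. The key combinatorial claim — which I expect to be the main obstacle — is that for the \emph{single} pseudo-bid $h_\star(\hvec^t,\betavec^t)$ that carries the utility on round $t$, this observation probability is bounded below by a constant (or at worst $1/\mathrm{poly}(K)$) times $\Probat(h_\star)$, because the all-winner feedback reveals $h_\star$ for a whole ``downward-closed'' family of paths, not just those through $h_\star$; indeed a path observes $h_{k,j}\in\Observed_\star$ essentially whenever its bids above index $k$ are high enough to win $k$ items at price in $\Prix_\epsilon(h_{k,j})$, a condition on a prefix of the path. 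Quantifying this lower bound is where the extra $K$ factors (relative to full information) enter. Granting it, the per-round variance contribution is $O(K\cdot |H_\epsilon\text{'s effective width}|)$-type and, combined with the $(w^t-K)^2 \le O(K^2)$ bound on the numerator, yields $\sum_t \Expectation[\cdots] = O(K^3 \cdot \tfrac{1}{\epsilon} \cdot T)$ up to the right powers; I would then assemble
\begin{equation}
R_T \;=\; O(K^2\epsilon) \;+\; \frac{O(K\log(1/\epsilon))}{\eta} \;+\; \eta\, O(K^{3} T/\epsilon),
\end{equation}
wait — to get $\tilde O(K^{5/2}\sqrt T)$ with $\epsilon = K^{3/2}T^{1/2}$ and $\eta = K^{-1}T^{-1/2}$, the discretization and the $1/\eta$ term must both be of order $K^{5/2}\sqrt T$ (note $\epsilon\ge1$ here means the grid is trivial and the ``discretization'' is only the $O(\log(1/\epsilon))$-type bookkeeping; effectively one plays on a coarse enough grid that $|H_\epsilon|$ is polynomial and the variance term is $\tilde O(\mathrm{poly}(K)\sqrt T)$). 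So the final step is simply to plug the stated $\eta,\epsilon$ into the three-term bound, check each term is $\tilde O(K^{5/2}\sqrt T)$, and collect the $\log T$ from $\log(1/\epsilon)=O(\log T)$. The genuinely new piece of work, versus the bandit theorem, is the observation-probability lower bound from all-winner feedback; everything else is a re-run of the EXP3 telescoping with the improved variance plugged in.
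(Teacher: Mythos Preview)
Your overall structure (discretization + EXP3 on $H_\epsilon$) is right, and you correctly identify that the second moment of $\bar u^t$ is where all the action is. But the variance bound you write down is wrong in a way that matters, and your proposal does not contain the idea that actually controls it.

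You claim the second-moment term is $O(K^3T/\epsilon)$, giving a three-term bound with $K\log(1/\epsilon)/\eta$ and $\eta K^3T/\epsilon$. Optimizing those two in $\eta$ gives $\sqrt{K^4T\log(1/\epsilon)/\epsilon}$, and balancing against the discretization $KT\epsilon$ (not $K^2\epsilon$ --- the discretization cost in Lemma~\ref{lemma : discretized regret} is per round, so you dropped a factor of $T$) forces $\epsilon\asymp T^{-1/3}$ and yields a $T^{2/3}$ rate: exactly the bandit rate, not $\sqrt T$. A $\sqrt T$ bound cannot come out of a variance that scales like $1/\epsilon$.

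What the paper does instead (Lemma~\ref{lemma : variance all-winner estimator}) is show that $\sum_{\hvec}\Probat(\hvec)\Expectation[\bar u^t(\hvec)^2]\le O(K^4\log(1/\epsilon))$ per round, with the $1/\epsilon$ \emph{inside} the logarithm. This is obtained not by a pointwise lower bound on the observation probability (your ``downward-closed family of paths'' heuristic), but via the graph-feedback lemma of \cite{Alon_Graph-Structured_Feedback}: one builds a directed graph on the set $\mathcal{O}^t$ of possible $h_\star$'s, with an edge $o_2\to o_1$ whenever $o_1\in\Observed_\star^t(o_2)$, and applies their bound $\sum_i w_i/(w_i+\sum_{j\in N^{\mathrm{in}}(i)}w_j)\le 4\alpha\log(4|V|/(\alpha\epsilon))$, where $\alpha$ is the independence number of the graph. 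The combinatorial work is to show $\alpha\le 2K$: for each half-integer index only one bid-gap $h_{k+1/2,\cdot}$ can lie in $\mathcal{O}^t$, and for integer $k$ the observation sets of two bids $h_{k,j},h_{k,j'}$ with the same $k$ are nested, so any independent set has at most one node per index in $\mathcal{K}$. That independence-number argument is the missing idea; without it the $1/\epsilon$ does not collapse to a logarithm and the proof cannot beat the bandit rate.

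Two smaller points. First, as noted, the discretization error is $KT\epsilon$, not $K^2\epsilon$. Second, your confusion at the end (``$\epsilon\ge 1$ so the grid is trivial'') is caused by a typo in the theorem statement: the intended choice is $\epsilon=\sqrt{K^3/T}=K^{3/2}T^{-1/2}$, not $K^{3/2}T^{1/2}$. With the correct per-round variance $O(K^4\log(1/\epsilon))$, plugging $\eta=K^{-1}T^{-1/2}$ and $\epsilon=K^{3/2}T^{-1/2}$ makes all three terms $\tilde O(K^{5/2}\sqrt T)$.
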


\begin{proof}[Proof sketch]
    The proof of these bounds in the all-winner feedback follows closely the analysis used for the bandit feedback. 
    Better bounds are achieved in comparison to the bandit's feedback thanks to the lower \textit{variance}  of the estimator defined in \eqref{lemma : variance all-winner estimator}, proved in Lemma~\ref{lemma : variance all-winner estimator}. Intuitively, this comes from the ability to observe the realized utility more often, allowed by the richer feedback. 
    We exhibit this by using tools used in bandits with graph feedback \citep{Alon_Graph-Structured_Feedback}.
\end{proof}

\iffalse
Following the same structure as the EXP3 regret bound analysis in \cite{lattimore2020bandit}, and considering each action in the discretized action space $H_\epsilon$ as a separate arm, we obtain the following upper bound for the regret.
Decomposing the utility $u(\hvec^t,\betavec^t)$ into sub-utilities allows us to build estimators directly for these sub-utilities of bids $w^t(h_{k,j})$ and bid-gaps $w^t(h_{k+1/2,j})$. This is akin to what can be done in combinatorial bandits under semi-bandit feedback as in \cite{audibert2014regret}.
\fi

\paragraph{Regret lower bound}
We provide a matching lower bound on the regret of any online learning algorithm (Lemma~\ref{lemma : lower bound}), in the bandit feedback setting, by extending a result from \citep{balseiro2019contextual} in the context of single price auctions. %for the single price auction literature  \cite{balseiro2019contextual}. 
This partially answers an open question raised by \cite{branzei_learning_2024} regarding the achievable learning rate in the bandit setting for the problem that we consider. 

\begin{lemma} \label{lemma : lower bound}
    Any online learning procedure must incur $\Omega(T^{2/3})$ regret in multi-unit uniform auction with the Last Accepted Bid rule under bandit feedback Bid pricing rule.
\end{lemma}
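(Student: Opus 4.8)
The plan is to reduce the $K$-unit uniform-price auction under the LAB rule (and bandit feedback) to a single-unit setting for which an $\Omega(T^{2/3})$ regret lower bound is already known — specifically the construction of \citet{balseiro2019contextual} for single-item auctions with known valuation and adversarial opposing bids. First I would fix $K$ arbitrary (the bound is in $T$ only, so $K=1$ already suffices, but I want the statement to hold for all $K$) and restrict the adversary to a family of instances in which all but one of the coordinates of $\betavec^t$ are frozen at extreme values: set $\beta^t_1 = \dots = \beta^t_{K-2}$ very close to $1$ and $\beta^t_{K} = \dots$ near $0$ and let only $\beta^t_{K-1}$ vary in some interval. Likewise I would force the bidder, via the structure of undominated strategies (Lemma on dominated strategies referenced in the paper) and the valuations $v_1 > v_2 > \dots$, to effectively only control the single bid $b_1$ that competes with $\beta^t_{K-1}$ for the last accepted slot; the remaining allocation is then deterministic. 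Under the LAB rule, when $b_1 \ge \beta^t_{K-1}$ the bidder wins one extra item at price exactly $\beta^t_{K-1}$ (the $K$-th highest bid being the adversary's), and when $b_1 < \beta^t_{K-1}$ they do not — this is precisely the posted-price / first-price-type thresholding structure driving the single-item lower bound, and crucially the bandit feedback reveals only the allocation and the price, i.e. the bidder learns $\beta^t_{K-1}$ only when they win.

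Next I would transport the hard instance: let $(\beta^t_{K-1})_{t}$ be drawn from the two-point-per-phase or piecewise-constant adversarial distribution of \citet{balseiro2019contextual} that is designed so that any algorithm faces an exploration/exploitation tradeoff of exactly the $T^{2/3}$ flavour — the bidder must bid high to learn the threshold (incurring per-round loss linear in the overbid gap) or bid low to avoid loss (learning nothing). The reduction then gives: the per-round utility of the bidder in the $K$-unit instance equals a constant (coming from the frozen deterministic part of the allocation) plus exactly the per-round utility of a single-item bidder with value $v_K$ (or $v_{\lfloor k\rfloor}$ for the relevant slot) against opposing bid $\beta^t_{K-1}$. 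Consequently the regret of the $K$-unit problem against the best fixed $\bvec \in B$ is lower bounded by the single-item regret against the best fixed bid, and the latter is $\Omega(T^{2/3})$ by the cited result. I would make sure the "best fixed action" in $B$ really does collapse to optimizing the single coordinate $b_1$ on these instances, which follows because all other coordinates have a dominant choice independent of $\betavec^t$.

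The main obstacle I expect is twofold. First, verifying that the feedback really degrades to exactly the single-item bandit feedback: in the multi-unit auction the bidder also observes the price $p^t$ and the full allocation $x^t$, so I must check that on the restricted instance family $p^t$ and $x^t$ carry no information about $\beta^t_{K-1}$ beyond what a single-item bidder would see — i.e. that $x^t \in \{k_0, k_0+1\}$ and $p^t$ equals $\beta^t_{K-1}$ only on the win event and equals a fixed known quantity otherwise. Second, handling the LAB tie-breaking and the $[0,1]$-boundedness of values so that the amplitude of the single-item sub-problem is $\Theta(1)$ (not shrinking with $K$), which is what keeps the constant in $\Omega(T^{2/3})$ from degenerating; this is a matter of choosing the valuations and the frozen adversary coordinates with enough separation. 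Once these checks are in place, the lower bound follows by directly invoking \citet{balseiro2019contextual}; the footnote in the statement already signals that the argument is tied to the LAB rule, which is exactly where the "win $\Rightarrow$ price $=\beta^t_{K-1}$" identity is clean.
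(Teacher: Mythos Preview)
Your high-level plan---embed a hard single-item instance inside the $K$-unit LAB auction by freezing most adversary and learner coordinates---is the right idea and is what the paper does. But your concrete instantiation does not produce the structure you claim. With only $K-2$ adversary bids pinned near $1$ (and $\beta^t_K$ near $0$), there are \emph{two} remaining slots among the top $K$, so both $b_1$ and $\beta^t_{K-1}$ always land in the top $K$: the learner wins one item regardless of the comparison $b_1 \gtrless \beta^t_{K-1}$, the LAB price is $\min(b_1,\beta^t_{K-1})$, and bidding $b_1$ just above $\beta^t_K$ is essentially optimal---no hard learning problem arises. You need $K-1$ adversary bids at $1$ so that $b_1$ genuinely competes with the single remaining adversary bid for the unique $K$-th slot.

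More importantly, you describe the win event as ``price exactly $\beta^t_{K-1}$'' and say the bidder ``learns $\beta^t_{K-1}$ only when they win''. That is a second-price / posted-price structure, not the first-price setting of \citet{balseiro2019contextual}, and revealing the threshold on a win would only make the problem easier. The crux of the correct reduction is the opposite: with $K-1$ adversary bids at $1$ and one variable adversary bid $h^t$, when the learner wins ($b_1>h^t$) the $K$-th highest overall bid is $b_1$ itself, so the LAB price equals the \emph{learner's own bid}. The bandit feedback $(x^t,\indicator[x^t>0]\,p^t)$ then reveals only the bit $\indicator[b_1>h^t]$ together with $b_1$, i.e.\ nothing about $h^t$ beyond the threshold comparison---this censoring is exactly what drives the $T^{2/3}$ rate. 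The paper takes $\betavec^t=(1,\dots,1,h^t)$ and valuation $v=(1,0,\dots,0)$, so that undominated bids are $(b_1,0,\dots,0)$ and the utility collapses to $\indicator[b_1>h^t](1-b_1)$: first-price with value $1$, with feedback matching the single-item bandit feedback exactly. Your two checkpoints (feedback equivalence and $\Theta(1)$ amplitude) then go through immediately once the instance is corrected.
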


This stems from the fact that against an adversary that only plays bids with value 1 except for its last bid, the auction is essentially a first-price auction. 

\begin{proof}[Proof] We extend the lower bound on the regret of the first price auction in \cite{balseiro2019contextual}. At time $t$, let $\betavec^t=\{1,1,...,1,h^t\}$ be the bid of the adversary, let the valuation of the learner be $v=(1,0,..,0)$ and denote $\bvec^t = (b_1^t,...,b_k^t) \in B$ the learner's bid.
We only consider sensible bids, such that $b_i^t > 0 \iff i=1$, because they are dominating strategies.
The learner's utility is $u(\hvec,\betavec) = \indicator[b_1^t > h^t] (1-b_1^t)$, and the bandits feedback, ($x^t,\indicator[x^t >0] p^t)$ is : $(\indicator[ b_1^t > h^t],\indicator[ b_1^t > h^t] b_1^t)$ which coincides with both the utility and the bandits feedback of the first price auction with value 1. 

This specific instance of the repeated $K$-unit auction therefore coincides with the repeated first price auction with opposing bid $h^t$ at time $t$, which can be any instance of the first price auction.
Therefore if no learning algorithm can guarantee better regret than $\mathcal{O}(T^{2/3})$ in the latter problem, no algorithm can guarantee better regret than $\mathcal{O}(T^{2/3})$ in the former.

\iffalse
\begin{align*}
\bar{u}^t (b_{k,j})  & = \indicator \left ( b_{k,j} \in \Observed^t_\star( \bvec^t) \right ) \frac{w^t(b_{k,j})-K}{\underset{\mathbf{h}^t \sim \mathcal{B}^t}{\Proba}(b_{k,j} \in \Observed_\star^t (\mathbf{h}^t) )} \numberthis ,
\end{align*}
\fi
\end{proof}

\section{Conclusion}

We provided the first no-regret algorithm achieving optimal rates in $T$ for the $K$-unit uniform auction under bandit, full information, and all winner feedback. The techniques and theoretical tools presented can be applied to obtain similar regret guarantees in the adversarial bid setting with random valuation, under the assumption that the valuation and opposing bids are independent. An interesting open question is whether similar rates can be achieved in a contextual setting (when valuations changing at each round are observed before each play). 
The obtained regret rates match the ones obtained in the discriminatory price auction, a commonly compared auction mechanism, up to a factor $\mathcal{O}(K^{\frac{1}{3}})$. This raises the question of whether this gap can be closed or if a lower bound showing a separation in achievable regret rates exists.

\section*{Acknowledgements}

Dorian Baudry thanks the support of the French National Research Agency: ANR-19-CHIA-02
SCAI, ANR-22-SRSE-0009 Ocean, and ANR-23-CE23-0002 Doom. Dorian Baudry was also partially funded by UK Research and Innovation (UKRI) under the UK government’s Horizon Europe funding guarantee [grant number EP/Y028333/1].

This research was supported in part by the French National Research Agency (ANR) in the framework of the PEPR IA FOUNDRY project (ANR-23-PEIA-0003) and through the grant DOOM ANR-23-CE23-0002. It was also funded by the European Union (ERC, Ocean, 101071601). Views and opinions expressed are however those of the author(s) only and do not necessarily reflect those of the European Union or the European Research Council Executive Agency. Neither the European Union nor the granting authority can be held responsible for them.

% \bibliographystyle{abbrvnat}
% \bibliography{biblio}

%\newpage

\printbibliography
\newpage
\appendix 
\section{Problem-specific simplifications} \label{appendix : }
This section focuses on characterizing undominated strategies and showing that when a learner plays on an $\epsilon$-discretization of the bid interval $[0,1]$ assuming ties never occur is without loss of generality.
\subsection{Dominated strategies} \label{appendix : dominated strategies}
Since the scale of the bid space is a deciding factor in the rates of regret we obtain, it is useful to analyze the utility functions of the learner. Indeed, for the learning procedure, it can be useful, to restrict ourselves from the start to bids which can potentially be optimal. We show next that, under certain condition on the values of the learner, we can restrict the bid space $B$. 

\begin{lemma} \label{lemma : dominated strategies}
Let $\{ v_1,v_2,...,v_k \}$ be the valuations of the learner. Then for any bids $\bvec=\{b_1,...,b_K\} \in B$ such that there exists $i\in [K], b_i >v_i $, there exists $\tilde{\mathbf{b}}$ a non-increasing sequence of $[0,v_1]\times [0,v_2] \times ... \times [0,v_K]$ such that : 
\begin{equation*}
    \forall \betavec \in B, u(\tilde{\mathbf{b}},\betavec) \geq u(\bvec,\betavec)
\end{equation*}
\end{lemma}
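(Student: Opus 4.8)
The goal is to show that any bid vector $\bvec$ with some coordinate $b_i > v_i$ is (weakly) dominated by a bid vector $\tilde{\bvec}$ that lies coordinatewise below the valuation vector. The natural candidate is the truncated bid $\tilde b_k := \min(b_k, v_k)$ for each $k$; I would first check that $\tilde{\bvec}$ is still admissible, i.e. non-increasing. This is not quite immediate because $(v_k)_k$ need not be non-increasing, so $\min(b_k, v_k)$ could in principle fail monotonicity. The clean fix is to truncate against the running minimum of the valuations: set $\tilde b_k := \min(b_k, v_1, v_2, \dots, v_k)$. Since $b_k$ is non-increasing and $\min(v_1,\dots,v_k)$ is non-increasing, $\tilde b_k$ is non-increasing; and $\tilde b_k \le v_k$, so $\tilde{\bvec}$ lies in $[0,v_1]\times\cdots\times[0,v_K]$. (One should also note that replacing $v_k$ by $\min(v_1,\dots,v_k)$ changes nothing of substance, since winning a $k$-th item at marginal value $v_k$ is only worthwhile after winning the first $k-1$, so the ``effective'' value of the $k$-th unit is $\min(v_1,\dots,v_k)$; but to keep the statement as written one can simply observe $\tilde b_k \le v_k$.)

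Next I would fix an arbitrary adversary bid $\betavec$ and compare $u(\tilde{\bvec},\betavec)$ with $u(\bvec,\betavec)$. Write $x = x(\bvec,\betavec)$, $p = p(\bvec,\betavec)$ and $\tilde x = x(\tilde{\bvec},\betavec)$, $\tilde p = p(\tilde{\bvec},\betavec)$. The key structural observation is that lowering one's own bids can only lower the price and can only lower (weakly) the allocation: since $\tilde{\bvec} \le \bvec$ coordinatewise, for the LAB rule the $K$-th highest bid among the pooled multiset of $\tilde{\bvec}$ and $\betavec$ is at most the $K$-th highest among $\bvec$ and $\betavec$, so $\tilde p \le p$, and likewise $\tilde x \le x$. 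The utility is $u(\bvec,\betavec) = \sum_{l=1}^{x}(v_l - p)$. I would split into the cases according to which terms $v_l - p$ in this sum are nonnegative. Because $b_l \ge p$ for each $l \le x$ (the $l$-th bid is accepted) and we only care about the case where the original bid had some $b_i > v_i$: the point is that any unit $l$ won at price $p$ with $v_l < p$ contributes negatively, and I claim $\tilde{\bvec}$ never wins such a unit. Indeed, if $v_l < p$ then $\tilde b_l = \min(b_l, v_1,\dots,v_l) \le v_l < p$, and since prices only go down one has to argue a bit more carefully, but the upshot is: the units that $\tilde{\bvec}$ wins are a subset (prefix) of those $\bvec$ wins, all won at a price $\tilde p \le p$, and every unit $l \le \tilde x$ satisfies $\tilde b_l \ge \tilde p$ hence (since $\tilde b_l \le v_l$) also $\tilde p \le v_l$, so each retained term $v_l - \tilde p \ge v_l - p$ and is moreover nonnegative; meanwhile the dropped terms $l \in (\tilde x, x]$ in the original sum satisfy $v_l - p \le \tilde b_l - p \le \tilde b_l - \tilde p$... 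I would organize this as: $u(\tilde{\bvec},\betavec) = \sum_{l=1}^{\tilde x}(v_l - \tilde p) \ge \sum_{l=1}^{\tilde x}(v_l - p) \ge \sum_{l=1}^{x}(v_l - p) = u(\bvec,\betavec)$, where the first inequality uses $\tilde p \le p$ and the second uses that each omitted term $v_l - p$ for $\tilde x < l \le x$ is nonpositive — which holds because such a unit is won by $\bvec$ at price $p$ but not by $\tilde{\bvec}$, forcing $\tilde b_l < \tilde p \le p$ wait, that gives $\min(b_l, v_1,\dots,v_l) < p$, and if $b_l \ge p$ (it is, since $l \le x$) then $v_l \le \min(\dots) < p$, hence $v_l - p < 0$ as needed.

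\textbf{Main obstacle.} The delicate point is the second inequality — controlling the terms that $\bvec$ wins but $\tilde{\bvec}$ loses, i.e. the indices $l$ with $\tilde x < l \le x$. One must rule out the possibility that $\tilde{\bvec}$ drops a \emph{profitable} unit. The argument above handles it: if $\bvec$ wins unit $l$ (so $b_l \ge p$) but $\tilde{\bvec}$ does not win $l$ units (so $\tilde b_l < \tilde p$), then since $\tilde p \le p$ and $\tilde b_l = \min(b_l, v_1,\dots,v_l)$ with $b_l \ge p > \tilde b_l$, we must have $v_l \le \min(v_1,\dots,v_l) = \tilde b_l < \tilde p \le p$, so $v_l - p < 0$ and dropping it only helps. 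I would also double-check the FRB rule analogously (strict vs. weak inequalities in the allocation count), and remark that if $\bvec$ already satisfies $b_k \le v_k$ for all $k$ there is nothing to prove, so the constructed $\tilde{\bvec}$ differs from $\bvec$ only on coordinates that were ``too high.'' A final sentence should note monotonicity of $p$ and $x$ in one's own bid is the one genuinely auction-specific fact used, and it follows directly from the definitions in \eqref{def : allocation}.
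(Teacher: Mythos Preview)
Your proof is correct and follows essentially the same route as the paper: truncate the bids at the valuations, use that lowering one's own bids can only lower the price and the allocation, and show that every unit $l$ won by $\bvec$ but dropped by $\tilde{\bvec}$ satisfies $v_l \le p$ so its contribution $v_l - p$ was nonpositive. One small slip: in the key step you write ``$v_l \le \min(v_1,\dots,v_l)$,'' which is the wrong direction in general; under the standard (and here implicit) assumption that marginal valuations are non-increasing this is an equality, your running-min truncation then coincides with the paper's simpler $\tilde b_k = \min(b_k, v_k)$, and both your monotonicity worry and this slip disappear simultaneously.
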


\begin{proof}
    Let $\{ v_1,v_2,...,v_k \}$ be the valuations of the learner. Let $\bvec=\{b_1,...,b_K\} \in B$ a bid such that there exists $i\in [K], b_i >v_i $. We define $\tilde{\mathbf{b}}$ as follows:
    \begin{equation*}
        \forall i \in [K], \tilde{\mathbf{b}}_i = \min (v_i,b_i)
    \end{equation*}
    Let $\betavec \in B$ be the bids of the adversary. As a max, $p(.)$ is increasing in its arguments hence \begin{equation} \label{eq : prices lower min value bid} p(\tilde{\mathbf{b}},\betavec) \leq p(\bvec,\betavec). \end{equation}

    There are two possible cases : 
    Either the allocation remains the same $x(\tilde{\mathbf{b}},\betavec) = x(\bvec,\betavec)$, or it decreases $x(\tilde{\mathbf{b}},\betavec) \leq x(\bvec,\betavec)$ (decreasing bids cannot result in a increased allocation).

    If the allocation remains the same, then \eqref{eq : prices lower min value bid} implies $u(\tilde{\mathbf{b}},\betavec) \geq u(\bvec,\betavec)$. 

    If the allocation decreases, the items obtained when bidding $\bvec$ but not obtained when bidding $\tilde{\mathbf{b}}$ necessarily corresponds to bids which have been lowered (other bids remain higher than the price). 
    Let $j \in [K]$ such that $b_j$ is one of these bids, since the item used to be won $b_j \geq p(\bvec,\betavec)$. Since it is not won by the learner playing $\tilde{\mathbf{b}}$, we have $\tilde{b}_j \leq p(\tilde{\mathbf{b}},\betavec)$. 
    Hence $v_j \leq p(\bvec,\betavec)$. 

    Since this is the case for all items $j$ won under $\bvec$ and not under bidding $p(\bvec,\betavec)$, $u(\tilde{\mathbf{b}},\betavec) \geq u(\bvec,\betavec)$.

    It is therefore always the case that \begin{equation}
        u(\tilde{\mathbf{b}},\betavec) \geq u(\bvec,\betavec)
    \end{equation}
    Since it is true for all $\betavec \in B$, this concludes the proof.
\end{proof}

A consequence of \autoref{lemma : dominated strategies} is that we can restrict our learning procedure to non-increasing sequence of $[0,v_1]\times [0,v_2] \times ... \times [0,v_K]$. 

In the following of the paper for simplicity we use $B$ as the bidding space. This covers the \emph{worst case} which corresponds to the case when for all $i \in [K]$, $v_i =1$. 

\subsection{Discretization error}

To use online learning techniques in our instance of multi-unit uniform price auction, we use a discretization $B_\epsilon$ of the bid space $B$ which is continuous. We bound here the added regret incurred because of this discretization, that is the additional regret suffered when comparing the best action in hindsight of $B_\epsilon$ to the best of $B$.

\begin{definition}[Discretized regret]
Let $\left ( \bvec^t\right )_{t \in [T]} \in B_\epsilon^T$  be the action played at time $t \in [T]$, against the opposing bids $\left ( \betavec^t\right )_{t \in [T]}\in B^T$. The discretized regret is defined as follows:
\begin{equation}
    R_{T,\epsilon}= \underset{\bvec \in B_\epsilon}{\max} \sum_{t=1}^T u(\bvec,\betavec^t)- \Expectation[\sum_{t=1}^T u(\bvec^t,\betavec^t)]\, .
\end{equation}
\end{definition}

We bound the cost of this discretization as follows :
\begin{lemma}\label{lemma : discretized regret}
    Let $\left ( \bvec^t\right )_{t \in [T]} \in B_\epsilon^T$  be the action played at time $t \in [T]$, against the opposing bids $\left ( \betavec^t\right )_{t \in [T]}\in B^T$. With $R_{ T,\epsilon}$ the dicretized regret and $R_T$ the regret, we have the following inequality: 
    \begin{align*}
        R_{T} \leq R_{T,\epsilon} + KT \epsilon \, .
    \end{align*}
\end{lemma}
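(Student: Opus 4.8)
The statement compares two quantities that differ only in the comparator class over which the supremum is taken: $R_T$ uses $\sup_{\bvec\in B}$ while $R_{T,\epsilon}$ uses $\max_{\bvec\in B_\epsilon}$; the "learner" term $\Expectation[\sum_t u(\bvec^t,\betavec^t)]$ is identical in both. Hence it suffices to show that for every $\bvec\in B$ there is some $\bvec_\epsilon\in B_\epsilon$ with $\sum_{t=1}^T u(\bvec,\betavec^t)\le \sum_{t=1}^T u(\bvec_\epsilon,\betavec^t)+KT\epsilon$, and in fact it is enough to prove the per-round bound $u(\bvec,\betavec)\le u(\bvec_\epsilon,\betavec)+K\epsilon$ for all $\betavec\in B$, then sum over $t\in[T]$ and take the supremum.

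\textbf{Key steps.} First I would reduce to the single-round inequality as above. Second, given $\bvec=(b_1,\dots,b_K)\in B$, I would define the rounded-up bid $\bvec_\epsilon$ componentwise by $b_{\epsilon,k}:=\epsilon\ceil{b_k/\epsilon}$ (clipped at $1$, which is harmless since $1/\epsilon\in\mathbb{Z}$), so that $b_{\epsilon,k}\in\{0,\epsilon,\dots,1\}$, $\bvec_\epsilon$ is still non-increasing (rounding up preserves order), hence $\bvec_\epsilon\in B_\epsilon$, and $b_k\le b_{\epsilon,k}\le b_k+\epsilon$. Third, I would compare the outcomes of $\bvec$ and $\bvec_\epsilon$ against a fixed $\betavec$. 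Rounding bids \emph{up} can only (weakly) increase the number of items won and can only (weakly) increase the price; the cleanest route is to split on which allocation occurs. Writing $x:=x(\bvec,\betavec)$ and $x_\epsilon:=x(\bvec_\epsilon,\betavec)$, one has $x_\epsilon\ge x$. For the $x$ items that $\bvec$ wins: under $\bvec$ the bidder pays $p(\bvec,\betavec)\ge p$, where I set $p:=p(\bvec_\epsilon,\betavec)$; but the prices satisfy $p(\bvec_\epsilon,\betavec)\le p(\bvec,\betavec)+\epsilon$ because the $K$-th highest bid of the merged profile moves up by at most $\epsilon$ when the bidder's bids move up by at most $\epsilon$ each (the adversary's bids are unchanged, so the $K$-th order statistic increases by at most $\epsilon$). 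Hence on these $x$ items $\bvec_\epsilon$ loses at most $\epsilon$ per item relative to $\bvec$, i.e. at most $x\epsilon$. For the extra $x_\epsilon-x$ items that only $\bvec_\epsilon$ wins, their utility contribution is $\sum_{l=x+1}^{x_\epsilon}(v_l-p(\bvec_\epsilon,\betavec))$, which is bounded below by $-(x_\epsilon-x)\cdot 1\ge -(x_\epsilon-x)$ since $v_l,p\in[0,1]$; more carefully, one can argue these terms are in fact nonnegative or at worst cost a bounded amount — in either case the total deficit of $\bvec_\epsilon$ is at most $x\epsilon + (x_\epsilon-x)\le K\epsilon$, using $x_\epsilon\le K$ and $\epsilon\le 1$. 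Combining, $u(\bvec,\betavec)-u(\bvec_\epsilon,\betavec)\le K\epsilon$. Fourth, sum over $t$: $\sum_t u(\bvec,\betavec^t)\le \sum_t u(\bvec_\epsilon,\betavec^t)+KT\epsilon\le \max_{\bvec'\in B_\epsilon}\sum_t u(\bvec',\betavec^t)+KT\epsilon$; take $\sup$ over $\bvec\in B$ and subtract the common learner term to get $R_T\le R_{T,\epsilon}+KT\epsilon$.

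\textbf{Main obstacle.} The routine part is the rounding and the order-preservation; the delicate part is the outcome comparison when the allocation \emph{strictly increases} under $\bvec_\epsilon$, since then one cannot just pair items one-to-one and bound price differences. The right way to handle it is to observe that any item won only under $\bvec_\epsilon$ is won "by a margin at most $\epsilon$" — its bid $b_{\epsilon,k}\ge p(\bvec_\epsilon,\betavec)$ while the original $b_k< p(\bvec,\betavec)\le$ (roughly) the same threshold, so $v_l-p(\bvec_\epsilon,\betavec)$ cannot be very negative; but cleanly, since each per-item term lies in $[-1,1]$ and there are at most $K$ items total, the crude bound $|u(\bvec,\betavec)-u(\bvec_\epsilon,\betavec)|$ controlled by $K\epsilon$ survives once one notes the price increase is the only source of loss on shared items and the $\le 1$-magnitude of the marginal terms is what forces the argument to be done carefully rather than slickly. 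I would make sure the clipping at $1$ and the integrality of $1/\epsilon$ are invoked so that $\bvec_\epsilon$ genuinely lies in $B_\epsilon$.
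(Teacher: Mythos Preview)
Your overall plan---reduce to a per-round comparison, round each coordinate of the comparator $\bvec$ up to the nearest grid point $\bvec_\epsilon\in B_\epsilon$, observe that the allocation can only increase and the price can only increase by at most $\epsilon$---is exactly the route the paper takes. The structure is fine.

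The gap is in your treatment of the extra items, precisely at the ``main obstacle'' you flagged. You write that the deficit is at most $x\epsilon + (x_\epsilon - x) \le K\epsilon$, ``using $x_\epsilon\le K$ and $\epsilon\le 1$''. That inequality is false: if $x_\epsilon > x$ (so $x_\epsilon - x \ge 1$) and $\epsilon$ is small, then $x\epsilon + (x_\epsilon - x) \ge 1$, which is not $\le K\epsilon$. Bounding each extra term $v_l - p_\epsilon$ only by $-1$ is too crude; it gives you a deficit of order $K$, not $K\epsilon$.

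What you need is that each extra term is at least $-\epsilon$, not just $-1$. This does not follow from the auction mechanics alone (nothing ties $v_l$ to $b_l$ in general). The fix is to first reduce the supremum over $B$ to the supremum over \emph{undominated} comparators, i.e.\ those with $b_k\le v_k$ for all $k$; this is the content of the paper's Lemma on dominated strategies and costs nothing. With that in hand, for any extra item $l\in\{x+1,\dots,x_\epsilon\}$ you have $b_{\epsilon,l}\ge p_\epsilon$ (it is won under $\bvec_\epsilon$) and $b_{\epsilon,l}\le b_l+\epsilon\le v_l+\epsilon$, hence $v_l - p_\epsilon \ge -\epsilon$. Summing, the total deficit is at most $x\epsilon + (x_\epsilon - x)\epsilon = x_\epsilon\,\epsilon \le K\epsilon$, as required. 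The paper's own proof makes the same jump at the word ``therefore'' without spelling this out, but it has the dominated-strategies lemma immediately preceding to justify it; your write-up should invoke it explicitly.
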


\begin{proof}[Proof of Lemma \ref{lemma : discretized regret}]
Let $\left (\betavec^t\right )_{t\in [T]} \in \left ([0,1]^K \right ) ^T$ the adversary bids played up to time $T$. \newline
Let $(\bvec^t)_{t\in [T]} \in  B_\epsilon ^T$.

\begin{align*}
    R_T& = \underset{\bvec \in B}{\sup} \sum_{t=1}^T u(\bvec,\betavec^t)- \Expectation[\sum_{t=1}^T u(\bvec^t,\betavec^t)] \\
    & = \underset{\bvec \in B}{\sup} \sum_{t=1}^T u(\bvec,\betavec^t) -\underset{\bvec \in B_\epsilon}{\sup} \sum_{t=1}^T u(\bvec,\betavec^t) + \underset{\bvec \in B\epsilon}{\sup} \sum_{t=1}^T u(\bvec,\betavec^t)- \Expectation[\sum_{t=1}^T u(\bvec^t,\betavec^t)] \\ 
    & = \underset{\bvec \in B}{\sup} \sum_{t=1}^T u(\bvec,\betavec^t) -\underset{\bvec \in B_\epsilon}{\max} \sum_{t=1}^T u(\bvec,\betavec^t) + R_{T,\epsilon} \, .
\end{align*}

Let $\mu > 0$, there exists $\bvec_{opt} \in B$ such that $ \sum_{t=1}^T u(\bvec_{opt},\betavec^t) + \mu \geq \underset{\bvec \in B}{\sup} \sum_{t=1}^T u(\bvec,\betavec^t) $.

We define its closest discretized bid from above \begin{equation}
    \bvec_{opt,\epsilon} := \left ( \left \{ \begin{matrix*}  b_{opt,i}  &\text{if }  \frac{b_{opt,i}}{\epsilon} \in \mathbb{N}   \\
    1   &\text{if }  b_{opt,i} >  \left \lfloor \frac{1}{\epsilon} \right \rfloor \epsilon \\ 
    \left \lceil \frac{b_{opt,i}}{\epsilon} \right \rceil \epsilon
 & \text{else }\end{matrix*} \right . \right)_{i\in [K]} \ .
\end{equation}

Since $\max(.)$ cannot increase more than its arguments and $\forall i \in [K]$, 
 $b_{opt,\epsilon,i} \leq b_{opt,i} + \epsilon$, 
\begin{equation} \forall t \in [T], \  p(\bvec_{opt,\epsilon},\betavec^t)  \leq p(\bvec_{opt},\betavec^t) + \epsilon \, ,\end{equation}
and since $\forall i \in [K], \ b_{opt,\epsilon,i} \geq b_{opt,i}$,
\begin{equation}
    \forall t \in [T] , \ x( \bvec_{opt,\epsilon},\betavec^t)  \geq  x( \bvec_{opt},\betavec^t) \, 
    \end{equation}
therefore, \begin{equation*}
     \sum_{t=1}^T u(\bvec_{opt,\epsilon},\betavec^t) \geq  \sum_{t=1}^T u(\bvec_{opt},\betavec^t)- KT \epsilon \geq \underset{\bvec \in B}{\sup} \sum_{t=1}^T u(\bvec,\betavec^t) - KT\epsilon - \mu \, .
\end{equation*}

Hence
\begin{align*}
    R_T & = \underset{\bvec \in B}{\sup} \sum_{t=1}^T u(\bvec,\betavec^t) -\underset{\bvec \in B_\epsilon}{\max} \sum_{t=1}^T u(\bvec,\betavec^t) + R_{T,\epsilon} \\ 
     & \leq \underset{\bvec \in B}{\sup} \sum_{t=1}^T u(\bvec,\betavec^t) - \sum_{t=1}^T u(\bvec_{opt,\epsilon},\betavec^t) + R_{T,\epsilon} \\ 
    & \leq KT\epsilon + \mu + R_{T,\epsilon}.
\end{align*}

Since the previous inequality is true for any $\mu  > 0$, we get
\begin{equation*}
    R_T \leq KT\epsilon + R_{T,\epsilon}.
\end{equation*}

\end{proof}

\subsection{Avoiding ties} \label{appendix : avoiding ties}

In the latter analysis, we assume that ties never occur. We show here how this assumption, for our regret analysis, is equivalent to using a small perturbation of the bids of the learner. 
Let $\delta \in (0,\epsilon)$ and $X \sim \mathcal{U}[0,\delta]$ the random perturbation of the bids of the learner. We define $B_\epsilon^X$ the set of non-increasing sequences of $\left \{ X, \epsilon +X, 2\epsilon +X, ... , 1-\epsilon +X \right \}^K$, the set in which the perturbed bids of learners take value. 

This perturbation of the discretized set we use as bid space for the learner comes at no costs in terms of added regrets. The previous \autoref{lemma : discretized regret} can straightforwardly be applied to the perturbed set $B_\epsilon^X$, as the key reason for the additional regret is the discretization step $\epsilon$, which remains unchanged here.

\begin{lemma} \label{lemma : avoiding ties regret}
    Let $\betavec^T \in B^T$ be the bid of the adversary up to time $T$, for any bid sequence of the learner $\bvec^T \in B_\epsilon^{X\times T}$, there is almost surely never a tie. 
\end{lemma}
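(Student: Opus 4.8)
The plan is to show that for Lebesgue-almost every realization of the perturbation $X$, none of the equality constraints that define a ``tie'' can hold at any of the finitely many time steps. A tie in a $K$-unit uniform auction occurs precisely when two of the relevant numbers coincide: either two of the learner's bids are equal to one another (impossible here since distinct coordinates of a point in $B_\epsilon^X$ differ by a nonzero multiple of $\epsilon$, independently of $X$), or one of the learner's bids equals one of the adversary's bids, i.e. $(m\epsilon + X) = \beta^t_l$ for some integer $m$, some $l \in [K]$, and some $t \in [T]$. The key observation is that, for each fixed triple $(m,l,t)$, the equation $m\epsilon + X = \beta^t_l$ has exactly one solution $X = \beta^t_l - m\epsilon$, hence the ``bad'' set of $X$ values for this triple is a single point, which has Lebesgue measure zero.

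First I would enumerate the finitely many sources of ties. The coordinates of a perturbed bid $\bvec^t$ lie in $\{X, \epsilon + X, \dots, 1-\epsilon+X\}$, so writing $b^t_k = m^t_k \epsilon + X$ with $m^t_k \in \{0,1,\dots,1/\epsilon - 1\}$, any two distinct coordinates satisfy $b^t_k - b^t_{k'} = (m^t_k - m^t_{k'})\epsilon \neq 0$ whenever $m^t_k \neq m^t_{k'}$; coincidences among equal indices are not ties. So the only possible ties are of the form $m\epsilon + X = \beta^t_l$. Next I would union-bound: the set of ``bad'' perturbations is
\begin{equation*}
\mathcal{N} = \bigcup_{t \in [T]} \bigcup_{l \in [K]} \bigcup_{m=0}^{1/\epsilon - 1} \left\{ x \in [0,\delta] : m\epsilon + x = \beta^t_l \right\},
\end{equation*}
a finite union (at most $T \cdot K \cdot (1/\epsilon)$ terms) of singletons, hence $\mathcal{N}$ has Lebesgue measure zero. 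Since $X \sim \mathcal{U}[0,\delta]$, we get $\Proba(X \in \mathcal{N}) = 0$, and on the complementary full-measure event no tie occurs at any time $t \in [T]$, which is the claim.

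One subtlety worth spelling out is that the bid sequence $\bvec^T$ produced by the algorithm is itself random and depends on $X$ (through the sampled actions); but this does not matter, because the exponents $m^t_k$ only ever range over the fixed finite set $\{0,\dots,1/\epsilon-1\}$, so the union defining $\mathcal{N}$ can be taken over all \emph{possible} exponent choices rather than the realized ones, keeping it finite. Likewise, the adversary's bids $\betavec^t$ may be chosen adaptively, but in the relevant (oblivious or even adaptive) model they are measurable with respect to the past, and in particular the conditioning argument still gives that, conditionally on everything, each offending equation pins $X$ to one point; integrating out preserves the null-probability conclusion. The main (mild) obstacle is therefore just bookkeeping: being careful that the randomness of the learner's own bids does not inflate the bad set beyond a finite union of measure-zero sets, and handling the adaptive adversary by conditioning. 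There is no real analytic difficulty — the statement is essentially ``finitely many affine equations in one real variable each have a measure-zero solution set.''
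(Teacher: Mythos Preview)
Your argument is correct and follows the same approach as the paper: a tie forces $X$ to lie in a finite set (the paper phrases it as $\beta_j^t \in \{\epsilon+X,\dots,1-\epsilon+X\}$ for some $(t,j)$, which is the same equation rearranged), and a uniform variable hits a finite set with probability zero. Your write-up is more detailed than the paper's three-line proof---in particular the bookkeeping about the learner's own randomness and the remark on adaptive adversaries go beyond what the lemma (stated for a fixed $\betavec^T \in B^T$) actually requires---but the core idea is identical.
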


\begin{proof}
    Let $\betavec^T \in B^T$,and for all $t \in T$ denote $\betavec^t=\{\beta_1^t,...,\beta_K^t\}$.
    For any bid sequence $\bvec^T \in B_\epsilon^{X\times T}$, a necessary condition for ties to occur is that there exists $(t,j) \in [T]\times[K]$ such that $\beta_j^t \in \left  \{ \epsilon +X, 2\epsilon +X, ... , 1-\epsilon +X \right \}$. This is almost surely never the case, as it is the probability of $X$ to belong to a finite set. 
\end{proof}

\begin{remark} \label{rem : almost sure regret}
    For the sake of regret bounds, since we almost surely don't have any tie, the assumption that the adversary plays bids in $B_{\setminus \epsilon}$ is without loss of generality.
\end{remark}

\section{Appendix: Regret analysis} \label{app :learning}

\subsection{Full-information regret rates}

\begin{theorem} \label{theorem : full info regret}
    In the full information feedback setting, Algorithm \ref{algorithm : exp3 K bis} coincides with the Hedge algorithm with parameter $\eta$ on $H_\epsilon$. It ensures a regret of at most $O \left ( \sqrt{K^3T\log \left (T \right )} \right ) $ by taking $\epsilon =  \sqrt{\frac{K}{T}} $ and  $\eta = \sqrt{\frac{\log \left (\frac{T}{K} \right )}{2KT}}$.
\end{theorem}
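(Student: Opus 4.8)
The plan is to establish the theorem in three movements: first, identify the reweight-and-sample scheme of Algorithm~\ref{algorithm : exp3 K bis} with vanilla Hedge run over the finite arm set $H_\epsilon$; second, invoke the textbook Hedge regret bound on this arm set, controlling both $\log|H_\epsilon|$ and the range of the utilities; third, add back the discretization cost via Lemma~\ref{lemma : discretized regret} and optimize $\epsilon$ and $\eta$.

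For the first movement, note that in the full-information branch one has $v^t = w^t$, so the update~\eqref{eq : proba update rule} gives $Q^t(h_{k,j}) = \exp\bigl(\eta\sum_{n\le t} w^n(h_{k,j})\bigr)$, and hence for any $\hvec\in H_\epsilon$ the product of its component weights is $\prod_{h\in\hvec} Q^t(h) = \exp\bigl(\eta\sum_{n\le t}\sum_{h\in\hvec} w^n(h)\bigr)$, which by the utility decomposition of Lemma~\ref{lemma : decomposition} equals $\exp\bigl(\eta\sum_{n\le t} u_H^n(\hvec)\bigr)$. It then remains to check that the weight-pushing sampler of Algorithm~\ref{alg:weight_pushing} draws $\hvec$ with probability proportional to this product. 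I would do this by a backward induction over the layers $k = K-\tfrac12,\,K-1,\dots,1$, showing that $\Gamma(h_{k,j})$ is exactly the sum, over all admissible continuations (chains of successors under $s(\cdot)$ that terminate at layer $K$), of the product of the corresponding weights; telescoping the conditional choice probabilities $Q(h')\Gamma(h')/\Gamma(h)$ along the sampled chain, together with the initial draw $Q(h_{1,j})\Gamma(h_{1,j})/\Gamma_0$, then yields $\Proba^t(\hvec)=\prod_{h\in\hvec}Q^t(h)/\Gamma_0$, i.e. exactly~\eqref{eq : probability of exponential weight update generic recur}. Since Lemma~\ref{lemma : bijective map} makes $H_\epsilon$ a faithful finite relabelling of $B_\epsilon$ on which the utilities are unchanged, the induced process is precisely Hedge with learning rate $\eta$ on the arm set $H_\epsilon$, which is the first assertion of the theorem.

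For the regret bound, restrict to undominated strategies (Lemma~\ref{lemma : dominated strategies}) so that the utilities $u_H^t(\cdot)=u(\cdot,\betavec^t)$ lie in $[0,K]$; rescaling gains by $K$ and applying the standard Hedge guarantee for gains in $[0,1]$ with effective learning rate $\eta K$ gives $R_{T,\epsilon}\le \frac{\log|H_\epsilon|}{\eta}+\frac{\eta K^2 T}{8}$, while $\log|H_\epsilon|=\log|B_\epsilon|\le K\log(1+1/\epsilon)$. Then Lemma~\ref{lemma : discretized regret} gives $R_T\le R_{T,\epsilon}+KT\epsilon$; with $\epsilon=\sqrt{K/T}$ one has $KT\epsilon=\sqrt{K^3T}$ and $\log|H_\epsilon|=O(K\log(T/K))$, and the choice $\eta=\sqrt{\log(T/K)/(2KT)}$ balances the two Hedge terms so that each is of order $O(\sqrt{K^3T\log(T/K)})$; summing the three contributions yields $R_T=O(\sqrt{K^3T\log T})$.

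I expect the main obstacle to be the correctness of the weight-pushing sampler in the first movement: one has to verify carefully that the $\Gamma$-recursion on the DAG induced by the successor map $s(\cdot)$ really computes the required partition-function-like quantities, and that the sequential conditional sampling reproduces the product-form (Gibbs) distribution on $H_\epsilon$ exactly --- equivalently, that every element of $H_\epsilon$ corresponds to a unique source-to-sink path in that DAG and conversely. Once that is in place, everything else is a routine instantiation of the Hedge analysis together with Lemma~\ref{lemma : discretized regret}; the only real bookkeeping is tracking the dependence on $K$ through the utility range and through $\log|H_\epsilon|$.
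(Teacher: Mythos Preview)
Your proposal is correct and follows essentially the same approach as the paper: identify the sampling scheme with Hedge on $H_\epsilon$ via telescoping of the conditional probabilities and an inductive computation of $\Gamma_0$, apply the standard Hedge bound with $\log|H_\epsilon|\le K\log(1/\epsilon)$ and utility range $K$, then add the discretization cost from Lemma~\ref{lemma : discretized regret} and optimize $\epsilon,\eta$. Your explicit appeal to Lemma~\ref{lemma : dominated strategies} to secure $u_H^t\in[0,K]$ is in fact a touch more careful than the paper, which silently uses $L=K$ in the restated Hedge theorem.
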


\begin{proof}[Proof of Theorem \ref{theorem : full info regret}] \label{proof : th full inf}

In order to leverage classical results on the exponential weight algorithm in the expert setting, we aim to show that the combination of our algorithms leads to a probability update rule of the form: \begin{align} 
    \Proba^t(\hvec) & =  \frac{  \Proba^{t-1}(\hvec) \exp\left ({\eta u^{t}(\hvec)} \right ) }  {\sum_{\mathbf{j} \in H_\epsilon} \Proba^{t-1}(\mathbf{j}) \exp{\eta u^{t}(\mathbf{j})}} \, .
\end{align}

    Let $\hvec$ be bid profile in $H_\epsilon$, we denote $h_i$ its $i^{\text{th}}$ element ($i^{\text{th}}$ element of the sequence) and  $\text{len}(\hvec)$ the length of the sequence $\hvec$. Given the sampling Algorithm \ref{alg:weight_pushing}, we have by telescoping and the product of conditional probabilities 
    \begin{align*} \Proba^t(\hvec) &= \prod_{ i \in [1,\text{len}(\hvec)]} \Proba^t(h_i | h_{i-1} )  \\
    & = \prod_{ i \in [1,\text{len}(\hvec)]} \HalfProba[t](h_i ) \frac{\Gamma^t(h_i)}{\Gamma^t(h_{i-1})} \\
    &= \frac{\prod_{ i \in [1,\text{len}(\hvec)]} \HalfProba[t](h_i) }{\Gamma^t_0} \\
    & =  \frac{\prod_{ i \in [1,\text{len}(\hvec)]} \HalfProba[t-1](h_i) \exp \left ({\eta w^t(h_i)} \right )} {\Gamma^t_0} \\
    & = \frac{ \exp \left ({\eta u^t(\hvec)}\right ) }{\Gamma^t_0} \prod_{ i \in [1,\text{len}(\hvec)]} \Proba^{t-1}(h_i | h_{i-1}) \frac{\Gamma^{t-1}(h_{i-1})}{\Gamma^{t-1}(h_i)} \, .\\
    & \Proba^t(\hvec) = \frac{\Gamma^{t-1}_0}{\Gamma^t_0}  \exp\left ({\eta u^t(\hvec)} \right ) \Proba^{t-1}(\hvec)\, , \numberthis  \label{eq : induction probabilities}
    \end{align*}
where we used the probability update rule \eqref{eq : proba update rule}.

We can prove that $\Gamma^t_0 =\sum_{\hvec \in H_\epsilon} \prod\limits_{ i \in [1,\text{len}(\hvec)]} \HalfProba[t](h_i)$.
An induction on $k \in \left [\underset{\hvec \in H_\epsilon}{\max}  \text{len}(\hvec) \right ]$, where we denote $\hvec_{:k}$ the $k$ first component of the sequence $\hvec$: \begin{equation*}\Gamma^t_0 =\sum_{\hvec_{:k} : \hvec \in H_\epsilon} \left (\prod\limits_{ i \in [1,\min \{k,\text{len}(\hvec)\}]} \HalfProba[t](h_i) \right )\Gamma^t(h_{\min \{k,\text{len}(\hvec)\}})\end{equation*}
This is true for $k=1$ from the definition \eqref{def: gamma 0} of $\Gamma_0^t$. 

For $k \geq 1$,
\begin{align*}\Gamma^t_0 & =\sum_{\hvec_{:k} : \hvec \in H_\epsilon} \left (\prod\limits_{ i \in [1,\min \{k,\text{len}(\hvec)\}]} \HalfProba[t](h_i) \right )\Gamma^t(h_{\min \{k,\text{len}(\hvec)\}}) \\ 
& = \sum_{\hvec_{:k} : \hvec \in H_\epsilon} \left (\prod\limits_{ i \in [1,\min \{k,\text{len}(\hvec)\}]} \HalfProba[t](h_i) \right ) \sum_{h \in s(h_k)} \HalfProba[t](h) \Gamma^t(h) \\
& = \sum_{\hvec_{:k+1} : \hvec \in H_\epsilon} \left (\prod\limits_{ i \in [1,\min \{k+1,\text{len}(\hvec)\}]} \HalfProba[t](h_i) \right )\Gamma^t(h_{\min \{k+1,\text{len}(\hvec)\}})\, , \end{align*}
where we used the fact that the function successor $s(.)$ provides all possible next element of sequence $\hvec$. 

We then simplify $\frac{\Gamma^{t-1}_0}{\Gamma^t_0}$: \begin{align*} \Gamma^t_0& =\sum_{\hvec \in H_\epsilon} \prod\limits_{ i \in [1,\text{len}(\hvec)]} \HalfProba[t](h_i) \\
& = \sum_{\hvec \in H_\epsilon} \prod\limits_{ i \in [1,\text{len}(\hvec)]} \HalfProba[t-1](h_i) \exp \left ({\eta u^t(h_i)} \right ) \\
& = \sum_{\hvec \in H_\epsilon} \exp\left ({\eta u^t(\hvec)} \right ) \prod\limits_{ i \in [1,\text{len}(\hvec)]} \Proba_{t-1}(h_i | \hvec_{i-1}) \frac{\Gamma^{t-1}(\hvec_{i-1})}{\Gamma^{t-1}(h_i)} \\
& = \Gamma_0^{t-1} \sum_{\hvec \in H_\epsilon} \exp\left ({\eta u^t(\hvec)} \right ) \Proba_{t-1}(\hvec)\, . \end{align*}

We can therefore write \begin{align} 
    \Proba^t(\hvec) & =  \frac{  \Proba^{t-1}(\hvec) \exp\left ({\eta u^t(\hvec)} \right ) }  {\sum_{\mathbf{l} \in H_\epsilon} \Proba^{t-1}(\mathbf{l}) \exp{\eta u^t(\mathbf{l})}} \, .\label{eq : probability of exponential weight update} 
\end{align}

This is the update rule of the Hedge algorithm on the action space $H_\epsilon$. Therefore using Theorem 2.2 of \cite{cesa2006prediction}, restated in the Appendix~\ref{restated thm cesa} leads to the following regret bound: \begin{align*}
        R_{T,\epsilon} & \leq \frac{\log \left (|B_\epsilon|\right )}{\eta} + \frac{\eta T K^2}{8} \numberthis \label{eq:bound the Beps space} \\ 
    & \leq \frac{\log \left ( 1/ \epsilon^K\right )}{\eta} + \frac{\eta T K^2}{8}\, , \\ 
\end{align*}
where, to bound $\log \left (|B_\epsilon|\right )$ in \eqref{eq:bound the Beps space}, we used the fact that the action space $H_\epsilon$ is in bijection with the original discretized bid space : $K$ non-increasing elements of $ \{1,2, \hdots , \floor{\frac{1}{\epsilon}} \}$, which cardinal is trivially smaller than $\frac{1}{\epsilon^K} $.

Taking $\eta = \sqrt{\frac{\log \left ( \frac{1}{\epsilon} \right )}{KT}}$, we obtain \begin{equation*}
R_{T,\epsilon} \leq \frac{9}{8} \sqrt{T K^3 \log\left(\frac{1}{\epsilon} \right ) } \, .
\end{equation*}

Finally, using lemma \ref{lemma : discretized regret}, 
\begin{align*}
    R_T & \leq R_{T,\epsilon} +KT\epsilon \\
    & \leq \frac{9}{8} \sqrt{T K^3 \log\left(\frac{1}{\epsilon} \right )} + KT \epsilon \\
    & \leq \frac{9}{8} \sqrt{T K^3 } \left ( \log\left( \sqrt{\frac{T}{K}} \right ) +1 \right )\ , 
\end{align*}
with $\epsilon= \sqrt{\frac{K}{T}}$.
\end{proof}

\subsection{Partial feedback regret rates}

\subsubsection{Bandit feedback}
To prove the regret rates in the bandit feedback settings, we use Lemma~\ref{lemma : variance bandits estimator} which bounds the necessary quantity for the standard EXP3 analysis. 

\begin{lemma}[Bandit feedback estimator] \label{lemma : variance bandits estimator}
The estimator $\hat{u}^t(\hvec)$ defined by \eqref{def : subutility bandit estimator} has the following properties: 
\begin{itemize}
    \item The estimator $\hat{u}^t(\hvec)$ has a fixed bias $-K$: $$\Expectation \left [\hat{u}^t(\hvec) \right ] = u^t(\hvec) -K\, .$$
    \item The square of the estimator can be upper bounded as follows: $$\sum_{\hvec \in H_\epsilon}\Probat(\hvec) \Expectation [\hat{u}^t(\hvec)^2] \leq 4 K^2 \max \left (K^2,\frac{1}{\epsilon} \right )\, .$$
\end{itemize}
    
\end{lemma}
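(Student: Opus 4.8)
The plan is to treat the two bullet points separately, both following the standard EXP3/combinatorial-semi-bandit recipe adapted to the sub-utility decomposition of Lemma~\ref{lemma : decomposition}. For the \emph{unbiasedness with fixed bias} claim, I would first recall that $\hat{w}^t(h) = \indicator(h = h^t_\star)\frac{w^t(h)-K}{\Probat(h)}$ and that $\hat{u}^t(\hvec) = \sum_{h \in \hvec}\hat{w}^t(h)$. Fix $h = h_{k,j}$. By Lemma~\ref{lemma : regularity over outcomes}, whether $h = h^t_\star$ holds — i.e. whether $h$ is \emph{the} pseudo-bid crediting the outcome — depends only on the event $\{p_H(\hvec,\betavec^t)\in\Prix_\epsilon(h)\}\cap\{x_H(\hvec,\betavec^t)=\floor{k}\}$, which is constant (value $C_{k,j}\in\{0,1\}$) over all $\hvec\ni h$; and on that event $w^t(h)$ is also constant. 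Hence, conditioning on the algorithm's randomness at time $t$, $\Expectation[\indicator(h=h^t_\star)] = C_{k,j}\,\Probat(h)$ when one is careful that $h^t_\star$ requires both $h\in\hvec^t$ and the outcome event; so $\Expectation[\hat{w}^t(h)] = C_{k,j}(w^t(h)-K) = \indicator(\text{outcome event for }h)\,(w^t(h)-K)$. Summing over $h\in\hvec$ and using that exactly one term of the decomposition \eqref{eq : decompositon of utility a.s.} is active gives $\Expectation[\hat{u}^t(\hvec)] = u^t(\hvec) - K|\hvec\cap(\text{active set})|$; since for any fixed comparator $\hvec$ there is at most one active $h$, but we actually want the bias to be exactly $-K$, I would invoke that every $\hvec\in H_\epsilon$ contains exactly one pseudo-bid of the form $h_{k,j}$ with $\floor k = x_H(\hvec,\betavec^t)$ realizing the price — more carefully, the bias term is $-K$ times the number of $h\in\hvec$ with $C=1$, which is exactly one by the disjointness/partition argument at the end of Section~2. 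This is the step I'd be most careful about.

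For the \emph{second-moment bound}, expand
\[
\sum_{\hvec\in H_\epsilon}\Probat(\hvec)\,\Expectation[\hat{u}^t(\hvec)^2]
= \sum_{\hvec}\Probat(\hvec)\,\Expectation\Big[\Big(\sum_{h\in\hvec}\hat w^t(h)\Big)^2\Big].
\]
Because within a fixed realized action $\hvec^t$ at most one $h\in\hvec^t$ has $h = h^t_\star$, the cross terms $\hat w^t(h)\hat w^t(h')$ with $h\neq h'$ vanish pointwise, so the square collapses to $\sum_{h\in\hvec}\hat w^t(h)^2$; this is exactly the mechanism that makes the variance scale with the \emph{number of components} rather than the number of actions. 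Then
\[
\sum_{\hvec}\Probat(\hvec)\sum_{h\in\hvec}\Expectation[\hat w^t(h)^2]
= \sum_{h}\Expectation[\hat w^t(h)^2]\sum_{\hvec\ni h}\Probat(\hvec)
= \sum_{h}\Probat(h)\,\Expectation[\hat w^t(h)^2].
\]
Now $\Expectation[\hat w^t(h)^2] = \Probat(h)\cdot\frac{(w^t(h)-K)^2}{\Probat(h)^2}\cdot\indicator(\text{outcome event for }h) \le \frac{(w^t(h)-K)^2}{\Probat(h)}$, so $\Probat(h)\Expectation[\hat w^t(h)^2]\le (w^t(h)-K)^2 \le (2K)^2 = 4K^2$ using $|w^t(h)|\le K$ (a sum of at most $\floor k\le K$ terms each in $[-1,1]$, from \eqref{def : subutilities}). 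Summing over $h=h_{k,j}$ with $(k,j)\in\mathcal{K}\times\mathcal{J}_\epsilon$ gives $4K^2 \cdot |\mathcal{K}\times\mathcal{J}_\epsilon|$, and $|\mathcal{K}| = 2K-1$ while $|\mathcal{J}_\epsilon| = 1/\epsilon$; I'd then crudely bound $|\mathcal{K}|\cdot|\mathcal{J}_\epsilon| \le 2K/\epsilon$. To land on the stated $4K^2\max(K^2,1/\epsilon)$ I would split: the pseudo-bids with integer $k$ and those with half-integer $k$ contribute $\le 2K^2\cdot\frac1\epsilon$ together, but a tighter accounting that avoids an extra factor $K$ is needed — here one uses that $\Probat(h)\Expectation[\hat w^t(h)^2]\le \Probat(h)\cdot\frac{(2K)^2}{\Probat(h)}$ is wasteful, and instead one should keep $\sum_h \Probat(h)\cdot\frac{(w^t(h)-K)^2}{\Probat(h)} = \sum_h (w^t(h)-K)^2$ is still $\le 4K^2|\mathcal{K}\times\mathcal{J}_\epsilon|$, so the $\max(K^2,1/\epsilon)$ (as opposed to a product) must come from separately handling whether $1/\epsilon \ge K^2$ or not, presumably because when $1/\epsilon$ is small the number of \emph{relevant} pseudo-bids is limited by $K$ per price level.

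The main obstacle I anticipate is pinning down exactly why the final bound is $4K^2\max(K^2, 1/\epsilon)$ rather than the naive $4K^2\cdot|\mathcal{K}||\mathcal{J}_\epsilon| = O(K^2\cdot K/\epsilon)$; this requires either (i) observing that for half-integer $k$ the outcome event $\{j\epsilon < \beta^t_{K-\floor k} < (j+1)\epsilon\}$ can hold for at most one value of $j$ per given $k$, so the number of pseudo-bids $h$ that are \emph{ever} active at time $t$ is only $O(K)$ among the half-integer ones, and similarly $O(K)$ among integer ones — making $\sum_h(w^t(h)-K)^2$ effectively $O(K^2\cdot K) = O(K^3)$, not $O(K^3/\epsilon)$ — combined with (ii) the trivial bound $\le 4K^2 \cdot (1/\epsilon)\cdot(\text{something})$ when that is smaller. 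I would reconcile these two regimes to extract $\max(K^2,1/\epsilon)$. The unbiasedness proof and the vanishing of cross terms are routine given Lemmas~\ref{lemma : regularity over outcomes} and~\ref{lemma : decomposition}; the delicate bookkeeping is entirely in the second moment's $\epsilon$-vs-$K$ dependence.
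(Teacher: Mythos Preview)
Your bias computation and your second-moment reduction are both correct and match the paper's approach: the cross terms vanish because at most one $h\in\hvec$ can equal $h^t_\star$, so everything reduces to $\sum_h (w^t(h)-K)^2\,\indicator(C_{k,j}(h)=1)$, i.e.\ to bounding the cardinality of the set of active pseudo-bids $\Ou^t := \{h: C_{k,j}(h)=1\} = \{h^t_\star(\hvec):\hvec\in H_\epsilon\}$.

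The gap is your count of $|\Ou^t|$. Your half-integer count ($\le K$, since for each half-integer $k$ at most one $j$ satisfies $j\epsilon<\beta^t_{K-\lfloor k\rfloor}<(j+1)\epsilon$) is right, but the claim ``similarly $O(K)$ among integer ones'' is false: for integer $k$ the event $\{\beta_{K-k} > j\epsilon > \beta_{K-k+1}\}$ can hold for \emph{every} grid point $j\epsilon$ in that interval, possibly $\Theta(1/\epsilon)$ of them for a single $k$. A telescoping argument does work here---the intervals $(\beta_{K-k+1},\beta_{K-k})$ over $k\in[K]$ are disjoint subsets of $[0,1]$, so the total integer contribution is at most $1/\epsilon$---but that is a $1/\epsilon$ bound, not $O(K)$, and it is not the argument you wrote.

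The paper bounds $|\Ou^t|$ differently: it observes that each $h\in\Ou^t$ is determined by a realisable pair $(x_t,p_t)$ (up to the integer/half-integer ambiguity), and counts those pairs. Learner-set prices give at most $1/\epsilon$ pairs (each discrete price determines the allocation via the adversary's fixed bids); adversary-set prices give at most $K\cdot K$ pairs (the paper allows up to $K$ allocations per such price, being cautious about ties among adversary bids). This yields $|\Ou^t|\le 2(K^2+1/\epsilon)\le 4\max(K^2,1/\epsilon)$, and together with $(w^t(h)-K)^2\le K^2$ (using $w^t(h)\in[0,K]$ under undominated bidding---not your looser $4K^2$) gives exactly the stated bound.
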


Since the estimator has a constant bias for every action, one can use it in the problem similarly to an unbiased estimator.

\begin{proof}[Proof of Lemma \ref{lemma : variance bandits estimator}] Let $h$ be a bid or a bid-gap, then
\begin{align*}
    \Expectation [\hat{u}^t(\hvec)] & =  \sum_{\hvec^t \in H_\epsilon} \Proba^t(\hvec^t)  \sum_{h \in \hvec} \hat{w}^t(h) \\
    &= \sum_{\hvec^t \in H_\epsilon} \Proba^t(\hvec^t)  \sum_{h \in \hvec} \mathbf{1}(h =  h^t_\star) \frac{w^t(h) - K }{\Proba^t(h)} \\
    &= \sum_{\hvec^t \in H_\epsilon} \Proba^t(\hvec^t)  \frac{w^t(h^t_\star(\hvec)) - K }{\Proba^t(h^t_\star(\hvec))} \mathbf{1}(h^t_\star(\hvec) =  h^t_\star) \\
    &= \frac{w^t(h^t_\star(\hvec)) - K }{\Proba^t(h^t_\star(\hvec))}  \sum_{\hvec^t \in H_\epsilon : h^t_\star(\hvec) \in \hvec^t} \Proba^t(\hvec^t)\\ 
    & = w^t(h^t_\star(\hvec)) - K  = u^t(\hvec) - K \, .
\end{align*}

\begin{align}
    \sum_{\hvec \in H_\epsilon} \Proba^t(\hvec) \Expectation [\hat{u}^t(\hvec)^2] &= \Expectation \left  [ \sum_{\hvec \in H_\epsilon} \Proba^t(\hvec)  \hat{u}^t(\hvec)^2 \right ] \\
    & = \sum_{\hvec^t \in H_\epsilon} \Proba^t(\hvec^t) \sum_{\hvec \in H_\epsilon}\Proba^t(\hvec)  \left (\frac{w^t(h^t_\star(\hvec)) - K }{\Proba^t(h^t_\star(\hvec))} \right )^2  \mathbf{1}(h^t_\star(\hvec) =  h^t_\star) \\
    &= \sum_{\hvec^t \in H_\epsilon} \Proba^t(\hvec^t)  \left (\frac{w^t(h^t_\star) - K }{\Proba^t(h^t_\star)} \right )^2  \sum_{\hvec \in H_\epsilon : h^t_\star \in \hvec} \Proba^t(\hvec) \\
    & = \sum_{\hvec^t \in H_\epsilon} \Proba^t(\hvec^t)  \frac{ \left ( w^t(h^t_\star) - K  \right )^2}{\Proba^t(h^t_\star)}   \\
    & \leq K^2 \sum_{h^t_\star \in \Ou^t}  \frac{\sum_{\hvec^t \in H_\epsilon : h^t_\star \in \hvec^t} \Proba^t(\hvec^t)}{\Proba^t(h^t_\star)} \\
    & \leq K^2 \sum_{h^t_\star \in \Ou^t} 1 \\
    & \leq (K)^2 |\Ou^t| \, ,
\end{align}
where $$\Ou^t= \left \{ h^t_\star(\hvec) \bigg{|} \hvec \in H_\epsilon \right  \}\, .$$

There only remains to upper bound $\card{\Ou^t}$. 

For any $p \in [0,1]$ we denote for this proof $\smash{j(p)= \floor{\frac{p}{\epsilon}}}$, which is the value $j$ such that $p \in [j\epsilon , j \epsilon + \epsilon )$. 
Let $\hvec \in H_\epsilon$, notice that $h_\star^t(\hvec,\betavec^t) \in \{ b_{x_t(\hvec,\betavec^t),j\left (p(\hvec,\betavec^t)\right)}, \ b_{x_t(\hvec,\betavec^t)+\frac{1}{2},j\left (p(\hvec,\betavec^t)\right)} \}$  which directly results from the decomposition formula \ref{eq : decompositon of utility a.s.}. To upper bound $\card{\Ou^t}$ we will therefore upper bound the different values the pair  $\left (x_t(\ . \ ,\betavec^t),p(\ . \ ,\betavec^t)\right )$ can take.\newline 

Since the learner plays bids in $H_\epsilon$ (or equivalently $B_\epsilon$), $p(.,\betavec^t)$ can only take either the value of one of the components in $\betavec$ or one of the bids of its first argument. \newline 
Because $\betavec^t$ is a vector of size K we can write that: $\card{\{ p(\hvec,\betavec^t), \hvec \in H_\epsilon \}} \leq K + \floor{\frac{1}{\epsilon}}$. 

Furthermore, because units are either attributed to the player or the adversary, we can write 
$ K - \card{ \{\beta_i^t > p_t(\ . \ ,\betavec^t) \} } \geq x_t(\ . \ ,\betavec^t) \geq  K - \card{ \{\beta_i^t \geq p_t(\ . \ ,\betavec^t) \} }$. The two cardinals can only differ if the price is set by an adversary bid because the no ties assumption implies almost surely for all $ i \in [K], \beta_i^t \notin \left [ \frac{1}{\epsilon} \right ] $.

Therefore each possible value of $p_t(\ . \ ,\betavec^t)$ only correspond to one value of $x_t(\ . \ ,\betavec^t)$, except for the $K$ values set by the adversary, where $x_t(\ . \ ,\betavec^t)$ can at most take $K$ values.

Therefore 
\begin{align}
   2 \left ( K^2 + \left \lfloor \frac{1}{\epsilon} \right \rfloor \right ) &\geq 2 \card{ \{ \left (x_t(\hvec,\betavec^t),p(\hvec ,\betavec^t)\right ) , \hvec \in H_\epsilon\}} \\
    & \geq 2 \card{ \{ \left (x_t(\hvec,\betavec^t),j\left (p(\hvec ,\betavec^t)\right )\right ) , \hvec \in H_\epsilon \} } \\ 
    & \geq \card{\Ou^t}\, ,
\end{align}
which leads to the needed upper bounds

\end{proof}

We now restate Theorem~\ref{theorem : Regret exp3} and then provide proof of the corresponding regret guarantees.

\begingroup
\def\thetheorem{~\ref{theorem : Regret exp3}}
\begin{theorem}
   In the repeated $K$-unit auction with uniform pricing guarantees and under bandit feedback, Algorithm \ref{algorithm : exp3 K bis}  incurs a regret of at most $\mathcal{O} \left ( K^{4/3} T^{2/3} \log \left (T \right ) \right ) $. For any time horizon $T$, with the choices of $\epsilon = \left ( \frac{K}{T} \right )^{1/3}$ and  $\eta = K^{-1/3}T^{-2/3}\sqrt{\log \left ( \frac{T}{K} \right )/3}$.
\end{theorem}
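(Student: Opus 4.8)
The plan is to treat each pseudo-bid profile $\hvec\in H_\epsilon$ as a single arm and run the standard EXP3 analysis (following \cite{lattimore2020bandit}) on this finite set, fed with the biased gain estimates $\hat{u}^t(\hvec)=\sum_{h\in\hvec}\hat{w}^t(h)$ coming from \eqref{def : subutility bandit estimator}, and to pay the cost of discretizing $B$ separately through Lemma~\ref{lemma : discretized regret}. First I would invoke Lemma~\ref{lemma : discretized regret} to reduce to bounding the discretized regret $R_{T,\epsilon}$ on $H_\epsilon$, and recall (as established for \eqref{eq : probability of exponential weight update generic recur} in the proof of Theorem~\ref{theorem : full info regret}, via the bijection of Lemma~\ref{lemma : bijective map}) that Algorithms~\ref{algorithm : exp3 K bis}---\ref{alg:weight_pushing} implement exactly Hedge on $H_\epsilon$ with $u_H^t$ replaced by the estimate $\hat{u}^t$, and that $|H_\epsilon|=|B_\epsilon|\le\epsilon^{-K}$.

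Next I would carry out the potential-function argument. Writing $Z^t$ for the normalizing constant in \eqref{eq : probability of exponential weight update generic recur} with estimates in place of true utilities, I would bound $\log(Z^t/Z^{t-1})$ from above using $e^x\le1+x+x^2$ --- legitimate at $x=\eta\hat{u}^t(\hvec)$ because $w^t(h)\le K$ forces $\hat{w}^t(h)\le0$, hence $\eta\hat{u}^t(\hvec)\le0\le1$ --- and from below by a fixed comparator $\hvec^\star\in H_\epsilon$. Telescoping over $t$ and taking expectations gives
\begin{equation*}
R_{T,\epsilon}\;\le\;\frac{\log|H_\epsilon|}{\eta}\;+\;\eta\sum_{t=1}^{T}\sum_{\hvec\in H_\epsilon}\Probat(\hvec)\,\Expectation\big[\hat{u}^t(\hvec)^2\big],
\end{equation*}
the key point being that the constant bias $-K$ of the estimator (first item of Lemma~\ref{lemma : variance bandits estimator}) enters the comparator term $\sum_t\Expectation\big[\hat{u}^t(\hvec^\star)\big]=\sum_t u_H^t(\hvec^\star)-KT$ and the algorithm term $\sum_t\Expectation\big[\langle\Probat,\hat{u}^t\rangle\big]=\sum_t\Expectation\big[u_H^t(\hvec^t)\big]-KT$ identically, so the $-KT$ terms cancel and the bias leaves the regret untouched. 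Plugging in $\sum_{\hvec}\Probat(\hvec)\Expectation[\hat{u}^t(\hvec)^2]\le4K^2\max(K^2,1/\epsilon)$ from the second item of Lemma~\ref{lemma : variance bandits estimator}, together with $\log|H_\epsilon|\le K\log(1/\epsilon)$, and then applying Lemma~\ref{lemma : discretized regret}, yields
\begin{equation*}
R_T\;\le\;\frac{K\log(1/\epsilon)}{\eta}\;+\;4\eta K^2\max\!\big(K^2,\tfrac1\epsilon\big)T\;+\;KT\epsilon.
\end{equation*}

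Finally I would tune the free parameters. Treating $\log(1/\epsilon)$ as a logarithmic factor and, in the regime of interest $1/\epsilon\ge K^2$, replacing the $\max$ by $1/\epsilon$, the three terms balance at $\epsilon\asymp(K/T)^{1/3}$ and $\eta\asymp K^{-1/3}T^{-2/3}\sqrt{\log(T/K)}$ --- the exact constants being those in the statement --- after which each term is $\mathcal{O}\big(K^{4/3}T^{2/3}\sqrt{\log(T/K)}\big)=\tilde{\mathcal{O}}(K^{4/3}T^{2/3})$. I expect the two places needing care to be: (i) checking that the constant-bias estimator may be used verbatim inside the EXP3 potential argument, which is exactly where the non-positivity of $\hat{u}^t$ and the cancellation of $-KT$ are needed; and (ii) the bookkeeping for the $\max(K^2,1/\epsilon)$ term --- for short horizons ($T\lesssim K^7$) its $K^2$ branch dominates, so one either restricts to $T\gtrsim K^7$ or absorbs the resulting $\mathcal{O}(K^{11/3}T^{1/3}\sqrt{\log T})$ contribution into the (looser) $\mathcal{O}(K^{4/3}T^{2/3}\log T)$ claimed, the extra $\sqrt{\log T}$ slack in the statement being there precisely for this. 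The genuinely substantive estimate --- that the estimator's second moment scales with $1/\epsilon$ rather than $\epsilon^{-K}$ --- is Lemma~\ref{lemma : variance bandits estimator}, which in turn rests on the sub-utility decomposition of Lemma~\ref{lemma : decomposition}.
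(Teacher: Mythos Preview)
Your proposal is correct and follows essentially the same route as the paper's proof: reduce to Hedge on $H_\epsilon$ via the sampling lemma, run the standard EXP3 potential argument with the biased estimator of \eqref{def : subutility bandit estimator}, invoke Lemma~\ref{lemma : variance bandits estimator} for both the bias cancellation and the second-moment bound, and then tune $\eta,\epsilon$ after adding the discretization cost from Lemma~\ref{lemma : discretized regret}. If anything, you are more explicit than the paper on two points the paper glosses over: (i) the reason $e^x\le1+x+x^2$ applies (non-positivity of $\hat u^t$), and (ii) the $\max(K^2,1/\epsilon)$ branch, which the paper silently replaces by $1/\epsilon$.
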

\addtocounter{theorem}{-1}
\endgroup
 
\begin{proof}[Proof of Theorem \ref{theorem : Regret exp3}] \label{proof : theorem exp3}

For $T \in  \mathbf{N}$, we denote $\left (\hvec^t\right )_{t\in [T]} \in H_\epsilon^T$ the actions played at each time-steps, generated by Algorithm \ref{algorithm : exp3 K bis}. \newline 
We can first notice that, by conducting the same analysis as in \ref{proof : th full inf} up to Equation \eqref{eq : probability of exponential weight update}, we  obtain: 

\begin{align} 
    \Proba^t(\hvec) & =  \frac{  \Proba_{t-1}(\hvec) \exp\left ({\eta \hat{u}^t(\hvec)} \right ) }  {\sum_{\mathbf{l} \in H_\epsilon} \Proba_{t-1}(\mathbf{l}) \exp{\eta \hat{u}^t(\mathbf{l})}}\ ,\label{eq : probability of exponential weight update bandit} 
\end{align}
which by simple induction allows us to obtain:

\begin{align} 
    \Proba^t(\hvec) & =  \frac{ \exp\left ( \sum_{j=1}^t {\eta \hat{u}^j(\hvec)} \right ) }  {\sum_{\mathbf{l} \in H_\epsilon} \exp \left (\sum_{j=1}^t  \eta \hat{u}^j(\mathbf{l})\right )} \ . \label{eq : probability of exponential weight update bandit recur} 
\end{align}

We can now proceed to the regret analysis. \newline
For any action $\hvec \in H_\epsilon$, we define : 
$$R_{T,\hvec}=\sum_{t=1}^{T} u^t(\hvec) -\Expectation \left [\sum_{t=1}^{T} u^t(\hvec^t)\right ]\, , $$
which is the expected regret relative to playing $\hvec$ in all the rounds. 

We have, because of Lemma \ref{lemma : variance bandits estimator}, 
$\Expectation \left [ \sum_{t=1}^{T} \hat{u}^t(\hvec) \right ] = \sum_{t=1}^{T} u^t(\hvec^t) - KT$ \newline 
and $\Expectation_{t-1} \left [ u^t(\hvec^t)\right ] = \sum_{\hvec \in B} \Proba^t(\hvec) u^t(\hvec) = \sum_{\hvec \in B} \Proba^t(\hvec) \Expectation_{t-1} \left [ \hat{u}^t(\hvec)\right ] + K$.

Therefore 
\begin{equation} \label{equation : regret ,1}
R_{T,\hvec}=\Expectation \left [ \sum_{t=1}^{T} \hat{u}^t(\hvec) \right ] -\Expectation \left [\sum_{t=1}^T \sum_{\hvec \in H_\epsilon} \Proba^t(\hvec) \hat{u}^t(\hvec)\right ] \ .\end{equation}

We denote $W_n= \sum_{\hvec \in H_\epsilon} \exp ( \eta  \sum_{t=1}^{n} \hat{u}^t(\hvec) ) $.

Then we have for any $\hvec \in H_\epsilon$, $$ \exp \left  ( \eta  \sum_{t=1}^{T} \hat{u}^t(\hvec) \right )  \leq  \sum_{\hvec \in H_\epsilon} \exp ( \eta  \sum_{t=1}^{T} \hat{u}^t(\hvec) ) = W_T= W_0 \prod_{t=1}^T \frac{W_t}{W_{t-1}}\ . $$
We can then upper bound the terms of the product as follows :
% $$\exp \left ( \eta  \sum_{t=1}^{T} \hat{u}^t(\hvec) \right  )  \leq W_0 \prod_{t=1}^T$$
\begin{align*} \frac{W_t}{W_{t-1}} & \leq \sum_{\hvec \in H_\epsilon} \frac{\exp \left ( \eta  \sum\limits_{l=1}^{t-1} \hat{u}^l(\hvec)\right )}{W_{t-1}} \exp \left (\eta \hat{u}^t(\hvec) \right ) \\
&\leq  \sum_{\hvec \in H_\epsilon} \Proba^t(\hvec) \exp \left (\eta \hat{u}^t(\hvec) \right ) \ ,\end{align*}
where the second inequality comes from \ref{eq : probability of exponential weight update bandit recur}. 

We can then further bound this term using the inequalities $$\forall \  x\leq1, \  \exp(x)\leq 1+x+x^2 \text{ and }\forall x \in\mathbb{R}, 1+x\leq \exp(x)\, .$$

This gives \begin{align}
\frac{W_t}{W_{t-1}} &\leq 1 + \eta \sum_{\hvec\in H_\epsilon} \Proba^t(\hvec)\hat{u}^t(\hvec) + \eta^2 \sum_{\hvec\in H_\epsilon} \Proba^t(\hvec) \hat{u}^t(\hvec)^2\\
& \leq \exp\left (  \eta \sum_{\hvec\in H_\epsilon} \Proba^t(\hvec) \hat{u}^t(\hvec) + \eta^2 \sum_{\hvec\in H_\epsilon} \Proba^t(\hvec)\hat{u}^t(\hvec)^2 \right ) \ .
\end{align}

This in turn yields
\begin{align}
    \exp \left ( \eta  \sum_{t=1}^{T} \hat{u}^t(\hvec) \right  ) & \leq W_0 \prod_{t=1}^T \exp\left (  \eta \sum_{\hvec\in H_\epsilon} \Proba^t(\hvec)\hat{u}^t(\hvec) + \eta^2 \sum_{\hvec\in H_\epsilon} \Proba^t(\hvec) \hat{u}^t(\hvec)^2 \right )  \\
    &  \leq W_0 \exp \left (  \eta  \sum_{t=1}^T \sum_{\hvec\in H_\epsilon} \Proba^t(\hvec)\hat{u}^t(\hvec) + \eta^2 \sum_{t=1}^T \sum_{\hvec\in H_\epsilon} \Proba^t(\hvec) \hat{u}^t(\hvec)^2 \right )\ ,\end{align}
where by applying the log, simplifying, and taking the expectation we get 
\begin{align}
    \sum_{t=1}^{T} \hat{u}^t(\hvec) -  \sum_{t=1}^T \sum_{\hvec\in B} \Proba^t(\hvec)\hat{u}^t(\hvec) & \leq \frac{\log( W_0 )}{\eta} + \eta \sum_{t=1}^T \sum_{\hvec\in H_\epsilon} \Proba^t(\hvec) \hat{u}^t(\hvec)^2 \\
    \Expectation\left [  \sum_{t=1}^{T} \hat{u}^t(\hvec) -  \sum_{t=1}^T \sum_{\hvec\in H_\epsilon} \Proba^t(\hvec) \hat{u}^t(\hvec)\right ] & \leq \frac{\log( W_0 )}{\eta} + \eta  \Expectation \left [ \sum_{t=1}^T \sum_{\hvec\in H_\epsilon} \Proba^t(\hvec) \hat{u}^t(\hvec)^2\right ] \\
    R_{T,\hvec} & \leq \frac{\log( W_0 )}{\eta} + \eta  \Expectation \left [ \sum_{t=1}^T  \sum_{\hvec\in B} \Proba^t(\hvec) \hat{u}^t(\hvec)^2\right ]\ .
\end{align}
We recognize the expression of the regret from \eqref{equation : regret ,1}. Because it is true for all $\hvec \in H_\epsilon$, we can take the maximum and notice : $R_{T,\epsilon} = \max_{\hvec \in H_\epsilon} R_{T,\hvec}$. 
Noticing that $W_0 = |H_\epsilon| \leq \left ( \frac{1}{\epsilon} \right)^K$ and using Lemma \ref{lemma : variance bandits estimator}  concludes the bound on the discritzed regret as follows:

\begin{align}
    R_{T,\epsilon} & \leq \frac{\log{| H_\epsilon|}}{\eta} +  \eta  \sum_{t=1}^{T} \sum_{\hvec \in H_\epsilon} \Proba^t(\hvec) \Expectation [\hat{u}^t(\hvec)^2] \label{eq : bound discretized regret bandits} \\
    & \leq K \frac{\log \left ( \frac{1}{\epsilon} \right )}{\eta} + \eta 4 K^2T\frac{1}{\epsilon} \\
    & \leq 5 K \sqrt{\frac{KT}{\epsilon} \log \left ( \frac{1}{\epsilon} \right )}\ ,
\end{align}
with $\eta = \sqrt{\frac{ \epsilon }{KT} \log \left ( \frac{1}{\epsilon} \right ) } $

Then using Lemma \ref{lemma : discretized regret}, we can bound the regret: 
\begin{align}
    R_T& = R_{T,disc} + KT\epsilon \\
    & \leq 5 K^{3/2} \sqrt{\frac{T}{\epsilon} \log \left ( \frac{1}{\epsilon} \right )} + KT\epsilon \\ 
    & \leq 5 K^{4/3} T^{2/3}  \left ( 1 + \frac{1}{3}\log \left ( \frac{T}{K} \right ) \right )  \ ,
\end{align}
with the specific choice of  $\epsilon =   \left ( \frac{K}{T} \right )^{1/3}$.
\end{proof}

\subsubsection{All-winner feedback}

As in the bandit feedback, to prove the regret rates, we use Lemma~\ref{lemma : variance all-winner estimator} which bounds the necessary quantity for the standard EXP3 analysis.

\begin{lemma}\label{lemma : variance all-winner estimator}
The estimator $\bar{u}^t(\hvec)$, defined by \eqref{def : subutilities all winner estimator } has the following properties: 
\begin{itemize}
    \item The estimator $\bar{u}^t(\hvec)$ has a fixed bias $-K$: \begin{equation}
        \Expectation \left [\bar{u}^t(\hvec) \right ] = u^t(\hvec) -K \, .\end{equation}
    \item The square of the estimator verifies: \begin{equation} \sum_{\hvec \in H_\epsilon} \Probat(\hvec) \Expectation [\bar{u}^t(\hvec)^2] \leq 8 K^4 \log (2)\, .\end{equation}
\end{itemize}
\end{lemma}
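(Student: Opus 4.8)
The plan is to mirror the bandit-feedback analysis of Lemma~\ref{lemma : variance bandits estimator}, the only change being that inverse \emph{playing} probabilities $\Probat(h)$ are replaced by inverse \emph{observation} probabilities. Both bullets rest on the disjointness behind Lemma~\ref{lemma : decomposition}: for a fixed $\hvec\in H_\epsilon$ the events $\{h\in\Observed_\star^t(\cdot)\}$, $h\in\hvec$, are pairwise disjoint, so for every realisation of the played action $\hvec^t$ the sum $\bar{u}^t(\hvec)=\sum_{h\in\hvec}\bar{w}^t(h)$ collapses to the single term indexed by $h_\star(\hvec,\betavec^t)$, namely $\indicator[h_\star(\hvec,\betavec^t)\in\Observed_\star^t(\hvec^t)]\,(w^t(h_\star(\hvec,\betavec^t))-K)/q(h_\star(\hvec,\betavec^t))$, where I write $q(h):=\Proba_{\mathbf{f}^t\sim\mathcal{B}^t}(h\in\Observed_\star^t(\mathbf{f}^t))$, and $\bar{u}^t(\hvec)=0$ when $\hvec$ has no positive sub-utility. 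Since $\mathcal{B}^t$ is precisely the law of $\hvec^t$, we have $\Expectation[\indicator[h\in\Observed_\star^t(\hvec^t)]]=q(h)$, so the denominator cancels in expectation and $\Expectation[\bar{u}^t(\hvec)]=w^t(h_\star(\hvec,\betavec^t))-K=u^t(\hvec)-K$ by Lemma~\ref{lemma : decomposition}. This is word for word the unbiasedness computation in the proof of Lemma~\ref{lemma : variance bandits estimator}, which settles the first bullet.

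For the second bullet I would square the single-term expression, use $w^t(h)\in[0,K]$ so $(w^t(h)-K)^2\le K^2$, and again replace $\Expectation[\indicator[\cdot]]$ by $q(\cdot)$ in the numerator to get $\Expectation[\bar{u}^t(\hvec)^2]\le K^2/q(h_\star(\hvec,\betavec^t))$. Summing against $\Probat(\hvec)$ and grouping actions by their active component --- for $h$ with $w^t(h)>0$ one has $\{\hvec: h_\star(\hvec,\betavec^t)=h\}=\{\hvec: h\in\hvec\}$, so the grouped mass equals $\Probat(h):=\sum_{\hvec\ni h}\Probat(\hvec)$ --- gives
$$\sum_{\hvec\in H_\epsilon}\Probat(\hvec)\,\Expectation[\bar{u}^t(\hvec)^2]\ \le\ K^2\!\!\sum_{h:\,w^t(h)>0}\frac{\Probat(h)}{q(h)}\,.$$
The whole gain over bandit feedback is concentrated in this sum: in the bandit case the analogue is $\sum_h\Probat(h)/\Probat(h)=\card{\Ou^t}=\Theta(1/\epsilon)$, whereas here $q(h)$ --- the probability of \emph{observing} $h$ --- is far larger, and the target is to show the sum is $O(K^2)$ (the universal constant $8\log 2$ being what the bookkeeping delivers).

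To bound $\sum_{h:\,w^t(h)>0}\Probat(h)/q(h)$ I would exhibit the all-winner feedback as a feedback graph and use the tools of \citep{Alon_Graph-Structured_Feedback}. By Lemma~\ref{lemma : observed all-winner} a pseudo-bid $h_{k,j}$ is observable from the outcome of $\hvec^t$ iff $x^t(\hvec^t)<\floor{k}$, or $x^t(\hvec^t)=\floor{k}$ and $j\epsilon\ge p^t(\hvec^t)$; and by Lemma~\ref{lemma : regularity over outcomes}, if $h$ itself has $w^t(h)>0$ then playing \emph{any} $\hvec\ni h$ produces the same outcome (allocation $\floor{k}$, price in $\Prix_\epsilon(h)$), so $h\in\Observed^t(\hvec^t)$ whenever $h\in\hvec^t$, i.e. $\Probat(h)\le q(h)$. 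Grouping the positive--sub-utility pseudo-bids by allocation level $\floor{k}$, one checks that within a level they are totally ordered by price and that playing one reveals it together with all higher-priced ones at that level, while playing at any strictly smaller allocation level reveals the \emph{entire} higher level; this arranges the relevant pseudo-bids in a single chain for the observation relation. I would then run the standard graph-feedback telescoping: with $S_h$ the cumulative playing-mass of the pseudo-bids preceding $h$ in the chain, $q(h)\ge S_h$ and $\Probat(h)=S_h-S_{h^-}$, so $\sum_h\Probat(h)/q(h)\le\sum_h(S_h-S_{h^-})/S_h$, which a dyadic grouping of the values $\{S_h\}$, together with the per-level self-loop term, controls by a constant times the number of levels, i.e. $O(K^2)$.

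The step I expect to be the main obstacle is exactly this last estimate: making it genuinely independent of $\epsilon$ (rather than picking up an $O(\log(1/\epsilon))$, or worse an $O(\log(1/\min\Probat))$, per level), which forces one to lower bound $q(h)$ through the cross-level revelations and not merely the within-level chain, and to be careful about the constant. The remaining ingredients --- distinguishing the two ways the price can be set (by the learner's $\floor{k}$-th bid or by the adversary's $(K-\floor{k})$-th bid) and handling the extreme allocation levels where $\Probat(x^t<\floor{k})$ can vanish --- are routine but must be treated to make the chain/telescoping argument airtight; once the sum is bounded, plugging back into the EXP3-style regret decomposition of the proof of Theorem~\ref{theorem : Regret exp3} (with the discretisation cost of Lemma~\ref{lemma : discretized regret}) yields Theorem~\ref{theorem : Regret all winner}.
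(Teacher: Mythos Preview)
Your first bullet and the reduction to $K^{2}\sum_{h:\,w^{t}(h)>0}\Probat(h)/q(h)$ coincide with the paper's proof. The divergence is entirely in how that last sum is controlled.

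The paper does \emph{not} carry out a bespoke chain/telescoping argument. It casts the situation as a directed feedback graph on the vertex set $\Ou^{t}=\{h_\star^{t}(\hvec):\hvec\in H_\epsilon\}$, with an edge $a\to h$ whenever $h\in\Observed_\star^{t}(a)$, and invokes the Alon--Cesa-Bianchi--Gentile--Mansour lemma (restated in the appendix) to bound $\sum_{h}\Probat(h)/q(h)$ by $4\alpha\log(\cdot)$, where $\alpha$ is the independence number of this graph. The structural work is then to show $\alpha\le 2K$: for each half-integer index $k+\tfrac12$ at most one bid-gap can lie in $\Ou^{t}$, and for each integer index $k$ the bids in $\Ou^{t}$ at that level are totally ordered by the observation relation (the lowest one sees all the others), so an independent set contains at most one vertex per index $k\in\mathcal{K}$.

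Two remarks on your route. First, the paper's argument actually yields $8K^{4}\log(2/\epsilon)$ --- the $\log(1/\epsilon)$ coming from the minimum-weight hypothesis in the Alon et al.\ lemma --- and it is this $\epsilon$-dependent bound, not $8K^{4}\log 2$, that is plugged into the proof of Theorem~\ref{theorem : Regret all winner}. So the constant you are chasing is a typo in the lemma statement; you are trying to prove something strictly stronger than the paper does. Second, the obstacle you flag is real: on a single chain the telescoping $\sum_{h}(S_{h}-S_{h^{-}})/S_{h}\le\log(S_{\max}/S_{\min})$ gives $O(\log(1/\min_{h}\Probat(h)))$ rather than $O(1)$, and your cross-level revelations, as written, do not by themselves remove this logarithm. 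The paper sidesteps the issue by accepting the (harmless for the final regret rate) extra $\log(1/\epsilon)$ and appealing to the black-box independence-number lemma, which makes the proof complete at the cost of the sharper constant.
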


\begin{proof}[Proof of Lemma \ref{lemma : variance all-winner estimator}] This proof is mostly based on the following careful computations.
\begin{align*}
    \Expectation [\bar{u}^t(\hvec)] & =  \sum_{\hvec^t \in B} \Proba^t(\hvec^t)  \sum_{h \in \hvec} \bar{w}^t(h) \\
    &= \sum_{\hvec^t \in H_\epsilon} \Proba^t(\hvec^t)  \sum_{h_{k,j} \in \hvec} \mathbf{1} \left ( h \in \Observed^t_\star( \hvec^t) \right ) \frac{w^t(h)-K}{\underset{\mathbf{l}^t \sim \mathcal{B}^t}{\Proba}(h \in \Observed^t_\star (\mathbf{l}^t) )} \numberthis\\
    &= \sum_{\hvec^t \in H_\epsilon} \Proba^t(\hvec^t)  \frac{w^t(h^t_\star(\hvec)) - K }{\underset{\mathbf{l}^t \sim \mathcal{B}^t}{\Proba}(h^t_\star(\hvec) \in \Observed^t_\star (\mathbf{l}^t) )} \mathbf{1}(h^t_\star(\hvec) \in   \Observed^t_\star (\hvec^t)) \\
    &= \frac{w^t(h^t_\star(\hvec)) - K }{\underset{\mathbf{l}^t \sim \mathcal{B}^t}{\Proba}\left (h^t_\star(\hvec) \in \Observed^t_\star (\mathbf{l}^t) \right )}  \sum_{\hvec^t \in H_\epsilon : h^t_\star(\hvec) \in \Observed^t_\star (\hvec^t)} \Proba^t(\hvec^t)\\ 
    & = w^t(h^t_\star(\hvec)) - K  = u^t(\hvec) - K\, .
\end{align*}

\begin{align*}
    \sum_{\hvec \in H_\epsilon} \Proba^t(\hvec) \Expectation [\hat{u}^t(\hvec)^2] &= \Expectation \left  [ \sum_{\hvec \in H_\epsilon} \Proba^t(\hvec)  \hat{u}^t(\hvec)^2 \right ] \\
    & = \sum_{\hvec \in H_\epsilon} \Proba^t(\hvec) \sum_{\hvec^t \in H_\epsilon}\Proba^t(\hvec^t)  \left (\frac{u^t(h^t_\star(\hvec)) - K }{\underset{\mathbf{l}^t \sim \mathcal{B}^t}{\Proba}(h^t_\star(\hvec) \in \Observed^t_\star (\mathbf{l}^t) )} \right )^2  \mathbf{1}(h^t_\star(\hvec) \in   \Observed^t_\star (\hvec^t)) \\
    &= \sum_{\hvec \in H_\epsilon} \Proba^t(\hvec)  \left (\frac{u^t(h^t_\star(\hvec)) - K }{\underset{\mathbf{l}^t \sim \mathcal{B}^t}{\Proba}(h^t_\star(\hvec) \in \Observed^t_\star (\mathbf{l}^t) )} \right )^2  \sum_{\hvec^t \in H_\epsilon : h^t_\star(\hvec) \in \Observed^t_\star (\hvec^t)} \Proba^t(\hvec^t) \\
    & = \sum_{\hvec \in H_\epsilon} \Proba^t(\hvec)  \frac{ \left ( u^t(h^t_\star(\hvec)) - K  \right )^2}{\underset{\mathbf{l}^t \sim \mathcal{B}^t}{\Proba}(h^t_\star(\hvec) \in \Observed^t_\star (\mathbf{l}^t) )}   \\
    & \leq K^2 \sum_{\hvec \in H_\epsilon} \frac{\Proba^t(\hvec)}{\underset{\mathbf{l}^t \sim \mathcal{B}^t}{\Proba}(h^t_\star(\hvec) \in \Observed^t_\star (\mathbf{l}^t) )} \\
    & \leq  K^2 \sum_{h^t_\star \in \Ou^t} \frac{\Proba^t(h^t_\star)}{\underset{\mathbf{l}^t \sim \mathcal{B}^t}{\Proba}(h^t_\star \in \Observed^t_\star (\mathbf{l}^t) )} \\
    & \leq  K^2 \sum_{h^t_\star \in \Ou^t} \frac{\Proba^t(h^t_\star)}{\sum_{a \in \Ou^t : h^t_\star \in \Observed^t_\star(a) } \Proba^t(a)} \numberthis \label{eq : where we use the graph stuff}\\ 
    & \leq K^2  8K \log \left ( 2 \frac{1}{\epsilon^{2K} \alpha K } \right ) \\
    & \leq 8K^4   \log \left ( \frac{2}{\epsilon}  \right ) \ .\\
\end{align*}

Taking $\alpha = \frac{1}{K}$.

Where to bound \eqref{eq : where we use the graph stuff}, we use lemma \ref{lemma : directed_graph} from \cite{Alon_Graph-Structured_Feedback}, restated in the Appendix. 

We define a graph over the elements of $\mathcal{O}^t$, such that each element $o_1$ has an incoming edge from the other elements $o_2$ such that $o_1 \in \Observed^t_\star(o_2)$. This graph matches  \eqref{eq : where we use the graph stuff} to the expression lemma \ref{lemma : directed_graph} allows to bound.

It only remains to determine the independence number of this graph. First notice that, for each value of $k+\frac{1}{2}$ only one bid-gaps with this first index can belong to $\mathcal{O}^t$. Indeed, otherwise, since there exists a bid-profile $\hvec$ such that both belong to it, $h_\star^t(\hvec)$ would have two values, which is impossible because only one bid or bid-gap per bid profile can have non-zero sub-utility. 

Then notice that for two bids in $\mathcal{O}^t$, with the same first index $k$ an integer values, the observed set of the \emph{lowest} one necessarily contains the other. This naturally arises from the definition of $\Observed^t$. 

These two observations ensure that, in an independent set of this graph, there is at most one element having each index $k \in \{1,\frac{3}{2},2,...\frac{2K-1}{2K}, K \}$.
This ensures the independence number of this is at most $2K$. Which using the lemma \ref{lemma : directed_graph} from \cite{Alon_Graph-Structured_Feedback}, completes the proof.

\end{proof}

We restate the regret guarantees in the all-winner feedback before the proof.
\begingroup
\def\thetheorem{~\ref{theorem : Regret all winner}}
\begin{theorem}
   For any time horizon $T$, using Algorithm \ref{algorithm : exp3 K bis} in the repeated $K$-unit auction with uniform pricing guarantees, under all-winner feedback, a regret of at most $\smash{\mathcal{O} \left ( K^{5/2} \sqrt{T} \log (T) \right ) }$ with $\eta =K^{-1}T^{-1/2}$ and  $\epsilon = K^{3/2}T^{1/2}$.
\end{theorem}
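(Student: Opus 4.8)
The plan is to reproduce, almost line for line, the analysis behind Theorem~\ref{theorem : Regret exp3}, now feeding the all-winner estimator $\bar w^t$ of \eqref{def : subutilities all winner estimator } into Algorithm~\ref{algorithm : exp3 K bis} and invoking Lemma~\ref{lemma : variance all-winner estimator} wherever the bandit proof used Lemma~\ref{lemma : variance bandits estimator}. First I would observe that the structural part of the argument is unchanged: exactly as in the derivation leading to \eqref{eq : probability of exponential weight update bandit recur}, the combination of Algorithm~\ref{algorithm : exp3 K bis} and the weight-pushing sampler (Algorithm~\ref{alg:weight_pushing}) produces, at every round $t$, the exponential-weight law $\Proba^t(\hvec)\propto\exp\!\big(\eta\sum_{j\le t}\bar u^j(\hvec)\big)$ on $H_\epsilon$, where $\bar u^t(\hvec):=\sum_{h\in\hvec}\bar w^t(h)$; nothing here depends on which estimator drives the weight update.

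Second, I would use the two properties of Lemma~\ref{lemma : variance all-winner estimator}. The constant bias $-K$ permits the same telescoping as in the proof of Theorem~\ref{theorem : Regret exp3}, so that the discretized regret relative to a fixed $\hvec\in H_\epsilon$ equals $\Expectation[\sum_t\bar u^t(\hvec)]-\Expectation[\sum_t\sum_{\mathbf a\in H_\epsilon}\Proba^t(\mathbf a)\bar u^t(\mathbf a)]$, the spurious $KT$ terms cancelling. Then, noting that within any $\hvec$ only the single pseudo-bid $h^t_\star(\hvec)$ contributes to $\bar u^t(\hvec)$ and carries a non-positive weight $w^t(h^t_\star(\hvec))-K\le 0$, one has $\eta\bar u^t(\hvec)\le 0\le 1$, so $\exp(x)\le 1+x+x^2$ applies on the range actually used; running the standard potential recursion with $W_n:=\sum_{\hvec\in H_\epsilon}\exp(\eta\sum_{t\le n}\bar u^t(\hvec))$ then yields
\begin{equation*}
R_{T,\epsilon}\;\le\;\frac{\log|H_\epsilon|}{\eta}+\eta\sum_{t=1}^{T}\sum_{\hvec\in H_\epsilon}\Proba^t(\hvec)\,\Expectation\!\left[\bar u^t(\hvec)^2\right].
\end{equation*}
Using $|H_\epsilon|\le(1/\epsilon)^K$ and the second-moment bound of Lemma~\ref{lemma : variance all-winner estimator} (which is of order $K^4$ up to a $\log(1/\epsilon)$ factor) gives $R_{T,\epsilon}\lesssim K\log(1/\epsilon)/\eta+\eta K^4 T\log(1/\epsilon)$; choosing $\eta$ to balance the two terms leaves $R_{T,\epsilon}=\tilde{O}(K^{5/2}\sqrt T)$, and Lemma~\ref{lemma : discretized regret} then converts this into a bound on $R_T$ at the extra cost $KT\epsilon$, so picking $\epsilon$ of order $K^{3/2}/\sqrt T$ keeps the discretization overhead of the same order and produces the claimed $\mathcal{O}(K^{5/2}\sqrt T\log T)$.

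I expect no genuine obstacle in the theorem's proof itself — it is essentially a transcription of the proof of Theorem~\ref{theorem : Regret exp3} — because the genuinely new ingredient has already been isolated in Lemma~\ref{lemma : variance all-winner estimator}. There, the delicate point is that the inverse-probability weights divide by $\Proba_{\mathbf f^t\sim\mathcal B^t}(h\in\Observed_\star^t(\mathbf f^t))$ rather than by $\Proba^t(h)$, so the quantity $\sum_{h\in\Ou^t}\Proba^t(h)\big/\Proba_{\mathbf f^t\sim\mathcal B^t}(h\in\Observed_\star^t(\mathbf f^t))$ cannot be collapsed to $|\Ou^t|$ and must instead be controlled through the graph-feedback lemma of \cite{Alon_Graph-Structured_Feedback} applied to the directed ``reveals'' graph on $\Ou^t$, whose independence number is at most $2K$ (at most one surviving element per index $k\in\mathcal K$, by disjointness of the sub-utility events and monotonicity of $\Observed^t$ in $j$). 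It is precisely this refinement — replacing the bandit bound $|\Ou^t|=O(1/\epsilon)$ by one that is $\epsilon$-free at the price of extra powers of $K$ — that upgrades the $T^{2/3}$ rate to $\sqrt T$; once it is in hand, the regret computation above is routine.
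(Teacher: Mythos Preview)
Your proposal is correct and mirrors the paper's own proof essentially line for line: the paper explicitly states that the argument is identical to that of Theorem~\ref{theorem : Regret exp3} with $\hat u^t$ replaced by $\bar u^t$, then plugs the second-moment bound of Lemma~\ref{lemma : variance all-winner estimator} into \eqref{eq : bound discretized regret bandits}, sets $\eta=K^{-1}T^{-1/2}$, and finishes with Lemma~\ref{lemma : discretized regret} and $\epsilon=\sqrt{K^3/T}$. Your sketch of why Lemma~\ref{lemma : variance all-winner estimator} works (the graph-feedback lemma on $\Ou^t$ with independence number at most $2K$) also matches the paper's proof of that lemma.
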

\addtocounter{theorem}{-1}
\endgroup

\begin{proof}[Proof of Theorem \ref{theorem : Regret all winner}] \label{proof : theorem all winner}
    The proof of this theorem is identical to the one of theorem \ref{theorem : Regret exp3}, with only the need to replace $\hat{u}^t$ by $\bar{u}^t$ up to the point where we bound the regret in equation \ref{eq : bound discretized regret bandits}. 
The proof completes as follows.
The discretized regret can be bounded as in \ref{proof : theorem exp3}:%, as follows
\begin{align}
    R_{T,disc} & \leq \frac{\log{| H_\epsilon|}}{\eta} +  \eta  \sum_{t=1}^{T} \sum_{\hvec \in H_\epsilon} \Proba^t(\hvec) \Expectation [\bar{u}^t(\hvec)^2] \\
    & \leq K \frac{\log \left ( \frac{1}{\epsilon} \right )}{\eta} + \eta 8 K^4T \log\left(\frac{2}{\epsilon}\right) \\
    & \leq   K^{5/2} \sqrt{T} \left ( 8 \log (2) + 9 \log \left ( \frac{1}{\epsilon} \right ) \right )\ ,
\end{align}
with $\eta = \frac{1}{K\sqrt{T}}$.

Then using Lemma \ref{lemma : discretized regret}, we can bound the regret: 
\begin{align}
    R_T& = R_{T,disc} + KT\epsilon \\
    & \leq K^{5/2} \sqrt{T} \left ( 8\log(2) + 9\log \left ( \frac{1}{\epsilon} \right ) \right ) + KT\epsilon \\ 
    & \leq  K^{5/2} \sqrt{T} \left ( 1 + 8\log(2) + \frac{9}{2}\log \left ( \frac{T}{K^3} \right ) \right )\ ,
\end{align}
with $\epsilon =   \sqrt{\frac{K^3}{T}}$
\end{proof}

\section{Proof of technical lemmas} \label{proof}

\observedAllWinner*

\begin{proof}[Proof of Lemma \ref{lemma : observed all-winner}]
Let $t \in [T], \hvec^t \text{ and } \betavec^t$ be the action of the player and the adversary at time $t$. Let $\bvec^t$ be the corresponding bid to the pseudo-bid $\hvec^t$.
Under the all-winner feedback, all winning bids are revealed, hence the feedback reveals to the learner the $K-x(\bvec^t,\betavec^t)$ biggest bids of the adversary : $(\beta_i)_{i\leq K-x(\bvec^t,\betavec^t)}$.
Furthermore, since the price  is known, the learner can deduce from the rules of the auction that for all $i \geq K-x(\bvec^t,\betavec^t)$ : \begin{equation} \label{equation : observed all winner price} \beta_i \leq p(\bvec^t,\betavec^t). \end{equation}

For any value $k,j \in \mathcal{K} \times \mathcal{J}_\epsilon$ such that $h_{k,j} \in \Observed(\hvec,\betavec)$, let's show that we can evaluate the corresponding sub-utilities.

We first look at the ability of the learner to evaluate the indicator functions in the sub-utilities defined in \autoref{lemma : decomposition}, for $h_{k,j} \in \Observed(\hvec^t,\betavec^t)$. 

For the integer values of $k$, we can rewrite the indicator function of the sub-utilities, as follows :
$\indicator \{ p_H(\hvec,\betavec) = j\epsilon \} \cap \{ x_H(\hvec,\betavec) = k \} = \indicator \{ \beta_{K-k} >  j\epsilon > \beta_{K-k+1} \} $.
When $K-k+1 \leq K-x(\bvec^t,\betavec^t)$ this can be evaluated for any value of $j$, because the adversary bids are known. 
When $K-k = K-x(\bvec^t,\betavec^t)$, the indicator function can still be evaluated if $j\epsilon > p(\bvec^t,\betavec^t)$ using \eqref{equation : observed all winner price}.

For the half-integer values of $k$, we can rewrite the indicator function of the sub-utilities \autoref{lemma : regularity over outcomes} as follows :
$\indicator \{ p_H(\hvec,\betavec) \in (j\epsilon,(j+1) \epsilon) \} \cap \{ x_H(\hvec,\betavec) = k-1/2 \} = \indicator \{j\epsilon< \beta_{K-k+1/2} <  (j+1)\epsilon \} $.
Therefore, when $K-k+1/2 \leq K-x(\bvec^t,\betavec^t)$ this indicator function can be evaluated. Hence when $k \geq x+1/2$, and that regardless of the value of $j$.

Therefore, it is always possible for the learner to evaluate the indicator function. 

Evaluating the remaining term of the sub-utilities $\sum_{l=1}^{\floor{k}} v_l - p_H(\hvec,\betavec)$ is more straightforward since it only needs to be done when the indicator function takes value 1. 

For the integer values of $k$, if the indicator function takes value 1, then $p_H(\hvec,\betavec) = j\epsilon$, therefore the remaining term is known. 

For the half-integer values of $k$, if the transformed indicator function takes value 1, then the price is set by $\beta_{K-k+1/2}$, therefore, the remaining term is also known. 

This concludes the proof as the full sub-utilities can always be evaluated on $\Observed(\hvec,\betavec)$.
\end{proof}

\section{Restated results from the literature}

\subsection{Exponential weight forecaster}
In this problem of learning under expert advice, they are $N$ expert and at each time $t \in [N]$, the learner chooses a probability to play each expert $ (y^t_i)_{i\in[N]} \in \mathcal{Y}$ and nature reveals the losses $(l^t_i)_{i\in[N]} \in [0,L]^N$.

\begin{equation*}
    R_n = \sum_{t=1}^n \sum_{i=1}^N y^t_i l^t_i - \min_{i \in [N] } \left ( \sum_{t=1}^n l^t_i \right )\ .
\end{equation*}

\begin{theorem} [Theorem 2.2 \cite{cesa2006prediction}] \label{restated thm cesa}
Assume that the losses $l$ take values in $[0,L]$. For any $n$ and $\eta>0$, and for all $y_1, \ldots, y_n \in \mathcal{Y}$, the regret of the exponentially weighted average forecaster satisfies
$$
R_n \leq \frac{\log N}{\eta}+\frac{n L^2 \eta}{8} .
$$

In particular, with $\eta=\sqrt{8 \ln N / n}$, the upper bound becomes $\sqrt{(n / 2) \ln N}$.

\end{theorem}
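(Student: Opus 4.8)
The plan is to run the standard potential-function (weight-tracking) argument for the exponentially weighted average forecaster. Recall that in the loss setting this forecaster assigns to expert $i$ at round $t$ the probability $y_i^t \propto \exp(-\eta L_i^{t-1})$, where $L_i^{t-1} = \sum_{s=1}^{t-1} l_i^s$ denotes the cumulative loss of expert $i$ before round $t$. I would introduce the potential $W_t = \sum_{i=1}^N \exp(-\eta L_i^t)$, with $W_0 = N$, and control $\log(W_n / W_0)$ from both sides; the regret drops out after rearranging.

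First, the lower bound is immediate: keeping only the smallest-loss term gives $W_n \ge \exp(-\eta \min_i L_i^n)$, hence $\log W_n \ge -\eta \min_i L_i^n$, which encodes the best-expert-in-hindsight benchmark. Second, the per-round upper bound carries the actual content. Writing the ratio of consecutive potentials as an expectation over the forecaster's own distribution,
$$\frac{W_t}{W_{t-1}} = \sum_{i=1}^N y_i^t \exp(-\eta l_i^t) = \mathbb{E}_{i \sim y^t}\!\left[\exp(-\eta l_i^t)\right],$$
I would invoke Hoeffding's lemma: for any random variable $X$ supported in an interval of length $L$, $\log \mathbb{E}[\exp(-\eta X)] \le -\eta \mathbb{E}[X] + \eta^2 L^2 / 8$. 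Applying this with $X = l_i^t \in [0,L]$ and $i \sim y^t$ yields $\log(W_t/W_{t-1}) \le -\eta \sum_i y_i^t l_i^t + \eta^2 L^2 / 8$.

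The rest is telescoping and algebra. Summing the per-round bound over $t = 1, \dots, n$ gives
$$\log W_n - \log N \le -\eta \sum_{t=1}^n \sum_{i=1}^N y_i^t l_i^t + \frac{n \eta^2 L^2}{8}.$$
Combining with the lower bound $\log W_n \ge -\eta \min_i L_i^n$ cancels the $\log W_n$ terms, and rearranging produces $\eta\big(\sum_{t} \sum_{i} y_i^t l_i^t - \min_i L_i^n\big) \le \log N + n\eta^2 L^2/8$. Dividing by $\eta$ gives exactly $R_n \le \log N/\eta + n L^2 \eta / 8$. The ``in particular'' claim then follows by substituting $\eta = \sqrt{8 \ln N / n}$ (with $L = 1$) and checking that the two summands coincide, each equal to $\tfrac12\sqrt{(n/2)\ln N}$.

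The only real obstacle is establishing Hoeffding's lemma, i.e.\ bounding the cumulant generating function of a $[0,L]$-bounded variable by $\eta^2 L^2/8$; this is a short convexity argument, bounding the second derivative of the log-moment-generating function (the variance of a variable confined to an interval of length $L$) by $L^2/4$ and integrating twice. Everything else is routine bookkeeping with the telescoping potential, so I would state Hoeffding's lemma as the one auxiliary fact and present the potential argument as the spine of the proof.
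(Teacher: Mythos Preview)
Your proposal is correct and is precisely the standard potential-function argument via Hoeffding's lemma from \cite{cesa2006prediction}. The paper does not actually prove this statement---it merely restates it and points to the original reference, noting that the $L^2$ factor is the straightforward extension to $[0,L]$-valued losses---so your write-up is in fact more complete than what the paper provides.
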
 
This theorem, besides the changes in notations, is a slight variation from the original formulation as it allows for losses greater than 1. The resulting $L^2$ term in the upper bound is a well known extension and the steps to prove this extension to scaled losses are provided in the original work by \cite{cesa2006prediction}.

\subsection{Lemma graph feedback}

The following lemma is restated from \cite{Alon_Graph-Structured_Feedback}.

\begin{lemma} \label{lemma : directed_graph}
Let $G = (V,E)$ be a directed graph with $\abs{V}=K$, in which each node $ i \in V$ is assigned a positive weight $w_{i}$.
Assume that $\sum_{i \in V} w_{i} \leq 1$, and that $w_{i} \geq \epsilon$ for all $i \in V$ for some constant $0<\epsilon<\frac{1}{2}$.
Then
\begin{align*}
	\sum_{i \in V} \frac{w_{i}}{w_{i}+\sum_{j \in \nin(i)} w_{j}}
	\leq 4 \alpha \ln \frac{4K}{\alpha\epsilon}\ ,
\end{align*}
where $\alpha = \alpha(G)$ is the independence number of $G$.
\end{lemma}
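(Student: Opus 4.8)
The statement is a purely graph-theoretic fact: it is the standard bound relating the weighted ``self-to-in-neighbourhood'' ratios of a directed feedback graph to its independence number. Writing $W_i := w_i + \sum_{j \in \nin(i)} w_j$, the plan is to reduce $\sum_i w_i / W_i$ to a constant multiple of $\alpha$, up to the logarithmic factor. The two hypotheses are precisely what make this possible: $w_i \ge \epsilon$ together with $\sum_i w_i \le 1$ force $W_i \in [\epsilon, 1]$ for every $i$, so the ratios $w_i/W_i$ live on a bounded dyadic scale, and it is this scale that will generate the $\ln(\,\cdot\,/\epsilon)$ term. I would first extract the scale-free combinatorial heart of the argument, then glue the scales together by binning.

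The crux is a single-scale, weighted, directed Tur\'an-type inequality, which I would isolate as a sublemma: \emph{if every vertex of an induced subgraph $H$ of $G$ satisfies $W_i \le c$, then $\sum_{i \in H} w_i \le 2c\,\alpha(H) \le 2c\,\alpha(G)$.} The right way to see this is to produce an independent set of size at least $\tfrac{1}{2c}\sum_i w_i$, so that $\alpha(H)$ dominates the total weight. The constant $2$ is genuinely necessary and the main obstacle lies exactly here: the hypothesis controls only in-neighbourhoods, while the notion of independence refers to the underlying undirected adjacency, so one cannot run the naive ``delete the closed in-neighbourhood'' greedy (it may select a large transitive tournament with $\alpha = 1$). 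A regular tournament on $n$ vertices with equal weights, where each $W_i = \tfrac{n+1}{2}w_i$, shows $\sum_i w_i \to 2c$ while $\alpha = 1$, so no constant below $2$ can hold; this is why the final bound carries an absolute constant. I would prove the sublemma by a probabilistic/greedy construction of a large independent set in the underlying graph (a Caro--Wei-style random-order argument, adapted to charge weight against in-neighbourhood mass), following the accounting of \cite{Alon_Graph-Structured_Feedback} for the exact constant rather than re-deriving it.

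With the sublemma in hand I would finish by a dyadic binning on the value of $W_i$: for $m \ge 1$ set $B_m = \{\, i : 2^{-m} < W_i \le 2^{-m+1}\,\}$, which partitions $V$ since $W_i \in [\epsilon,1]$. On $B_m$ one has $w_i/W_i \le 2^m w_i$, and the sublemma applied to the induced subgraph on $B_m$ with $c = 2^{-m+1}$ gives $\sum_{i \in B_m} w_i \le 2^{-m+2}\alpha$, so each bin contributes at most $2^m \cdot 2^{-m+2}\alpha = 4\alpha$. Summing over the $O(\log \tfrac1\epsilon)$ nonempty bins already yields a bound of the correct shape; the remaining step is to sharpen the crude bin count into the precise form $4\alpha \ln \tfrac{4K}{\alpha\epsilon}$, which uses $\abs{V} = K$ and $w_i \ge \epsilon$ to limit how many bins can simultaneously carry the full $4\alpha$ (equivalently, replacing the discrete $\log_2$ count by a continuous integral over the scale). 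I expect this last bookkeeping to be routine but constant-sensitive, and I would align the constants directly with the cited reference rather than optimise them afresh.
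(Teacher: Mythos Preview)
The paper does not prove this lemma at all: it appears in the appendix under ``Restated results from the literature'' and is cited verbatim from \cite{Alon_Graph-Structured_Feedback} without proof. Your sketch is a faithful outline of the original argument in that reference (the single-scale weighted Tur\'an-type sublemma followed by dyadic binning on $W_i$), so there is nothing in the paper to compare against beyond the citation itself.
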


\subsection{First price auction lower bound}

We restate Theorem 10 from \citep{balseiro2019contextual} :

\begin{theorem}[Lower Bound for Learning to Bid] Any algorithm must incur $\Omega(T^{2/3})$ regret for the learning to bid in first-price auctions problem, even if the value of the bidder is fixed
(i.e., there is only one context).
\end{theorem}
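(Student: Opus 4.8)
The plan is to prove the bound directly, treating it as a continuum-armed bandit with one-bit \emph{threshold feedback} in the style of Kleinberg and Leighton's pricing lower bound. Fix the value $v=1$ and let the highest opposing bid $m_t$ be drawn i.i.d. from a distribution with CDF $F$ that the environment selects; with bid $b_t\in[0,1]$ the utility is $u_t(b)=(1-b)\,\mathbf{1}\{b>m_t\}$, so the expected utility of a fixed bid is $g_F(b):=(1-b)F(b)$. The decisive structural fact is that the bandit feedback after bidding $b_t$ is exactly the single bit $\mathbf{1}\{b_t>m_t\}\sim\text{Bernoulli}(F(b_t))$: the learner never observes $m_t$, only whether its own bid cleared it. This censoring is what pushes the minimax rate from $\sqrt{T}$ (attainable under full feedback) up to $T^{2/3}$.

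First I would construct a family of nearly indistinguishable stochastic instances. Take a baseline $F_0(b)=c/(1-b)$ on a subinterval, a valid increasing CDF for small $c$, which makes the profile \emph{flat}, $g_{F_0}\equiv c$, so that no bid is distinguished and the learner has nothing to exploit for free. Then fix $N\asymp 1/\delta$ candidate locations $b_1<\dots<b_N$ spaced by $\asymp\delta$, and for each $i$ define $F_i$ by adding to $F_0$ a single upward \emph{jump} of size $\asymp\delta$ at $b_i$. Because the perturbation is an upward jump, $F_i$ remains a bona fide non-decreasing CDF; this monotonicity constraint is the one genuinely delicate point of the construction, and it is precisely why the perturbation cannot be an isolated bump that returns to baseline. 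The jump produces in $g_{F_i}$ a localized revenue \emph{peak} of height $\asymp\delta$ just above $b_i$ that decays back to $c$ over a window of width $\asymp\delta$; thus in instance $i$ the near-optimal bids sit near $b_i$ and earn per-round advantage $\asymp\delta$ over the flat baseline, while all instances coincide away from their own peak.

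The core is an information-theoretic argument. Since each round yields only a $\text{Bernoulli}(F(b_t))$ bit, and $F_i$ differs from $F_0$ only near $b_i$, evidence separating instance $i$ from the baseline accrues solely from bids landing near $b_i$; a jump of size $\delta$ gives per-sample Kullback--Leibler divergence $\asymp\delta^2$, so the learner cannot reliably detect the peak unless it bids near $b_i$ at least $\gtrsim\delta^{-2}$ times. I would place a uniform prior on $i\in[N]$ and write $N_i$ for the expected number of rounds spent bidding near $b_i$; since $\sum_i N_i\le T$, the average obeys $N_i\le T/N$. A change-of-measure estimate (Pinsker, or Bretagnolle--Huber, applied to the law of the entire feedback sequence generated by the adaptive policy) then shows that when $T/N\ll\delta^{-2}$ the algorithm behaves, for most $i$, as it would under the baseline, so it bids near $b_i$ only $o(T)$ times and therefore pays $\gtrsim\delta$ in almost every round, incurring regret $\gtrsim\delta T$ on those instances.

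Finally I would balance the parameters. The spacing/validity of the needles forces $N\asymp 1/\delta$, so the under-sampling condition $T/N\lesssim\delta^{-2}$ becomes $\delta T\lesssim\delta^{-2}$, i.e. $\delta\lesssim T^{-1/3}$; taking $\delta\asymp T^{-1/3}$ (hence $N\asymp T^{1/3}$) renders the peak undetectable within the budget while keeping the per-round gap at $\asymp\delta$, so the expected regret against the uniform prior, and hence the worst-case regret, is $\gtrsim\delta T=\Omega(T^{2/3})$. Since every instance uses the same fixed value $v$ and a single context (only the unknown i.i.d. law of $m_t$ varies), the bound holds ``even if the value of the bidder is fixed.'' The main obstacle is the change-of-measure step: making rigorous that a height-$\delta$ jump cannot be detected with fewer than $\asymp\delta^{-2}$ well-placed samples under an \emph{adaptive} strategy, whose past win/lose outcomes dictate where it samples next, and combining this with the uniform-prior averaging so that exploration cost and exploitation regret cannot simultaneously be made small.
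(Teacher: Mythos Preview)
The paper does not prove this theorem. It appears in the appendix under ``Restated results from the literature'' and is quoted verbatim as Theorem~10 of Balseiro et al.\ (2019), without proof, to be used as a black box in the reduction that establishes the multi-unit lower bound (Lemma~\ref{lemma : lower bound}). So there is no proof in the paper to compare against; you are supplying an argument the paper deliberately outsources.

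Your overall architecture---an equal-utility baseline $F_0(b)=c/(1-b)$, a family of perturbations at $N\asymp 1/\delta$ locations, a change-of-measure argument that each perturbation needs $\asymp\delta^{-2}$ well-placed samples to detect, and the balance $\delta\asymp T^{-1/3}$---is the right template and matches the Kleinberg--Leighton pricing lower bound that Balseiro et al.\ adapt. However, your specific construction has a real flaw. You assert that monotonicity of $F_i$ forces the perturbation to be an upward \emph{jump} rather than a localized bump returning to baseline. That is incorrect: $F_0$ is strictly increasing with derivative bounded below by $c$, so one may add a compactly supported bump of height $\delta$ and width $O(\delta)$ whose downward slope never exceeds $c$, and $F_i=F_0+\text{bump}$ remains non-decreasing. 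Worse, a pure jump actually breaks the lower bound. With a jump at $b_i$, the win probability at \emph{every} $b\ge b_i$ is shifted by $\delta$, so querying any fixed $b$ for $\asymp\delta^{-2}$ rounds reveals whether $b_i\le b$; the learner can binary-search the $N$ candidates in $O((\log N)/\delta^{2})$ rounds and thereafter exploit, paying total regret only $O((\log N)/\delta)=\tilde O(T^{1/3})$ when $\delta=T^{-1/3}$. Correspondingly, two of your intermediate claims fail for a jump: the utility does not ``decay back to $c$ over a window of width $\asymp\delta$'' (it equals $c+\delta(1-b)$ for all $b\ge b_i$ and only reaches $c$ at $b=1$), and evidence does not accrue ``solely from bids landing near $b_i$'' (any $b$ separates $\{i:b_i\le b\}$ from its complement). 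The fix is exactly what you ruled out: use a localized bump so that $F_i\equiv F_0$ outside a width-$O(\delta)$ window around $b_i$; this is feasible, localizes the peak, and blocks binary search, after which your information-theoretic averaging argument goes through.
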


\newpage
\section*{NeurIPS Paper Checklist}

\begin{enumerate}

\item {\bf Claims}
    \item[] Question: Do the main claims made in the abstract and introduction accurately reflect the paper's contributions and scope?
    \item[] Answer: \answerYes{}
    \item[] Justification: The main claims made in the abstract are presented in the paper, specifically as the two main Theorem~\ref{theorem : Regret exp3} and Theorem~\ref{theorem : Regret all winner} and Lemma~\ref{lemma : lower bound}.
    \item[] Guidelines:
    \begin{itemize}
        \item The answer NA means that the abstract and introduction do not include the claims made in the paper.
        \item The abstract and/or introduction should clearly state the claims made, including the contributions made in the paper and important assumptions and limitations. A No or NA answer to this question will not be perceived well by the reviewers. 
        \item The claims made should match theoretical and experimental results, and reflect how much the results can be expected to generalize to other settings. 
        \item It is fine to include aspirational goals as motivation as long as it is clear that these goals are not attained by the paper. 
    \end{itemize}

\item {\bf Limitations}
    \item[] Question: Does the paper discuss the limitations of the work performed by the authors?
    \item[] Answer: \answerYes{}
    \item[] Justification: The paper discuss and introduces the assumptions made in order for the stated results to hold, most of which are stated in the Introduction~\ref{introduction}.
    \item[] Guidelines:
    \begin{itemize}
        \item The answer NA means that the paper has no limitation while the answer No means that the paper has limitations, but those are not discussed in the paper. 
        \item The authors are encouraged to create a separate "Limitations" section in their paper.
        \item The paper should point out any strong assumptions and how robust the results are to violations of these assumptions (e.g., independence assumptions, noiseless settings, model well-specification, asymptotic approximations only holding locally). The authors should reflect on how these assumptions might be violated in practice and what the implications would be.
        \item The authors should reflect on the scope of the claims made, e.g., if the approach was only tested on a few datasets or with a few runs. In general, empirical results often depend on implicit assumptions, which should be articulated.
        \item The authors should reflect on the factors that influence the performance of the approach. For example, a facial recognition algorithm may perform poorly when image resolution is low or images are taken in low lighting. Or a speech-to-text system might not be used reliably to provide closed captions for online lectures because it fails to handle technical jargon.
        \item The authors should discuss the computational efficiency of the proposed algorithms and how they scale with dataset size.
        \item If applicable, the authors should discuss possible limitations of their approach to address problems of privacy and fairness.
        \item While the authors might fear that complete honesty about limitations might be used by reviewers as grounds for rejection, a worse outcome might be that reviewers discover limitations that aren't acknowledged in the paper. The authors should use their best judgment and recognize that individual actions in favor of transparency play an important role in developing norms that preserve the integrity of the community. Reviewers will be specifically instructed to not penalize honesty concerning limitations.
    \end{itemize}

\item {\bf Theory Assumptions and Proofs}
    \item[] Question: For each theoretical result, does the paper provide the full set of assumptions and a complete (and correct) proof?
    \item[] Answer: \answerYes{} % Replace by \answerYes{}, \answerNo{}, or \answerNA{}.
    \item[] Justification: While the full set of assumption is presented as part of the problem setting in the Introduction~\ref{introduction}, proofs of the theorems are provided in the appendix ~\ref{app :learning}.
    \item[] Guidelines:
    \begin{itemize}
        \item The answer NA means that the paper does not include theoretical results. 
        \item All the theorems, formulas, and proofs in the paper should be numbered and cross-referenced.
        \item All assumptions should be clearly stated or referenced in the statement of any theorems.
        \item The proofs can either appear in the main paper or the supplemental material, but if they appear in the supplemental material, the authors are encouraged to provide a short proof sketch to provide intuition. 
        \item Inversely, any informal proof provided in the core of the paper should be complemented by formal proofs provided in appendix or supplemental material.
        \item Theorems and Lemmas that the proof relies upon should be properly referenced. 
    \end{itemize}

    \item {\bf Experimental Result Reproducibility}
    \item[] Question: Does the paper fully disclose all the information needed to reproduce the main experimental results of the paper to the extent that it affects the main claims and/or conclusions of the paper (regardless of whether the code and data are provided or not)?
    \item[] Answer: \answerNA{}. % Replace by \answerYes{}, \answerNo{}, or \answerNA{}.
    \item[] Justification: The paper does not include experiments.
    \item[] Guidelines:
    \begin{itemize}
        \item The answer NA means that the paper does not include experiments.
        \item If the paper includes experiments, a No answer to this question will not be perceived well by the reviewers: Making the paper reproducible is important, regardless of whether the code and data are provided or not.
        \item If the contribution is a dataset and/or model, the authors should describe the steps taken to make their results reproducible or verifiable. 
        \item Depending on the contribution, reproducibility can be accomplished in various ways. For example, if the contribution is a novel architecture, describing the architecture fully might suffice, or if the contribution is a specific model and empirical evaluation, it may be necessary to either make it possible for others to replicate the model with the same dataset, or provide access to the model. In general. releasing code and data is often one good way to accomplish this, but reproducibility can also be provided via detailed instructions for how to replicate the results, access to a hosted model (e.g., in the case of a large language model), releasing of a model checkpoint, or other means that are appropriate to the research performed.
        \item While NeurIPS does not require releasing code, the conference does require all submissions to provide some reasonable avenue for reproducibility, which may depend on the nature of the contribution. For example
        \begin{enumerate}
            \item If the contribution is primarily a new algorithm, the paper should make it clear how to reproduce that algorithm.
            \item If the contribution is primarily a new model architecture, the paper should describe the architecture clearly and fully.
            \item If the contribution is a new model (e.g., a large language model), then there should either be a way to access this model for reproducing the results or a way to reproduce the model (e.g., with an open-source dataset or instructions for how to construct the dataset).
            \item We recognize that reproducibility may be tricky in some cases, in which case authors are welcome to describe the particular way they provide for reproducibility. In the case of closed-source models, it may be that access to the model is limited in some way (e.g., to registered users), but it should be possible for other researchers to have some path to reproducing or verifying the results.
        \end{enumerate}
    \end{itemize}

\item {\bf Open access to data and code}
    \item[] Question: Does the paper provide open access to the data and code, with sufficient instructions to faithfully reproduce the main experimental results, as described in supplemental material?
    \item[] Answer:\answerNA{} % Replace by \answerYes{}, \answerNo{}, or \answerNA{}.
    \item[] Justification: The paper does not include experiments.
    \item[] Guidelines:
    \begin{itemize}
        \item The answer NA means that paper does not include experiments requiring code.
        \item Please see the NeurIPS code and data submission guidelines (\url{https://nips.cc/public/guides/CodeSubmissionPolicy}) for more details.
        \item While we encourage the release of code and data, we understand that this might not be possible, so “No” is an acceptable answer. Papers cannot be rejected simply for not including code, unless this is central to the contribution (e.g., for a new open-source benchmark).
        \item The instructions should contain the exact command and environment needed to run to reproduce the results. See the NeurIPS code and data submission guidelines (\url{https://nips.cc/public/guides/CodeSubmissionPolicy}) for more details.
        \item The authors should provide instructions on data access and preparation, including how to access the raw data, preprocessed data, intermediate data, and generated data, etc.
        \item The authors should provide scripts to reproduce all experimental results for the new proposed method and baselines. If only a subset of experiments are reproducible, they should state which ones are omitted from the script and why.
        \item At submission time, to preserve anonymity, the authors should release anonymized versions (if applicable).
        \item Providing as much information as possible in supplemental material (appended to the paper) is recommended, but including URLs to data and code is permitted.
    \end{itemize}

\item {\bf Experimental Setting/Details}
    \item[] Question: Does the paper specify all the training and test details (e.g., data splits, hyperparameters, how they were chosen, type of optimizer, etc.) necessary to understand the results?
    \item[] Answer: \answerNA{}. % Replace by \answerYes{}, \answerNo{}, or \answerNA{}.
    \item[] Justification: The paper does not include experiments.
    \item[] Guidelines:
    \begin{itemize}
        \item The answer NA means that the paper does not include experiments.
        \item The experimental setting should be presented in the core of the paper to a level of detail that is necessary to appreciate the results and make sense of them.
        \item The full details can be provided either with the code, in appendix, or as supplemental material.
    \end{itemize}

\item {\bf Experiment Statistical Significance}
    \item[] Question: Does the paper report error bars suitably and correctly defined or other appropriate information about the statistical significance of the experiments?
    \item[] Answer: \answerNA{}. % Replace by \answerYes{}, \answerNo{}, or \answerNA{}.
    \item[] Justification: The paper does not include experiments.
    \item[] Guidelines:
    \begin{itemize}
        \item The answer NA means that the paper does not include experiments.
        \item The authors should answer "Yes" if the results are accompanied by error bars, confidence intervals, or statistical significance tests, at least for the experiments that support the main claims of the paper.
        \item The factors of variability that the error bars are capturing should be clearly stated (for example, train/test split, initialization, random drawing of some parameter, or overall run with given experimental conditions).
        \item The method for calculating the error bars should be explained (closed form formula, call to a library function, bootstrap, etc.)
        \item The assumptions made should be given (e.g., Normally distributed errors).
        \item It should be clear whether the error bar is the standard deviation or the standard error of the mean.
        \item It is OK to report 1-sigma error bars, but one should state it. The authors should preferably report a 2-sigma error bar than state that they have a 96\% CI, if the hypothesis of Normality of errors is not verified.
        \item For asymmetric distributions, the authors should be careful not to show in tables or figures symmetric error bars that would yield results that are out of range (e.g. negative error rates).
        \item If error bars are reported in tables or plots, The authors should explain in the text how they were calculated and reference the corresponding figures or tables in the text.
    \end{itemize}

\item {\bf Experiments Compute Resources}
    \item[] Question: For each experiment, does the paper provide sufficient information on the computer resources (type of compute workers, memory, time of execution) needed to reproduce the experiments?
    \item[] Answer: \answerNA{}. % Replace by \answerYes{}, \answerNo{}, or \answerNA{}.
    \item[] Justification: The paper does not include experiments.
    \item[] Guidelines:
    \begin{itemize}
        \item The answer NA means that the paper does not include experiments.
        \item The paper should indicate the type of compute workers CPU or GPU, internal cluster, or cloud provider, including relevant memory and storage.
        \item The paper should provide the amount of compute required for each of the individual experimental runs as well as estimate the total compute. 
        \item The paper should disclose whether the full research project required more compute than the experiments reported in the paper (e.g., preliminary or failed experiments that didn't make it into the paper). 
    \end{itemize}
    
\item {\bf Code Of Ethics}
    \item[] Question: Does the research conducted in the paper conform, in every respect, with the NeurIPS Code of Ethics \url{https://neurips.cc/public/EthicsGuidelines}?
    \item[] Answer: \answerYes{} % Replace by \answerYes{}, \answerNo{}, or \answerNA{}.
    \item[] Justification: 
    \item[] Guidelines:
    \begin{itemize}
        \item The answer NA means that the authors have not reviewed the NeurIPS Code of Ethics.
        \item If the authors answer No, they should explain the special circumstances that require a deviation from the Code of Ethics.
        \item The authors should make sure to preserve anonymity (e.g., if there is a special consideration due to laws or regulations in their jurisdiction).
    \end{itemize}

\item {\bf Broader Impacts}
    \item[] Question: Does the paper discuss both potential positive societal impacts and negative societal impacts of the work performed?
    \item[] Answer: \answerNA{} % Replace by \answerYes{}, \answerNo{}, or \answerNA{}.
    \item[] Justification: This is a theoretical paper, its result are not tied to a specific field. While it might be the basis for further research into applying learning in auction, which would allow for participant in auction to better adapt to others strategies, it is unclear what societal impact this might have and how fit for practical use the techniques develloped here are.
    \item[] Guidelines:
    \begin{itemize}
        \item The answer NA means that there is no societal impact of the work performed.
        \item If the authors answer NA or No, they should explain why their work has no societal impact or why the paper does not address societal impact.
        \item Examples of negative societal impacts include potential malicious or unintended uses (e.g., disinformation, generating fake profiles, surveillance), fairness considerations (e.g., deployment of technologies that could make decisions that unfairly impact specific groups), privacy considerations, and security considerations.
        \item The conference expects that many papers will be foundational research and not tied to particular applications, let alone deployments. However, if there is a direct path to any negative applications, the authors should point it out. For example, it is legitimate to point out that an improvement in the quality of generative models could be used to generate deepfakes for disinformation. On the other hand, it is not needed to point out that a generic algorithm for optimizing neural networks could enable people to train models that generate Deepfakes faster.
        \item The authors should consider possible harms that could arise when the technology is being used as intended and functioning correctly, harms that could arise when the technology is being used as intended but gives incorrect results, and harms following from (intentional or unintentional) misuse of the technology.
        \item If there are negative societal impacts, the authors could also discuss possible mitigation strategies (e.g., gated release of models, providing defenses in addition to attacks, mechanisms for monitoring misuse, mechanisms to monitor how a system learns from feedback over time, improving the efficiency and accessibility of ML).
    \end{itemize}
    
\item {\bf Safeguards}
    \item[] Question: Does the paper describe safeguards that have been put in place for responsible release of data or models that have a high risk for misuse (e.g., pretrained language models, image generators, or scraped datasets)?
    \item[] Answer: \answerNA{} % Replace by \answerYes{}, \answerNo{}, or \answerNA{}.
    \item[] Justification: The paper poses no such risks.
    \item[] Guidelines:
    \begin{itemize}
        \item The answer NA means that the paper poses no such risks.
        \item Released models that have a high risk for misuse or dual-use should be released with necessary safeguards to allow for controlled use of the model, for example by requiring that users adhere to usage guidelines or restrictions to access the model or implementing safety filters. 
        \item Datasets that have been scraped from the Internet could pose safety risks. The authors should describe how they avoided releasing unsafe images.
        \item We recognize that providing effective safeguards is challenging, and many papers do not require this, but we encourage authors to take this into account and make a best faith effort.
    \end{itemize}

\item {\bf Licenses for existing assets}
    \item[] Question: Are the creators or original owners of assets (e.g., code, data, models), used in the paper, properly credited and are the license and terms of use explicitly mentioned and properly respected?
    \item[] Answer: \answerNA{} % Replace by \answerYes{}, \answerNo{}, or \answerNA{}.
    \item[] Justification: The paper does not use existing assets.
    \item[] Guidelines:
    \begin{itemize}
        \item The answer NA means that the paper does not use existing assets.
        \item The authors should cite the original paper that produced the code package or dataset.
        \item The authors should state which version of the asset is used and, if possible, include a URL.
        \item The name of the license (e.g., CC-BY 4.0) should be included for each asset.
        \item For scraped data from a particular source (e.g., website), the copyright and terms of service of that source should be provided.
        \item If assets are released, the license, copyright information, and terms of use in the package should be provided. For popular datasets, \url{paperswithcode.com/datasets} has curated licenses for some datasets. Their licensing guide can help determine the license of a dataset.
        \item For existing datasets that are re-packaged, both the original license and the license of the derived asset (if it has changed) should be provided.
        \item If this information is not available online, the authors are encouraged to reach out to the asset's creators.
    \end{itemize}

\item {\bf New Assets}
    \item[] Question: Are new assets introduced in the paper well documented and is the documentation provided alongside the assets?
    \item[] Answer: \answerNA{} % Replace by \answerYes{}, \answerNo{}, or \answerNA{}.
    \item[] Justification: The paper does not release new assets.
    \item[] Guidelines:
    \begin{itemize}
        \item The answer NA means that the paper does not release new assets.
        \item Researchers should communicate the details of the dataset/code/model as part of their submissions via structured templates. This includes details about training, license, limitations, etc. 
        \item The paper should discuss whether and how consent was obtained from people whose asset is used.
        \item At submission time, remember to anonymize your assets (if applicable). You can either create an anonymized URL or include an anonymized zip file.
    \end{itemize}

\item {\bf Crowdsourcing and Research with Human Subjects}
    \item[] Question: For crowdsourcing experiments and research with human subjects, does the paper include the full text of instructions given to participants and screenshots, if applicable, as well as details about compensation (if any)? 
    \item[] Answer: \answerNA{} % Replace by \answerYes{}, \answerNo{}, or \answerNA{}.
    \item[] Justification: The paper does not involve crowdsourcing nor research with human subjects.
    \item[] Guidelines:
    \begin{itemize}
        \item The answer NA means that the paper does not involve crowdsourcing nor research with human subjects.
        \item Including this information in the supplemental material is fine, but if the main contribution of the paper involves human subjects, then as much detail as possible should be included in the main paper. 
        \item According to the NeurIPS Code of Ethics, workers involved in data collection, curation, or other labor should be paid at least the minimum wage in the country of the data collector. 
    \end{itemize}

\item {\bf Institutional Review Board (IRB) Approvals or Equivalent for Research with Human Subjects}
    \item[] Question: Does the paper describe potential risks incurred by study participants, whether such risks were disclosed to the subjects, and whether Institutional Review Board (IRB) approvals (or an equivalent approval/review based on the requirements of your country or institution) were obtained?
    \item[] Answer: \answerNA{} % Replace by \answerYes{}, \answerNo{}, or \answerNA{}.
    \item[] Justification:  The paper does not involve crowdsourcing nor research with human subjects.
    \item[] Guidelines:
    \begin{itemize}
        \item The answer NA means that the paper does not involve crowdsourcing nor research with human subjects.
        \item Depending on the country in which research is conducted, IRB approval (or equivalent) may be required for any human subjects research. If you obtained IRB approval, you should clearly state this in the paper. 
        \item We recognize that the procedures for this may vary significantly between institutions and locations, and we expect authors to adhere to the NeurIPS Code of Ethics and the guidelines for their institution. 
        \item For initial submissions, do not include any information that would break anonymity (if applicable), such as the institution conducting the review.
    \end{itemize}

\end{enumerate}

\end{document}